\newcommand{\st}{\:|\:}
\newcommand{\N}{\mathbb{N}}
\newcommand{\Q}{\mathbb{Q}}
\newcommand{\R}{\mathbb{R}}
\renewcommand{\P}{\mathsf{P}}
\newcommand{\FP}{\mathsf{FP}}
\newcommand{\NP}{\ensuremath{\mathsf{NP}}}
\newcommand{\PH}{\ensuremath{\mathsf{PH}}}
\newcommand{\PP}{\ensuremath{\mathsf{PP}}}
\newcommand{\sharpP}{\ensuremath{\mathsf{\# P}}}
\newcommand{\PSPACE}{\ensuremath{\mathsf{PSPACE}}}
\newcommand{\EXPTIME}{\ensuremath{\mathsf{EXPTIME}}}
\newcommand{\hard}{\mathcal{H}}
\newcommand{\soft}{\mathcal{S}}
\newcommand{\improvs}{I}
\newcommand{\valids}{A}
\newcommand{\wref}{w_{\text{ref}}}
\newcommand{\eopt}{\epsilon_\mathrm{opt}}
\newcommand{\cic}[2]{\textsc{CI}\textup{(}#1\textup{,}#2\textup{)}}
\newcommand{\mcic}[2]{\textsc{MCI}\textup{(}#1\textup{,}#2\textup{)}}
\newcommand{\SAT}{SAT}
\newcommand{\sharpSAT}{\#\SAT}
\newcommand{\NSAT}{\#NSAT}
\newcommand{\ucfgint}{\textsc{\#UCFG-Int}}
\newcommand{\bpcp}{\textsc{Bounded-PCP}}
\newcommand{\nbpcp}{\#\bpcp}
\newcommand{\DFA}{\textup{DFA}}
\newcommand{\NFA}{\textup{NFA}}
\newcommand{\CFG}{\textup{CFG}}
\newcommand{\UCFG}{\textup{UCFG}}
\newcommand{\symb}{\textsc{Symb}}
\title{Control Improvisation}
\author{DANIEL J. FREMONT, ALEXANDRE DONZ\'E, and SANJIT A. SESHIA \affil{\\University of California, Berkeley}}
\keywords{uniform random sampling}
\begin{abstract} 
We formalize and analyze a new problem in formal language theory termed \emph{control improvisation}.
Given a specification language, the problem is to produce an \emph{improviser}, a probabilistic algorithm that randomly generates words in the language, subject to two additional constraints: the satisfaction of a quantitative \emph{soft constraint}, and the exhibition of a specified amount of randomness.
Control improvisation has many applications, including for example systematically generating random test vectors satisfying format constraints or preconditions while being similar to a library of seed inputs.
Other applications include robotic surveillance, machine improvisation of music, and randomized variants of the supervisory control problem.
We describe a general framework for solving the control improvisation problem, and use it to give efficient algorithms for several practical classes of instances with finite automaton and context-free grammar specifications.
We also provide a detailed complexity analysis, establishing $\sharpP$-hardness of the problem in many other cases.
For these intractable cases, we show how symbolic techniques based on Boolean satisfiability ({\SAT}) solvers can be used to find approximate solutions.
Finally, we discuss an extension of control improvisation to multiple soft constraints that is useful in some applications.
\end{abstract} 
\begin{document}

\begin{bottomstuff}
This work is supported in part by the National Science Foundation Graduate Research Fellowship Program under Grant No. DGE-1106400, by NSF grants CCF-1139138 and CNS-1646208, and by TerraSwarm, one of six centers of STARnet, a Semiconductor Research Corporation program sponsored by MARCO and DARPA.\smallskip

A preliminary version of this article, coauthored with the late David Wessel, appeared in the \emph{Proceedings of the 35th IARCS Annual Conference on Foundations of Software Technology and Theoretical Computer Science (FSTTCS 2015)}. \smallskip

Authors' addresses: D. J. Fremont, Group in Logic and the Methodology of Science, UC Berkeley, Berkeley, CA 94720, email: dfremont@berkeley.edu; A. Donz\'e, Department of Electrical Engineering and Computer Sciences, UC Berkeley, Berkeley, CA 94720, email: donze@berkeley.edu; S. A. Seshia, Department of Electrical Engineering and Computer Sciences, UC Berkeley, Berkeley, CA 94720, email: sseshia@eecs.berkeley.edu. \smallskip
\end{bottomstuff}

\maketitle

\section{Introduction}

We introduce and formally characterize a new formal language-theoretic problem termed {\em control
improvisation}. Given a specification language or \emph{hard constraint}, the problem is to produce an \emph{improviser}, a probabilistic
algorithm that randomly generates words in the language, subject to two additional
constraints: the generated words should satisfy a quantitative \emph{soft constraint}, and the improviser must exhibit a specified amount of randomness.

The original motivation for this problem came from a topic known as {\em machine improvisation of
music}~\cite{rowe-2001}.  Here, the goal is to create algorithms which can generate variations of a
reference melody like those commonly improvised by human performers, for example in jazz.  Such an
algorithm should have three key properties.  First, the melodies it generates should conform to rhythmic
and melodic constraints typifying the music style (e.g. in jazz, the melodies should follow the
harmonic conventions of that genre). Second, the generated
melodies should be actual variations on the reference melody, neither reproducing it exactly nor being
so different as to be unrecognizable.
Finally, the algorithm should be sufficiently randomized that
running it several times produces a variety of different improvisations.
In previous work~\cite{donze-icmc14}, we identified these
properties in an initial definition of the control improvisation problem, and applied it to the
generation of monophonic (solo) melodies over a given jazz song harmonization\footnote{Examples of
improvised melodies can be found at the following URL:\\
\url{http://www.eecs.berkeley.edu/~donze/impro_page.html}.}.
 
These three properties of a generation algorithm are not specific to music.  Consider
\emph{black-box fuzz testing} \cite{fuzzing-book}, which produces many inputs to a program hoping to
trigger a bug.  Often, constraints are imposed on the generated inputs, e.g. in \emph{generative}
fuzz testing approaches which enforce an appropriate format so that the input is not rejected
immediately by a parser.  Also common are \emph{mutational} approaches which guide the generation
process with a set of real-world seed inputs, generating only inputs which are variations of those
in the set. And of course, fuzzers use randomness to ensure that a variety of inputs are tried.
Thus we see that the inputs generated in fuzz testing have the same general requirements as music
improvisations: satisfying a set of constraints, being appropriately similar/dissimilar to a
reference, and being sufficiently diverse.

We propose control improvisation as a precisely-defined theoretical problem capturing these
requirements, which are common not just to the two examples above but to many other generation
problems.  Potential applications also include home automation mimicking typical occupant behavior (e.g.,
randomized lighting control obeying time-of-day constraints and limits on energy
usage~\cite{akkaya-iotdi}) and randomized variants of the supervisory control problem
\cite{lafortune06}, where a controller keeps the behavior of a system within a safe operating region
(the language of an automaton) while adding diversity to its behavior via randomness.  A typical example
of the latter is surveillance: the path of a patrolling robot should satisfy various constraints
(e.g. not running into obstacles) and be similar to a predefined route, but incorporate some
randomness so that its location is not too predictable \cite{lafortune-personal15}.

Our focus, in this paper, is on the {\em theoretical characterization of
control improvisation}. Specifically, we give a precise theoretical
definition and a rigorous characterization of the complexity of the
control improvisation problem as a function of the type of specification used for the hard and soft constraints. While the problem is distinct from any other we have
encountered in the literature, our methods are closely connected to
prior work on random sampling from the languages of automata and
grammars \cite{hickey-cohen,sharpNFA}, and sampling from
the satisfying assignments of a Boolean formula~\cite{jvv,bgp,unigen}.
Probabilistic programming techniques~\cite{probprog} could be used
for sampling under constraints, but the present methods cannot be used
to construct improvisers meeting our definition.

In summary, this paper makes the following novel contributions:
\begin{itemize}
\item Formal definitions of the notions of control improvisation (CI) and a polynomial-time improvisation scheme (Sec.~\ref{sec:definition}); \smallskip
\item A theoretical characterization of the conditions under which improvisers exist, and a general framework for constructing efficient improvisation schemes (Sec.~\ref{sec:existence}); \smallskip
\item A detailed complexity analysis of several practical classes of CI instances with specifications of different kinds: \smallskip
    \begin{itemize}
    \item \emph{Finite automata}: a polynomial-time improvisation scheme for deterministic automata and a $\sharpP$-hardness result for nondeterministic automata (Sec.~\ref{sec:automata}); \smallskip
    \item \emph{Context-free grammars:} polynomial-time improvisation schemes when one specification is an unambiguous grammar, and a $\sharpP$-hardness result when both specifications are such grammars (Sec.~\ref{sec:grammars}); \smallskip
    \item \emph{Boolean formulas:} a symbolic approach based on Boolean satisfiability (SAT) solving that can be used, for example, with finite-state automata that are too large to represent explicitly (Sec.~\ref{sec:symbolic}); \smallskip
    \end{itemize}
\item An extension of the basic control improvisation definition to allow multiple soft constraints (Sec.~\ref{sec:multiple-soft}).
\end{itemize}

Compared to the earlier version of this paper (\citeN{fsttcs-version}), the CI definition has been elaborated with explicit length bounds and a more general randomness requirement.
The material on generic improvisation schemes, context-free grammar specifications, and multiple soft constraints is entirely new.

\section{Background and Problem Definition} \label{sec:definition}

In this section, we first define notation and provide background on a previous
automata-theoretic method for music improvisation based on a data
structure called the {\em factor oracle}. We then provide a formal
definition of the control improvisation problem while explaining the
choices made in the definition.

\subsection{Notation}

We abbreviate deterministic and nondeterministic finite
automata as DFAs and NFAs respectively.
Similarly we refer to a context-free grammar as a CFG, and an unambiguous context-free grammar as a UCFG.
We write the set $\{1, \dots, k\}$ for $k \in \N^+$ as $[k]$.
For any alphabet $\Sigma$, we denote the length of a finite word $w \in \Sigma^*$ by $|w|$.
We also write $\Sigma^{\le n}$ for $\cup_{0 \le i \le n} \Sigma^i$.
Finally, if $L$ is a language over $\Sigma$, we sometimes denote $\Sigma^* \setminus L$ by $\overline{L}$.

We refer to several standard complexity classes: $\P$ and $\NP$, the class $\sharpP$ of counting versions of $\NP$ problems, its decision analogue $\PP$, the polynomial-time hierarchy $\PH$, and the class $\FP$ of polynomial-time computable functions.
For definitions of these classes and the relationships between them, see \citeN{arora-barak}.

\subsection{Motivation: Factor Oracles}

An effective and practical approach to machine improvisation of music (used for example in the prominent OMax system \cite{omax}) is based on a data structure
called the factor oracle \cite{AssayagD04,Cleophas03constructingfactor}. Given a word
$\wref$ of length $N$ that is a symbolic encoding of a reference melody, a factor oracle $F$ is an
automaton constructed from $\wref$ with the following key properties: $F$ has $N+1$ states, all
accepting, chained linearly with direct transitions labelled with the letters in $\wref$, and with
potentially additional forward and backward transitions. Figure \ref{figure:factor-oracle} depicts
$F$ for $\wref = bbac$.  A word $w$ accepted by $F$ consists of concatenated ``factors'' (subwords) of $\wref$,
and its dissimilarity with $\wref$ is correlated with the number of non-direct transitions. By
assigning a small probability $\alpha$ to non-direct transitions, $F$ becomes a generative Markov
model with tunable ``divergence'' from $\wref$. In order to impose more musical structure on the
generated words, our previous work~\cite{donze-icmc14} additionally requires that improvisations
satisfy rules encoded as deterministic finite automata, by taking the product of the generative
Markov model and the DFAs. While this approach is heuristic and lacks any formal guarantees, it has
the basic elements common to machine improvisation schemes: (i) it involves randomly generating
strings from a formal language, (ii) it includes a quantitative requirement on which strings are admissible based on their divergence from a reference string, and (iii) it enforces diversity in the
generated strings. The definition we propose below captures these elements in a
rigorous theoretical manner, suitable for further analysis.  In Sec.~\ref{sec:automata}, we
revisit the factor oracle, sketching how the notion of divergence from $\wref$ that it represents can be
encoded in our formalism.
 
 {
\setlength{\intextsep}{8pt}
\setlength{\belowcaptionskip}{-5pt}
\setlength{\abovecaptionskip}{0pt}
\begin{figure}[tp]
\centering
\begin{tikzpicture}[initial text=, transform shape, scale=0.8]

 \node[accepting, state, initial] (s0) {$s_0$}; 
 \node[accepting, state, right= of s0] (s1) {$s_1$}; 
 \node[accepting, state, right= of s1] (s2) {$s_2$};
 \node[accepting, state, right= of s2] (s3) {$s_3$}; 
 \node[accepting, state, right= of s3] (s4) {$s_4$};

 \path[->] 
 (s0) edge node [above] {$b$} (s1)
 (s1) edge node [above] {$b$} (s2)    
 (s2) edge node [above] {$a$} (s3)
 (s3) edge node [above] {$c$} (s4) 
 (s0) edge [bend left=40] node [above] {$a$} (s3)
 (s0) edge [bend left=50] node [above] {$c$} (s4)
 (s1) edge [bend left] node [above] {$a$} (s3)
 (s1) edge [bend left] node [above] {$\epsilon$} (s0) 
 (s2) edge [bend left ] node [above] {$\epsilon$} (s1) 
 (s3) edge [bend left] node [above] {$\epsilon$} (s0) 
 (s4) edge [bend left] node [above] {$\epsilon$} (s0); 

\end{tikzpicture}
\caption{Factor oracle constructed from the word $\wref = bbac$.}
\label{figure:factor-oracle}
\end{figure}
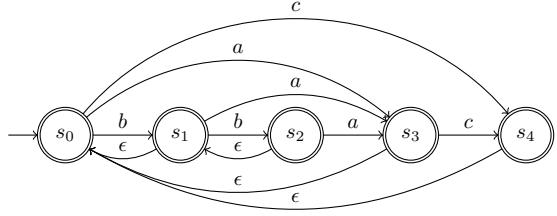
}
 
\subsection{Problem Definition}

A control improvisation problem is defined in terms of two languages specifying the hard and soft constraints.
The complexity of the problem depends on how these languages are represented, and we will consider several representations in this paper.
We use the general term \emph{specification} to refer to the representation of a language.
\begin{definition}
A \emph{specification} $\mathcal{X}$ is a finite representation of a language $L(\mathcal{X})$ over a finite alphabet $\Sigma$.
For example, $\mathcal{X}$ could be a finite automaton or a context-free grammar.
\end{definition}

\begin{definition} \label{defn:improvs}
Fix a \emph{hard specification} $\hard$, a \emph{soft specification} $\soft$, and length bounds $m, n \in \N$.
An \emph{improvisation} is any word $w \in L(\hard)$ such that $m \le |w| \le n$, and we write $\improvs$ for the set of all improvisations.
An improvisation $w \in \improvs$ is \emph{admissible} if $w \in L(\soft)$, and we write $\valids$ for the set of all admissible improvisations.
\end{definition}

Observe that because we impose an upper bound on the length of an improvisation, the sets $\improvs$ and $\valids$ are finite\footnote{Note that the earlier version of this paper (\citeN{fsttcs-version}) did not include a length bound as part of the problem definition, and so some of the complexity results here are different.}. \medskip

\textit{Running Example.}
Our concepts will be illustrated with a simple 
example. Our aim is to produce variations of the binary string $s =
001$ of length 3, subject to the constraint that there cannot be two
consecutive 1s. So we put $\Sigma = \{0,1\}$ and $m = n = 3$, and let 
$\hard$ be a DFA which accepts all strings 
that do not have two 1s in a row. To ensure that our
variations are similar to $s$, we let $\soft$ be a DFA accepting words with Hamming distance at most 1 from $s$. Then the improvisations are the strings
$000$, $001$, $010$, $100$, and $101$, of which $000$, $001$, and
$101$ are admissible. \medskip

Intuitively, an improviser samples from the set of improvisations according to some
distribution. But what requirements must one impose on this distribution?
First, we require our generated improvisation to
be admissible with probability at least $1-\epsilon$ for some specified $\epsilon$. When the soft specification $\soft$ encodes a notion of similarity to a reference string, for example, this allows us to
require that our improvisations usually be similar to the reference.
Note that this requirement can be relaxed by increasing the value of $\epsilon$, justifying the term ``soft specification''.
Other types of quantitative requirements are interesting and worthy of study, but in this paper we focus on the probabilistic constraint above.
Note also that while the soft specification $\soft$ by itself is a property true or false for any word, the soft constraint requirement is \emph{not} a property of individual words: it constrains the \emph{distribution} from which we generate words.
This is an important asymmetry between the hard and soft constraints: in particular, it means that we must consider both the language $L(\soft)$ and its \emph{complement} $\overline{L(\soft)}$, as we may have to generate words from each.

Second, since we want a variety of improvisations, we impose a \emph{randomness} requirement: every improvisation must be generated with a probability within a desired range $[\lambda, \rho]$.
This requirement is designed to accommodate two different needs for randomness.
In fuzz testing, for example, in order to ensure coverage we might want to ensure that every test case can be generated and so put $\lambda > 0$.
By contrast, in music improvisation it is not important that every possible melody can be generated; rather, we simply want no improvisation to arise too frequently.
So in this application we might put $\lambda = 0$ but $\rho < 1$.
Our definition allows any combination of these two cases.
Note that if there are $N$ improvisations, then setting $\lambda$ or $\rho$ equal to $1/N$ forces the distribution to be uniform.
So uniform sampling as used in fuzz testing or constrained random verification is a special case of control improvisation.
We also note that other randomness requirements are possible, for example ensuring variety by imposing some minimum distance between generated improvisations.
This could be reasonable in a setting such as music or robotics where there is a natural metric on the space of improvisations, but we choose to keep our setting general and not assume such a metric.
Combining these requirements,
we obtain our definitions of an acceptable distribution over improvisations and thus of an
improviser:

\begin{definition} \label{defn:feasible}
Given $\mathcal{C} = (\hard, \soft, m, n, \epsilon, \lambda, \rho)$
with $\hard$, $\soft$, $m$, and $n$ as in
Definition~\ref{defn:improvs}, $\epsilon \in [0,1]
\cap \Q$ an error probability, and $\lambda, \rho \in [0,1] \cap \Q$
probability 
bounds, a distribution $D : \Sigma^*
\rightarrow [0,1]$ is an
\emph{improvising distribution} if it satisfies the following requirements:
\begin{describe}{\emph{Hard constraint:}}
\item[\emph{Hard constraint:}] $\Pr [w \in \improvs \st w \leftarrow D ] = 1$ \smallskip
\item[\emph{Soft constraint:}] $\Pr[ w \in \valids \st w \leftarrow D ] \ge 1 - \epsilon$ \smallskip
\item[\emph{Randomness:}] $\forall w \in I$, $\; \lambda \le D(w) \le \rho$
\end{describe}
If there is an improvising distribution, we say that
$\mathcal{C}$ is \emph{feasible}. An 
\emph{improviser} for a feasible $\mathcal{C}$ 
is an expected finite-time
probabilistic algorithm generating strings in $\Sigma^*$
whose output distribution (on empty
input) is an improvising distribution.
\end{definition}

To summarize, if $\mathcal{C}$ is feasible, there
exists a distribution satisfying the requirements in Definition~\ref{defn:feasible}, and an
improviser is a probabilistic algorithm for sampling from one. \\

\textit{Running Example.}
For our running example, $\mathcal{C} = (\hard, \soft,
3, 3, 0, 0, 1/4)$ is not feasible since $\epsilon=0$ 
means we can only generate admissible improvisations, and since there
are only 3 of those we cannot possibly give them all probability at
most $1/4$. Increasing $\rho$ to $1/3$ would make $\mathcal{C}$
feasible. Increasing $\epsilon$ to $1/4$ would also work, allowing us
to return an inadmissible improvisation $1/4$ of the time:
an algorithm uniformly sampling from $\{000, 001, 101, 100\}$ would be
an improviser for $(\hard, \soft, 3, 3, 1/4, 0, 1/4)$.
This would not be an improviser if we put $\lambda > 0$, however, since the improvisation $010$ is generated with probability $0$.

\begin{definition}
Given $\mathcal{C} = (\hard, \soft, m, n, \epsilon, \lambda, \rho)$,
the \emph{control improvisation (CI)} problem is to decide whether
$\mathcal{C}$ is feasible, and if so to generate an improviser for
$\mathcal{C}$.
The \emph{size} $|\mathcal{C}|$ of a CI instance is the total size of the bit representation of its parameters, treating $m$ and $n$ as being represented in unary and all other numerical parameters in binary.
\end{definition}

When more than one CI instance is being discussed, we use subscripts to disambiguate parameters: for example $n_\mathcal{C}$ is the value of $n$ in the instance $\mathcal{C}$.
Similarly, we write $\improvs_\mathcal{C}$ and $\valids_\mathcal{C}$ for the sets of improvisations and admissible improvisations respectively with respect to $\mathcal{C}$.

Since we are interested in the complexity of the CI problem as a function of the type of specification used, we define classes of instances based on those:
\begin{definition}
If $\mathcal{A}$ and $\mathcal{B}$ are classes of specifications, $\cic{\mathcal{A}}{\mathcal{B}}$ is the class of CI instances $\mathcal{C} = (\hard, \soft, m, n, \epsilon, \lambda, \rho)$ where $\hard \in \mathcal{A}$ and $\soft \in \mathcal{B}$.
When discussing decision problems, we use the same notation for the feasibility problem associated with the class (i.e. given $\mathcal{C} \in \cic{\mathcal{A}}{\mathcal{B}}$, decide whether it is feasible).
\end{definition}
For example, $\cic{\UCFG}{\DFA}$ is the class of instances where the hard specification is an unambiguous context-free grammar and the soft specification is a deterministic finite automaton.

Ideally, we would like an efficient algorithm to solve the CI problem. Furthermore,
the improvisers our algorithm produces should themselves be efficient, in the sense
that their runtimes are polynomial in the size of the original CI
instance. This leads to our last definition:
\begin{definition} \label{defn:scheme}
A \emph{polynomial-time improvisation scheme} for a class
$\mathcal{P}$ of CI instances is an algorithm $S$
with the following properties: 
\begin{describe}{\emph{Improviser efficiency:}}
\item[\emph{Correctness:}] For any $\mathcal{C} \in \mathcal{P}$, if $\mathcal{C}$ is
feasible then $S(\mathcal{C})$ is an improviser for $\mathcal{C}$, and
otherwise $S(\mathcal{C}) = \bot$. \smallskip
\item[\emph{Scheme efficiency:}] There is a polynomial $p : \R \rightarrow \R$ such that the runtime of $S$ on any $\mathcal{C} \in \mathcal{P}$ is at most $p(|\mathcal{C}|)$. \smallskip
\item[\emph{Improviser efficiency:}] There is a polynomial $q : \R \rightarrow \R$ such that for every $\mathcal{C} \in \mathcal{P}$, if $G = S(\mathcal{C}) \ne \bot$ then $G$ has expected runtime at most $q(|\mathcal{C}|)$. 
\end{describe}
\end{definition}
The improviser efficiency requirement prevents the scheme from, for example, being polynomial-time only by offloading an exponential search into the generated improviser.
Accordingly, when we say an improvisation scheme is polynomial-time relative to an oracle, the runtimes of both the scheme and its generated improvisers are measured relative to the oracle.

A polynomial-time improvisation scheme for a class of CI instances is
an efficient, uniform way to solve the control improvisation problem
for that class. In subsequent sections, we will identify several classes that have
such improvisation schemes.

\section{Existence of Improvisers, and a Generic Improvisation Scheme} \label{sec:existence}

In this section, we first give necessary and sufficient conditions for an improviser to exist for a given CI instance.
The proof is constructive, and leads to a generic procedure for solving the CI problem using a few abstract operations on specifications.
All of the improvisation schemes given in this paper are derived by instantiating this procedure for particular classes of CI instances.
Finally, we use the procedure to give a general $\sharpP$ upper bound on the complexity of CI for a wide range of specifications.

First, existence of improvisers.
It turns out that the feasibility of an improvisation problem is completely determined by the sizes of $\improvs$ and $\valids$:
\begin{theorem} \label{theorem:feasibility}
For any $\mathcal{C} = (\hard, \soft, m, n, \epsilon, \lambda, \rho)$, the following are equivalent:
\begin{enumerate}
\item $\mathcal{C}$ is feasible.
\item The following inequalities hold\footnote{If $\lambda = 0$, we treat division by zero as yielding $\infty$, so that both the inequalities involving $\lambda$ are trivially satisfied. This makes sense, as $\lambda = 0$ means we do not impose a lower bound on the probabilities of individual improvisations. The remaining inequalities are precisely those given in the corresponding theorem in our earlier paper \cite[Theorem 3.1]{fsttcs-version}, which did not allow a lower bound in the randomness requirement.}:
 \begin{enumerate}
 \item $1/\rho \le |\improvs| \le 1/\lambda$
 \item $(1-\epsilon)/\rho \le |\valids|$
 \item $|\improvs| - |\valids| \le \epsilon/\lambda$
 \end{enumerate}
\item There is an improviser for $\mathcal{C}$.
\end{enumerate}
\end{theorem}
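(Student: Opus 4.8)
The plan is to prove the cycle $(3) \Rightarrow (1) \Rightarrow (2) \Rightarrow (3)$. By Definition~\ref{defn:feasible}, statement $(1)$ is literally the assertion that some improvising distribution $D : \Sigma^* \to [0,1]$ exists, so the implication $(3) \Rightarrow (1)$ is immediate: the output distribution of an improviser is by definition an improvising distribution. It therefore remains to show $(1) \Rightarrow (2)$ (extract the three inequalities from an arbitrary improvising distribution) and $(2) \Rightarrow (3)$ (build an actual improviser out of the inequalities).

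For $(1) \Rightarrow (2)$, I would fix an improvising distribution $D$ and obtain the inequalities by summing the randomness bounds $\lambda \le D(w) \le \rho$ over suitable finite subsets of $\improvs$. The hard constraint forces $D$ to be supported on $\improvs$, so $\sum_{w \in \improvs} D(w) = 1$; bounding each summand above by $\rho$ and below by $\lambda$ gives $1 \le \rho\,|\improvs|$ and $1 \ge \lambda\,|\improvs|$, i.e.\ (a). The soft constraint gives $1 - \epsilon \le \sum_{w \in \valids} D(w) \le \rho\,|\valids|$, i.e.\ (b). Finally $\sum_{w \in \improvs \setminus \valids} D(w) = 1 - \Pr[w \in \valids] \le \epsilon$, and bounding that sum below by $\lambda\,|\improvs \setminus \valids| = \lambda(|\improvs| - |\valids|)$ gives (c). The $\lambda = 0$ convention of the footnote makes the two $\lambda$-inequalities vacuous, consistent with there being no lower bound on probabilities in that case.

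The substantive direction is $(2) \Rightarrow (3)$, which I would prove by exhibiting a simple two-valued improvising distribution and then a sampling algorithm for it. Write $N = |\improvs|$ and $k = |\valids|$ (note $N \ge 1$ when (2) holds, by (a)), and look for $D$ assigning probability $a$ to each word of $\valids$ and $b$ to each word of $\improvs \setminus \valids$, with $b \le a$ so that the randomness requirement collapses to $\lambda \le b$ and $a \le \rho$. The degenerate cases are easy: if $k = N$ put $a = 1/N$, valid by (a); if $k = 0$ then (b) forces $\epsilon = 1$ and we put $b = 1/N$, again valid by (a). For $0 < k < N$, normalization forces $a = (1 - (N-k)b)/k > 0$, and the remaining constraints $\lambda \le b$, $a \le \rho$, $b \le a$, and $ka \ge 1-\epsilon$ rearrange to
\[
\max\!\left(\lambda,\ \frac{1-k\rho}{N-k}\right)\ \le\ b\ \le\ \min\!\left(\frac{1}{N},\ \frac{\epsilon}{N-k}\right).
\]
The heart of the argument is the routine but essential verification that this interval is nonempty exactly when (a), (b), (c) all hold: its lower endpoint is at most its upper endpoint iff the four inequalities $\lambda N \le 1$, $\lambda(N-k) \le \epsilon$, $N\rho \ge 1$, and $k\rho \ge 1-\epsilon$ hold, which are respectively the upper bound of (a), inequality (c), the lower bound of (a), and inequality (b). Since $\lambda,\rho,\epsilon$ are rational and $N,k$ integers, a nonempty such interval contains a rational value of $b$ (an endpoint, say), yielding a rational $a$ and hence a genuine improvising distribution; this already gives $(2) \Rightarrow (1)$, and together with $(1) \Rightarrow (2)$ establishes $(1) \Leftrightarrow (2)$.

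To close the cycle at $(3)$, I would turn this distribution into an expected–finite-time algorithm. It first computes the finite sets $\improvs$ and $\valids$ by enumerating $\Sigma^{\le n}$ and testing membership in $L(\hard)$ and $L(\soft)$ (decidable for every specification class considered here), then computes $N$, $k$, $a$, $b$ as above, and finally samples: with probability $ka$ it returns a uniformly random element of $\valids$, and otherwise a uniformly random element of $\improvs \setminus \valids$. A rational-biased coin and a uniform draw from a finite set are each realizable in expected finite time by rejection sampling on fair coin flips, so the whole procedure runs in expected finite time with the two-valued distribution as its output distribution, giving $(3)$. I expect the only real obstacle to be the bookkeeping in the previous paragraph — choosing the right shape of distribution (two values, biased toward $\valids$) and matching the interval-nonemptiness conditions to (a)–(c); everything else is bookkeeping of a more routine kind.
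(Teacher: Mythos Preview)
Your proposal is correct and follows essentially the same approach as the paper: both construct a two-valued improvising distribution, uniform on $\valids$ and on $\improvs \setminus \valids$ separately. The paper fixes the weight on $\improvs\setminus\valids$ to be $\eopt = \max(1-\rho|\valids|,\,\lambda|\improvs\setminus\valids|)$ and verifies directly that the resulting distribution meets all three requirements; your parametrization by $b$ and check that the interval $[\max(\lambda,(1-k\rho)/(N-k)),\,\min(1/N,\epsilon/(N-k))]$ is nonempty is the same construction viewed from the other end --- indeed the paper's choice is exactly your left endpoint. The paper's presentation has the small bonus that $\eopt$ is later reused (it is the optimal error probability, as recorded in the subsequent corollary), but mathematically there is no real difference.

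One minor point on your $(2)\Rightarrow(3)$: the theorem is stated for arbitrary specifications, and the definition of ``specification'' does not require decidable membership, so your algorithm that enumerates $\Sigma^{\le n}$ and tests membership is not quite general enough as written. The paper avoids this by observing that since $D$ has finite, explicitly known support $\improvs$ with rational probabilities, an expected-finite-time sampler exists with those sets and probabilities hard-coded; no membership oracle is needed for the mere existence claim in (3). This is a cosmetic fix to your argument rather than a genuine gap.
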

\begin{proof}
\begin{describe}{(1)$\Rightarrow$(2):}
\item[(1)$\Rightarrow$(2):] Suppose $D$ is an improvising distribution.
Then $\rho |\improvs| = \sum_{w \in \improvs} \rho \ge \sum_{w \in \improvs} D(w) = \Pr[w \in \improvs \st w \leftarrow D] = 1$, so $|\improvs| \ge 1/\rho$.
Similarly, $\lambda |\improvs| = \sum_{w \in \improvs} \lambda \le \sum_{w \in \improvs} D(w) = \Pr[w \in \improvs \st w \leftarrow D] = 1$, so $|\improvs| \le 1/\lambda$.
Since $A \subseteq I$, we also have $\rho |\valids| = \sum_{w \in \valids} \rho \ge \sum_{w \in \valids} D(w) = \Pr[ w \in \valids \st w \leftarrow D] \ge 1 - \epsilon$; therefore $|\valids| \ge (1-\epsilon)/\rho$.
Finally, we have $\lambda |\improvs \setminus \valids| = \sum_{w \in \improvs \setminus \valids} \lambda \le \sum_{w \in \improvs \setminus \valids} D(w) = \Pr[w \in \improvs \setminus \valids \st w \leftarrow D] = \Pr[w \in \improvs \st w \leftarrow D] - \Pr[w \in \valids \st w \leftarrow D] \le 1 - (1 - \epsilon) = \epsilon$, so $|\improvs| - |\valids| \le \epsilon/\lambda$.

\item[(2)$\Rightarrow$(3):] Putting $\eopt = \max ( 1 - \rho |\valids|, \lambda |\improvs \setminus \valids| )$, since $|\improvs \setminus \valids| \le |\improvs| \le 1/\lambda$ we have $0 \le \eopt \le 1$.
Let $D$ be the distribution on $\improvs$ which picks uniformly from $\valids$ with probability $1 - \eopt$ and otherwise picks uniformly from $\improvs \setminus \valids$.
Note that this distribution is well-defined, since if $\valids = \emptyset$ then $\eopt = 1$, and if $\improvs \setminus \valids = \emptyset$ then $\rho |\valids| = \rho |\improvs| \ge 1$ and so $\eopt = 0$.
Clearly, $D$ satisfies the hard constraint requirement.

Now if $\eopt = 1 - \rho |\valids|$ we have $\eopt \le 1 - \rho \cdot (1-\epsilon)/\rho = \epsilon$.
Otherwise, $\eopt = \lambda |\improvs \setminus \valids| \le \lambda (\epsilon / \lambda) = \epsilon$.
So in either case we have $\Pr[ w \in \valids \st w \leftarrow D] = 1 - \eopt \ge 1 - \epsilon$, and $D$ satisfies the soft constraint requirement.

For any $w \in \valids$, we have $D(w) = (1 - \eopt) / |\valids| \le (1 - (1 - \rho |\valids|)) / |\valids| = \rho$.
If $\eopt = 1 - \rho |\valids|$, then $D(w) = \rho \ge \lambda$; otherwise $\eopt = \lambda |\improvs \setminus \valids|$, so $D(w) = (1 - \lambda |\improvs \setminus \valids|) / |\valids| = (1 - \lambda |\improvs| + \lambda |\valids|) / |\valids| \ge (1 - 1 + \lambda |\valids|) / |\valids| = \lambda$.
Thus $D(w) \ge \lambda$ in either case.
Similarly, for any $w \in \improvs \setminus \valids$ we have $D(w) = \eopt / |\improvs \setminus \valids| \ge \lambda |\improvs \setminus \valids| / |\improvs \setminus \valids| = \lambda$.
If $\eopt = 1 - \rho |\valids|$, then $D(w) = (1 - \rho |\valids|) / |\improvs \setminus \valids| = (1 - \rho |\valids|) / (|\improvs| - |\valids|) \le (1 - \rho |\valids|) / ((1/\rho) - |\valids|) = \rho$; otherwise $\eopt = \lambda |\improvs \setminus \valids|$, so $D(w) = (\lambda |\improvs \setminus \valids|) / |\improvs \setminus \valids| = \lambda \le \rho$.
Therefore for any $w \in \improvs$ we always have $\lambda \le D(w) \le \rho$, and thus $D$ satisfies the randomness requirement.

This shows that $D$ is an improvising distribution.
Since it has finite support and rational probabilities, there is an expected finite-time probabilistic algorithm sampling from it, and this algorithm is an improviser for $\mathcal{C}$.

\item[(3)$\Rightarrow$(1):] Immediate. \qed
\end{describe} 
\end{proof}

\begin{remark}
In fact, whenever $\mathcal{C}$ is feasible, the construction in the proof of Theorem \ref{theorem:feasibility} gives an improviser which works in nearly the most trivial possible way: it has two finite lists $S$ and $T$, flips a (biased) coin to decide which list to use, and then returns an element of that list uniformly at random.
There can of course be other improvising distributions assigning a variety of probabilities between $\lambda$ and $\rho$, but one of this simple form is always guaranteed to exist.
\end{remark}

The intuition behind the improviser constructed in Theorem \ref{theorem:feasibility} is as follows: attempt to assign the minimum allowed probability $\lambda$ to the elements of $\improvs \setminus \valids$, spreading the rest uniformly over $\valids$.
This may assign a probability to the elements of $\valids$ that is greater than the maximum allowed probability $\rho$.
In that case, we clamp the probabilities of the admissible improvisations at $\rho$, and spread the remainder uniformly over $\improvs \setminus \valids$.
This intuition suggests that the error probability $\eopt$ achieved by this improviser is optimal, since we attempt to give $\valids$ as high a probability as we can.
In fact, this is the case (justifying the notation $\eopt$): the smallest error probability compatible with the hard constraint and randomness requirements is $\eopt$.

\begin{corollary}
$\mathcal{C}$ is feasible if and only if $1/\rho \le |\improvs| \le 1/\lambda$ and $\epsilon \ge \eopt$, where $\eopt = \max(1 - \rho |\valids|, \lambda |\improvs \setminus \valids|)$.
\end{corollary}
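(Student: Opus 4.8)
The plan is to obtain this as a direct corollary of Theorem~\ref{theorem:feasibility} by algebraically rearranging its conditions~(2b) and (2c) into a single lower bound on $\epsilon$. By that theorem, $\mathcal{C}$ is feasible if and only if inequalities (2a)--(2c) all hold, and since (2a) is literally the condition ``$1/\rho \le |\improvs| \le 1/\lambda$'' appearing in the corollary, it suffices to show that, in the presence of (2a), the conjunction of (2b) and (2c) is equivalent to $\epsilon \ge \eopt$.

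For (2b): whenever (2a) holds we have $\rho > 0$ (as $1/\rho \le |\improvs| < \infty$), so multiplying through by $\rho$ gives that (2b) is equivalent to $1 - \epsilon \le \rho|\valids|$, i.e. $\epsilon \ge 1 - \rho|\valids|$. For (2c): if $\lambda > 0$, multiplying ``$|\improvs| - |\valids| \le \epsilon/\lambda$'' by $\lambda$ shows (2c) is equivalent to $\epsilon \ge \lambda(|\improvs| - |\valids|) = \lambda|\improvs\setminus\valids|$, using $\valids \subseteq \improvs$; if $\lambda = 0$, then by the division-by-zero convention (2c) is trivially satisfied, and $\lambda|\improvs\setminus\valids| = 0 \le \epsilon$ is trivially satisfied as well, so the equivalence persists in this degenerate case too. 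Putting these together, (2b) and (2c) hold simultaneously exactly when $\epsilon \ge \max(1 - \rho|\valids|,\, \lambda|\improvs\setminus\valids|) = \eopt$, which with (2a) yields the claimed characterization.

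There is no real obstacle here beyond bookkeeping: the only care needed is in handling the $\lambda = 0$ (and symmetrically the $\rho = 0$) boundary cases so that the stated division-by-zero conventions make the rearrangements valid, and in noting that the first conjunct of the corollary is verbatim condition~(2a). Everything substantive was already done in the proof of Theorem~\ref{theorem:feasibility}; in particular, the fact that $\eopt$ is the error probability achieved by the explicit improviser constructed there is what makes the notation $\eopt$ and the phrase ``smallest error probability'' in the surrounding discussion meaningful, but the corollary itself needs only the equivalence of the inequalities.
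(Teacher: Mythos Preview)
Your proposal is correct and follows exactly the approach the paper takes: the paper's proof simply says the result is ``immediate from Theorem~\ref{theorem:feasibility}, noting that $\epsilon \ge \eopt$ if and only if inequalities (2b) and (2c) in the Theorem hold,'' and you have spelled out precisely that equivalence, including the boundary cases $\lambda = 0$ and $\rho = 0$ that the paper leaves implicit.
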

\begin{proof}
Immediate from Theorem \ref{theorem:feasibility}, noting that $\epsilon \ge \eopt$ if and only if inequalities (2b) and (2c) in the Theorem hold.
\end{proof}

Beyond just giving conditions for feasibility, the proof of Theorem \ref{theorem:feasibility} suggests a generic procedure for solving the CI problem.
It uses a few basic operations on specifications:
\begin{describe}{\emph{Uniform Sampling:}}
\item[\emph{Intersection:}] Given two specifications $\mathcal{X}$ and $\mathcal{Y}$, compute a specification $\mathcal{Z}$ such that $L(\mathcal{Z}) = L(\mathcal{X}) \cap L(\mathcal{Y})$. \smallskip
\item[\emph{Difference:}] Given two specifications $\mathcal{X}$ and $\mathcal{Y}$, compute a specification $\mathcal{Z}$ such that $L(\mathcal{Z}) = L(\mathcal{X}) \setminus L(\mathcal{Y})$. \smallskip
\item[\emph{Length Restriction:}] Given a specification $\mathcal{X}$ and $m,n \in \N$ in unary, compute a specification $\mathcal{Y}$ such that $L(\mathcal{Y}) = \{ w \in L(\mathcal{X}) \st m \le |w| \le n \}$. \smallskip
\item[\emph{Counting:}] Given a specification $\mathcal{X}$ and a bound $n \in \N$ in unary on the length of strings in $L(\mathcal{X})$ (which is therefore finite), compute $|L(\mathcal{X})|$. \smallskip
\item[\emph{Uniform Sampling:}] Given a specification $\mathcal{X}$ and a bound $n \in \N$ in unary on the length of strings in $L(\mathcal{X})$, sample uniformly at random from $L(\mathcal{X})$.
\end{describe}

If these operations can be implemented efficiently for a particular type of specification, then we can efficiently solve the corresponding CI problems:
\begin{theorem} \label{theorem:generic-scheme}
Suppose \textsc{Spec} is a class of specifications that supports the operations above.
Suppose further that the operations can be done in polynomial time (expected time in the case of uniform sampling).
Then there is a polynomial-time improvisation scheme for $\cic{\textsc{Spec}}{\textsc{Spec}}$.
\end{theorem}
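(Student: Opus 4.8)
The plan is to instantiate, operation by operation, the construction from the proof of Theorem~\ref{theorem:feasibility}. Given $\mathcal{C} = (\hard, \soft, m, n, \epsilon, \lambda, \rho)$ with $\hard, \soft \in \textsc{Spec}$, the scheme first builds specifications for the three relevant finite sets. Applying Length Restriction to $\hard$ with bounds $m, n$ yields $\mathcal{X}_\improvs$ with $L(\mathcal{X}_\improvs) = \improvs$; applying Intersection to $\mathcal{X}_\improvs$ and $\soft$ yields $\mathcal{X}_\valids$ with $L(\mathcal{X}_\valids) = \improvs \cap L(\soft) = \valids$; and applying Difference to $\mathcal{X}_\improvs$ and $\soft$ yields $\mathcal{X}_{\improvs \setminus \valids}$ with $L(\mathcal{X}_{\improvs \setminus \valids}) = \improvs \setminus L(\soft) = \improvs \setminus \valids$. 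Every string in these languages has length at most $n$, so Counting applies and delivers $|\improvs|$ and $|\valids|$, hence also $|\improvs \setminus \valids| = |\improvs| - |\valids|$. The scheme then tests the three inequalities of Theorem~\ref{theorem:feasibility}(2), with the convention that division by $\lambda = 0$ yields $\infty$; if any fails, it outputs $\bot$.

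Otherwise $\mathcal{C}$ is feasible, and the scheme computes the rational $\eopt = \max(1 - \rho |\valids|, \lambda |\improvs \setminus \valids|)$ and outputs the improviser $G$ that, on empty input, draws a biased coin landing heads with probability $1 - \eopt$ and then returns a uniformly random element of $\valids$ (via Uniform Sampling on $\mathcal{X}_\valids$) if heads, or a uniformly random element of $\improvs \setminus \valids$ (via Uniform Sampling on $\mathcal{X}_{\improvs \setminus \valids}$) if tails. The degenerate cases identified in the proof of Theorem~\ref{theorem:feasibility} guarantee $G$ never calls Uniform Sampling on an empty language: if $\valids = \emptyset$ then $\eopt = 1$ and the heads branch is never taken, while if $\improvs \setminus \valids = \emptyset$ then $\eopt = 0$ and the tails branch is never taken. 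By construction $G$'s output distribution is exactly the improvising distribution exhibited in that proof, so $G$ is an improviser for $\mathcal{C}$; this establishes correctness.

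For scheme efficiency, each of the four deterministic operations runs in polynomial time and so returns a specification of polynomial size, whence the fixed-length chain that builds $\mathcal{X}_\improvs$, $\mathcal{X}_\valids$, $\mathcal{X}_{\improvs \setminus \valids}$ and counts them remains polynomial. Crucially, because $m$ and $n$ are encoded in unary, the counts $|\improvs|, |\valids|, |\improvs \setminus \valids| \le |\Sigma|^{n+1}$ have bit-length polynomial in $|\mathcal{C}|$, so $\eopt$ and $1 - \eopt$ are rationals of polynomial bit-size, and the feasibility comparisons and the arithmetic defining $\eopt$ take polynomial time. For improviser efficiency, $G$ only has to simulate a coin of fixed rational bias $1 - \eopt = a/b$ --- which it does in expected $O(1)$ rounds by rejection-sampling a uniform integer in $\{0, \dots, b-1\}$ from $\ceil{\log_2 b}$ random bits and comparing it to $a$ --- followed by a single Uniform Sampling call, which by hypothesis runs in expected polynomial time; hence $G$ has expected polynomial runtime.

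The construction is essentially forced by Theorem~\ref{theorem:feasibility}, so the real work is bookkeeping. The main point to be careful about is the chain of polynomial-size claims: these rely on the unary encoding of $m$ and $n$ (otherwise $|\improvs|$ alone could require exponentially many bits) together with the fact that a polynomial-time operation produces a polynomial-size output, so that composing a constant number of them stays polynomial. The only other subtleties are respecting the $\lambda = 0$ convention in the feasibility test and the empty-language corner cases in the sampler, both already handled by Theorem~\ref{theorem:feasibility}.
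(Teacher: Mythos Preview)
Your proposal is correct and follows essentially the same approach as the paper's proof: build specifications for $\improvs$, $\valids$, and $\improvs \setminus \valids$ via the given operations, count to test the feasibility inequalities of Theorem~\ref{theorem:feasibility}, and if feasible output the improviser that flips a $(1-\eopt)$-biased coin and uniformly samples from the appropriate set. The only cosmetic difference is that you obtain the specification for $\improvs \setminus \valids$ by applying Difference to $\mathcal{X}_\improvs$ and $\soft$ directly, whereas the paper applies it to $\mathcal{X}_\improvs$ and $\mathcal{X}_\valids$; since $\improvs \setminus L(\soft) = \improvs \setminus \valids$, this is immaterial.
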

\begin{proof}
Given a problem $\mathcal{C} \in \cic{\textsc{Spec}}{\textsc{Spec}}$ with $\mathcal{C} = (\hard, \soft, m, n, \epsilon, \lambda, \rho)$, the scheme works as follows.
First, we construct representations of all the sets we need.
Applying length restriction to $\hard$, we obtain a specification $\mathcal{\improvs}$ such that $L(\mathcal{\improvs}) = \{ w \in L(\hard) \st m \le |w| \le n \} = \improvs$.
Applying intersection to $\mathcal{\improvs}$ and $\soft$, we get a specification $\mathcal{\valids}$ such that $L(\mathcal{\valids}) = L(\mathcal{\improvs}) \cap L(\soft) = \improvs \cap L(\soft) = \valids$.
Finally, applying difference to $\mathcal{\improvs}$ and $\mathcal{\valids}$ gives a specification $\mathcal{B}$ such that $L(\mathcal{B}) = \improvs \setminus \valids$.

Next, applying counting to $\mathcal{\improvs}$ and $\mathcal{\valids}$ (with bound $n$, since all words in $\improvs$ have length at most $n$), we compute $|\improvs|$ and $|\valids|$.
We can then check whether the inequalities in Theorem \ref{theorem:feasibility} are satisfied.
If not, then $\mathcal{C}$ is not feasible and we return $\bot$.
Otherwise, $\mathcal{C}$ is feasible and we will build an improviser sampling from the same distribution constructed in the proof of Theorem \ref{theorem:feasibility}.
As there, let $\eopt = \max(1 - \rho |\valids|, \lambda |\improvs \setminus \valids|)$ (easily computed since we know $|\improvs|$ and $|\valids|$, and $|\improvs \setminus \valids| = |\improvs| - |\valids|$), and let $D$ be the distribution on $\improvs$ that with probability $1 - \eopt$ picks uniformly from $\valids$ and otherwise picks uniformly from $\improvs \setminus \valids$.
Since the inequalities in Theorem \ref{theorem:feasibility} are true, its proof shows that $D$ is an improvising distribution for $\mathcal{C}$.

We can easily build a probabilistic algorithm $G$ sampling from $D$: it simply flips a coin, applying uniform sampling to $\mathcal{\valids}$ with probability $1 - \eopt$ and otherwise applying uniform sampling to $\mathcal{B}$.
Since $L(\mathcal{\valids}) = \valids$ and $L(\mathcal{B}) = \improvs \setminus \valids$, the output distribution of $G$ is $D$ and so $G$ is an improviser for $\mathcal{C}$.

This procedure is clearly correct.
Since the intersection, difference, and length restriction operations take polynomial time, the constructed specifications $\mathcal{\improvs}$, $\mathcal{\valids}$, and $\mathcal{B}$ have sizes polynomial in $|\hard|$, $|\soft|$, $m$, and $n$, and thus in $|\mathcal{C}|$.
So the subsequent counting and sampling operations performed on these specifications will also be polynomial in $|\mathcal{C}|$.
In particular, the computed values of $|\improvs|$ and $|\valids|$ have polynomial bitwidth, so the same is true of $\eopt$, and the arithmetic performed by the procedure also takes time polynomial in $|\mathcal{C}|$.
Therefore in total the procedure runs in polynomial time.
The improvisers generated by the procedure run in expected polynomial time, since the bitwidth of $\eopt$ is polynomial and uniform sampling takes expected polynomial time.
So the procedure is a polynomial-time improvisation scheme.
\end{proof}

\textit{Running Example.}
Recall that for our running example $\mathcal{C} = (\hard, \soft, 3, 3, 1/4, 0, 1/4)$, we have $\improvs = \{000, 001, 010, 100, 101\}$ and $\valids = \{000, 001, 101\}$.
In Sec.~\ref{sec:automata} we will show that all the operations needed by Theorem \ref{theorem:generic-scheme} can be performed efficiently for DFAs.
Applying length restriction, intersection, and difference, we obtain DFAs $\mathcal{\improvs}$, $\mathcal{\valids}$, and $\mathcal{B}$ such that $L(\mathcal{\improvs}) = \improvs$, $L(\mathcal{\valids}) = \valids$, and $L(\mathcal{B}) = \improvs \setminus \valids = \{ 010, 100 \}$.
Applying counting we find that $|\improvs| = 5$ and $|\valids| = 3$, so the inequalities in Theorem \ref{theorem:feasibility} are satisfied.
Next we compute $\eopt = \max(1 - \rho |\valids|, \lambda |\improvs \setminus \valids|) = \max(1/4, 0) = 1/4$.
Finally, we return an improviser $G$ that samples uniformly from $L(\mathcal{\valids})$ with probability $1 - \eopt = 3/4$ and from $L(\mathcal{B})$ with probability $\eopt = 1/4$.
So $G$ returns $000$, $001$, and $101$ with probability $3/4 \cdot 1/3 = 1/4$ each, and $010$ and $100$ with probability $1/4 \cdot 1/2 = 1/8$ each.
This distribution satisfies our conditions, so it is indeed an improviser for $\mathcal{C}$. \medskip

Note that since we use the operations on specifications as black boxes, the construction above relativizes: if the operations can be done in polynomial time relative to an oracle $\mathcal{O}$, then the Theorem yields an improvisation scheme that is polynomial-time relative to $\mathcal{O}$ (where both the scheme and the generated improvisers may query $\mathcal{O}$).
This allows us to upper bound the complexity of CI problems under very weak assumptions on the type of specifications.

\begin{theorem} \label{theorem:sharpP-bound}
Suppose $\textsc{Spec}$ is a class of specifications such that membership in the language of a specification can be decided in polynomial time relative to a $\PH$ oracle.
Then $\cic{\textsc{Spec}}{\textsc{Spec}}$ has an improvisation scheme that is polynomial-time relative to a $\sharpP$ oracle.
\end{theorem}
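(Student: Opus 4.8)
The plan is to apply Theorem~\ref{theorem:generic-scheme} in its relativized form, the obstacle being that the class $\textsc{Spec}$ need not be closed under intersection, difference, or length restriction. To get around this I would first enlarge $\textsc{Spec}$ to a class $\textsc{Spec}^{+}$ of \emph{expression specifications}: an element of $\textsc{Spec}^{+}$ is a term built from finitely many base specifications of $\textsc{Spec}$ using formal operators for intersection, difference, and length restriction, with $L(\cdot)$ defined on such terms in the obvious recursive way. For $\textsc{Spec}^{+}$ the intersection, difference, and length-restriction operations are immediate — one simply forms a larger term — and run in polynomial time with no oracle. The one thing to verify is that membership in $L(\mathcal{X})$ for $\mathcal{X}\in\textsc{Spec}^{+}$ is still decidable in polynomial time relative to a $\PH$ oracle. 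This is clear: since membership for base specifications is decidable in polynomial time relative to a $\PH$ oracle, the base membership language lies in $\Sigma_j^p$ for some constant $j$ depending only on $\textsc{Spec}$, and membership in a $\textsc{Spec}^{+}$ term is a constant-depth Boolean combination of such $\Sigma_j^p$ predicates together with a few polynomial-time length comparisons, which is again in $\PH$.

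It remains to realize counting and uniform sampling for $\textsc{Spec}^{+}$ in (expected) polynomial time relative to a $\sharpP$ oracle. For counting, note that if $\mathcal{X}\in\textsc{Spec}^{+}$ and $n$ is a unary bound on the length of strings in $L(\mathcal{X})$, then $|L(\mathcal{X})| = \#\{\, w \in \Sigma^{\le n} : w \in L(\mathcal{X}) \,\}$ counts strings of polynomial length satisfying a $\PH$ predicate. Counting witnesses of a $\PH$ predicate over a polynomially bounded domain can be done in $\FP$ relative to a $\sharpP$ oracle — this follows from Toda's theorem $\PH \subseteq \P^{\sharpP}$ together with a binary search to recover exact counts — so $|L(\mathcal{X})|$ is computable in polynomial time given the $\sharpP$ oracle. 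For uniform sampling I would use the standard prefix-refinement procedure. Maintaining a current prefix $u$ (initially empty), query the oracle for $c_{\mathrm{stop}} := [\, u \in L(\mathcal{X}) \,] \in \{0,1\}$ and, for each $\sigma \in \Sigma$, for $c_\sigma := \#\{\, w \in L(\mathcal{X}) : w \text{ has prefix } u\sigma,\ |w| \le n \,\}$; each of these is again a count of witnesses of a $\PH$ predicate, hence available through the $\sharpP$ oracle. With probability proportional to $c_{\mathrm{stop}}$, halt and output $u$; with probability proportional to $c_\sigma$, set $u := u\sigma$ and repeat. This terminates in at most $n$ steps (once $|u| = n$ only halting is possible), performs $O(n|\Sigma|)$ oracle queries, and at each step makes a random choice among at most $|\Sigma| + 1$ alternatives whose probabilities are rationals of polynomial bitwidth, which takes expected polynomial time; so the procedure runs in expected polynomial time relative to the $\sharpP$ oracle and produces exactly the uniform distribution on $L(\mathcal{X})$. (If $L(\mathcal{X}) = \emptyset$ this operation is never invoked, since the scheme of Theorem~\ref{theorem:generic-scheme} calls uniform sampling only on sets its feasibility check has already shown to be nonempty.)

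With all five operations implemented in polynomial time — expected polynomial time for sampling — relative to a $\sharpP$ oracle, the relativized version of Theorem~\ref{theorem:generic-scheme} gives a polynomial-time improvisation scheme, relative to $\sharpP$, for $\cic{\textsc{Spec}^{+}}{\textsc{Spec}^{+}}$. Every base specification is a (trivial) $\textsc{Spec}^{+}$ expression, so $\cic{\textsc{Spec}}{\textsc{Spec}} \subseteq \cic{\textsc{Spec}^{+}}{\textsc{Spec}^{+}}$ and the same scheme serves for $\cic{\textsc{Spec}}{\textsc{Spec}}$. The crux of the argument is the counting step: the point is that counting solutions of a $\PH$ predicate does not require going above $\sharpP$ (Toda's theorem), which is what allows intersection, difference, length restriction, counting, \emph{and} uniform sampling all to be carried out with the single oracle $\sharpP$; the remaining ingredients — closure of $\PH$ under the syntactic operations, and the correctness of prefix-based uniform sampling — are routine.
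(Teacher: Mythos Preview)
Your proof is correct and follows the same high-level strategy as the paper: reduce the CI operations to counting and sampling witnesses of a $\PH$ predicate, and then use Toda's theorem to do that with only a $\sharpP$ oracle. The packaging differs slightly. You formally enlarge $\textsc{Spec}$ to a class $\textsc{Spec}^{+}$ closed under intersection, difference, and length restriction, so that Theorem~\ref{theorem:generic-scheme} applies literally; the paper explicitly considers this route but calls it ``somewhat tricky since we need to keep track of the runtimes of the specifications'' and instead verifies directly that the three specific sets $I$, $A$, and $I\setminus A$ arising in the generic scheme admit $\sharpP$-oracle counting and sampling. Your $\textsc{Spec}^{+}$ device handles the runtime issue cleanly because the term size bounds the total size of its base specifications, and in the scheme only a constant number of operations are ever applied (which is presumably what you mean by ``constant-depth''). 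The other minor difference is that you spell out sampling via prefix refinement driven by counting queries, whereas the paper cites the relativized Jerrum--Valiant--Vazirani and Bellare--Goldreich--Petrank algorithms; both are valid, and yours has the virtue of being self-contained. In both proofs the essential content is the chain $\sharpP^{\PH}\subseteq\FP^{\PP^{\PH}}\subseteq\FP^{\P^{\PP}}=\FP^{\PP}\subseteq\FP^{\sharpP}$ obtained from Toda's theorem.
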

Without the $\PH$ oracle, this essentially follows from the fact that if specifications are decision algorithms with uniformly polynomial runtimes which test membership in their respective languages, then counting and sampling can be done with a $\sharpP$ oracle.
However, instantiating the general operations used by Theorem \ref{theorem:generic-scheme} is somewhat tricky since we need to keep track of the runtimes of the specifications.
So instead we just verify that the particular counting and sampling queries needed by the construction in Theorem \ref{theorem:generic-scheme} can be implemented with a $\sharpP$ oracle.
This whole argument relativizes to the $\PH$ oracle, which can then be eliminated via Toda's theorem.
\begin{proof}[of Theorem \ref{theorem:sharpP-bound}]
By assumption there is a polynomial-time algorithm $M^{\PH}(w, \mathcal{X})$ that decides whether $w \in L(\mathcal{X})$ for any specification $\mathcal{X} \in \textsc{Spec}$.
Then $\improvs_\mathcal{C} = \{ w \in L(\hard_\mathcal{C}) \st m_\mathcal{C} \le |w| \le n_\mathcal{C} \} = \{ w \in \Sigma^* \st m_\mathcal{C} \le |w| \le n_\mathcal{C} \land M^{\PH}(w, \hard_\mathcal{C}) = 1 \}$, so whether $w \in \improvs_\mathcal{C}$ is decidable in time polynomial in $|w|$ and $|\mathcal{C}|$ relative to a $\PH$ oracle.
Similarly, $\valids_\mathcal{C} = \improvs_\mathcal{C} \cap L(\soft_\mathcal{C}) = \improvs_\mathcal{C} \cap \{ w \in \Sigma^* \st M^{\PH}(w, \soft_\mathcal{C}) = 1\}$, so whether $w \in \valids_\mathcal{C}$ is decidable in time polynomial in $|w|$ and $|\mathcal{C}|$ relative to a $\PH$ oracle.
Since any $w \in \improvs_\mathcal{C}$ has length at most $n_\mathcal{C}$, and in particular polynomial in $|\mathcal{C}|$, this shows that the relations $R_\improvs = \{ (\mathcal{C}, w) \st w \in \improvs_\mathcal{C} \}$, $R_\valids = \{ (\mathcal{C}, w) \st w \in \valids_\mathcal{C} \}$, and $R_{\improvs \setminus \valids} = \{ (\mathcal{C}, w) \st w \in \improvs_\mathcal{C} \setminus \valids_\mathcal{C} \}$ are $\NP^{\PH}$-relations\footnote{In the terminology of \citeN{bgp}, relativized to the $\PH$ oracle; these relations (unrelativized) are called $p$-relations by \citeN{jvv}.}.
Therefore computing $|\improvs_\mathcal{C}| = |\{ w \in \Sigma^* \st (\mathcal{C}, w) \in R_\improvs \}|$ is a $\sharpP^{\PH}$ problem, and likewise for $|\valids_\mathcal{C}|$.
Furthermore, sampling uniformly from $\valids_\mathcal{C}$ and $\improvs_\mathcal{C} \setminus \valids_\mathcal{C}$ is equivalent to sampling uniformly from the witnesses of $\mathcal{C}$ under the relations $R_\valids$ and $R_{\improvs \setminus \valids}$ respectively.
This can be done in polynomial expected time relative to a $\sharpP^{\PH}$ oracle (relativizing the algorithms of \citeN{jvv} or \citeN{bgp}).
So the construction in Theorem \ref{theorem:generic-scheme} yields an improvisation scheme for $\cic{\textsc{Spec}}{\textsc{Spec}}$ that is polynomial-time relative to a $\sharpP^{\PH}$ oracle.

To remove the $\PH$ oracle, note that $\PP^\PH \subseteq \P^\PP$ by the stronger form of Toda's theorem \cite{toda}.
Since $\sharpP^\PH \subseteq \FP^{\PP^\PH}$ by a standard binary search argument, we have $\sharpP^\PH \subseteq \FP^{\PP^\PH} \subseteq \FP^{\P^\PP} = \FP^\PP \subseteq \FP^\sharpP$.
So the polynomial-time improvisation scheme (and the improvisers it generates) can simulate the $\sharpP^\PH$ oracle using only a $\sharpP$ oracle.
\end{proof}

This general argument establishes that for a wide range of specifications (i.e. those which do not require resources beyond $\PH$ to evaluate), the complexity of the CI problem is between polynomial-time and $\sharpP$.
In the next few sections, we will pin down the complexity of several natural classes of CI problems as being at one end or the other of this range.

\section{Automata Specifications} \label{sec:automata}

In this section we consider specifications that are finite automata.
For deterministic automata, we give a polynomial-time improvisation scheme, but show that for nondeterministic automata the CI problem becomes $\sharpP$-hard.

DFAs are perhaps the simplest type of specification.
They capture the notion of
a \emph{finite-memory} specification, where membership of
a word can be determined by scanning the word left-to-right, only
being able to remember a finite number of already-seen symbols. An
example of a finite-memory specification $\soft$ is one such that
$w \in L(\soft)$ iff each subword of $w$
of a fixed constant length satisfies some condition.

\begin{example}[Factor Oracles]
Recall that one way of measuring the divergence of an improvisation
$w$ generated by the factor oracle $F$ built from a word $\wref$ is by
counting the number of non-direct transitions that $w$ causes $F$ to
take. Since DFAs cannot do unbounded counting, we can use a sliding window of some
finite size $k$. Then our soft specification $\soft$ can be that
at any point as $F$ processes $w$, the number of the previous $k$
transitions which were non-direct lies in some interval $[\ell,h]$. This predicate can be encoded as a DFA
of size $O(|F| \cdot 2^k)$ as follows: we have a copy of $F$, denoted $F_s$, for every string $s \in \{0,1\}^k$, each bit of $s$ indicating whether the corresponding previous transition (out of the last $k$) was non-direct.
As each new symbol is processed, we execute the current copy of $F$ as usual, but move to the appropriate state of the copy of $F$ corresponding to the new $k$-transition history, i.e., if we were in $F_s$, we move to $F_t$ where $t$ consists of the last $k-1$ bits of $s$ followed by a 0 if the transition we took was direct and a 1 otherwise.
Making the states of $F_s$ accepting iff the number of 1s in $s$ is in $[\ell,h]$, this automaton represents $\alpha$ as desired.
The size of
the automaton grows exponentially in the size of the window, but for
small windows it can be reasonable.
\end{example}

When both the hard and soft specifications are DFAs, we can instantiate the procedure of Theorem \ref{theorem:generic-scheme} to get a polynomial-time improvisation scheme.
To do this, we use a classical method for counting and uniformly sampling from the language of a DFA (see for example \citeN{hickey-cohen}).
The next two lemmas summarize the results we need, proofs being given in the Appendix for completeness.

\begin{lemma} \label{lemma:dfa-counting}
If $\mathcal{D}$ is a DFA, $|L(\mathcal{D})|$ can be computed in polynomial time.
\end{lemma}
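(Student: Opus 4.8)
The plan is to reduce counting $|L(\mathcal{D})|$ to counting paths in a directed graph, exploiting the fact that a DFA is deterministic. Write $\mathcal{D} = (Q, \Sigma, \delta, q_0, F)$ with $\delta$ a (possibly partial) transition function. The key observation is that reading a word $w = a_1 \cdots a_\ell$ from $q_0$ traces out a unique state sequence $q_0, \delta(q_0,a_1), \delta(\delta(q_0,a_1),a_2), \ldots$ (when all the transitions involved are defined), and conversely any path $q_0 \xrightarrow{a_1} q_1 \xrightarrow{a_2} \cdots \xrightarrow{a_\ell} q_\ell$ in the labeled transition graph is the trace of exactly one word, namely $a_1 \cdots a_\ell$. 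Hence $w \mapsto (\text{its trace})$ is a bijection between $L(\mathcal{D})$ and the set of labeled paths in the transition graph that start at $q_0$ and end in $F$, so it suffices to count those paths.

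To do this I would first \emph{trim} $\mathcal{D}$: by two graph searches (forward from $q_0$, backward from $F$) delete in polynomial time every state that is either unreachable from $q_0$ or cannot reach $F$, together with their incident transitions; this changes neither $L(\mathcal{D})$ nor the set of relevant paths. If $q_0$ is deleted then $L(\mathcal{D}) = \emptyset$ and we return $0$. Otherwise, check whether the underlying digraph of the trimmed automaton contains a directed cycle (again polynomial time): if it does, then since every remaining state lies on some $q_0$-to-$F$ path, there are accepted words of unbounded length and $L(\mathcal{D})$ is infinite, which we report. (In the uses of this lemma within Theorem~\ref{theorem:generic-scheme}, counting is always applied after a length restriction, so $L(\mathcal{D})$ is guaranteed finite; alternatively, given a unary length bound one restricts to words of that length using the same dynamic program indexed additionally by remaining length.) In the remaining case the trimmed automaton is a DAG, so we process its states in reverse topological order, computing $P(q)$, the number of words accepted starting from $q$, via $P(q) = [\,q \in F\,] + \sum_{a \in \Sigma : \delta(q,a)\text{ defined}} P(\delta(q,a))$, where the $[q \in F]$ term accounts for the empty word and determinism guarantees the summands count disjoint sets of words. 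Then $|L(\mathcal{D})| = P(q_0)$.

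Finally I would check the running time. Since the longest path in a DAG on $|Q|$ vertices uses fewer than $|Q|$ edges, every $P(q)$ is at most $|\Sigma|^{|Q|}$ and hence has bit-length $O(|Q| \log |\Sigma|)$, polynomial in $|\mathcal{D}|$; as there are $|Q|$ such values, each obtained by at most $|\Sigma|$ additions of polynomially-sized integers, the whole computation is polynomial-time. The computation is essentially classical, so there is no serious obstacle; the only points that need a little care are (i) trimming before cycle detection, so that a cycle genuinely certifies infiniteness rather than merely the presence of a useless loop, and (ii) observing that the intermediate counts, though possibly exponential in magnitude, have polynomially many bits, so the arithmetic stays polynomial-time.
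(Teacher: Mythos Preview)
Your proof is correct and follows essentially the same approach as the paper: trim unreachable and non-coaccessible states, detect cycles to handle the infinite case, and then count paths in the resulting DAG by a reverse-topological recurrence. The only cosmetic difference is that the paper adds a dummy sink below each accepting state and counts paths to sinks, whereas you absorb that into the $[q \in F]$ indicator term; your explicit bit-complexity check is a welcome addition the paper omits.
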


\begin{lemma} \label{lemma:dfa-sampling}
There is a probabilistic algorithm that given a DFA $\mathcal{D}$ with finite language, returns a uniform random sample from $L(\mathcal{D})$ in polynomial expected time.
\end{lemma}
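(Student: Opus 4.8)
The plan is to use the textbook dynamic program for counting and uniformly sampling words in a DFA. Write $\mathcal{D} = (Q, \Sigma, \delta, q_0, F)$ and extend $\delta$ to a function $\hat\delta : Q \times \Sigma^* \to Q$ on words in the usual way. The first observation is that, since $L(\mathcal{D})$ is finite, every accepted word has length at most $|Q| - 1$: an accepting run on a word of length $\ge |Q|$ repeats a state, and the resulting cycle can be pumped to yield infinitely many accepted words. So I set $n := |Q| - 1$ and may assume $L(\mathcal{D}) \neq \emptyset$ (the empty case is degenerate and detected below). The core preprocessing step is to compute, for every $q \in Q$ and every $\ell \in \{0,\dots,n\}$, the integer $N_q(\ell) := |\{ w \in \Sigma^\ell \st \hat\delta(q,w) \in F \}|$, using the recurrence $N_q(0) = [q \in F]$ (one if $q \in F$, else zero) and $N_q(\ell) = \sum_{a \in \Sigma} N_{\delta(q,a)}(\ell - 1)$. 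The table has $|Q|(n+1)$ entries, each a sum of $|\Sigma|$ previously computed entries, and every entry satisfies $N_q(\ell) \le |\Sigma|^\ell \le |\Sigma|^n$ and so has $O(n \log|\Sigma|)$ bits; hence the whole table is computed in time polynomial in $|\mathcal{D}|$. (Incidentally this also yields Lemma~\ref{lemma:dfa-counting}, since $|L(\mathcal{D})| = \sum_{\ell=0}^{n} N_{q_0}(\ell)$, infiniteness being flagged when some word of length $\ge |Q|$ is accepted.)

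Given the table, I would sample in two stages. First draw a length $\ell \in \{0,\dots,n\}$ with probability $N_{q_0}(\ell)/|L(\mathcal{D})|$; this is a distribution on polynomially many outcomes whose probabilities are rationals of polynomial bitwidth, so a sample can be produced in expected polynomial time by the standard bit-by-bit procedure. Then build the word from left to right: maintain a current state $q$ (initially $q_0$) and a remaining length $r$ (initially $\ell$); while $r > 0$, choose the next letter $a \in \Sigma$ with probability $N_{\delta(q,a)}(r-1)/N_q(r)$, append it to the output, and update $q := \delta(q,a)$ and $r := r - 1$. These conditional distributions are well defined: by induction any state $q$ reached with positive probability at remaining length $r$ has $N_q(r) > 0$, whence $\sum_{a \in \Sigma} N_{\delta(q,a)}(r-1) = N_q(r) > 0$, so the letter probabilities sum to $1$.

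For correctness I would prove, by induction on $r$, that starting from a state $q$ with $r$ letters to go the procedure outputs each $w \in \Sigma^r$ with $\hat\delta(q,w) \in F$ with probability exactly $1/N_q(r)$: writing $w = a_1 \cdots a_r$ and $q_i := \hat\delta(q, a_1\cdots a_i)$, the probability of producing $w$ is the telescoping product $\prod_{i=1}^{r} N_{q_i}(r-i)/N_{q_{i-1}}(r-i+1) = N_{q_r}(0)/N_q(r) = [q_r \in F]/N_q(r)$, which equals $1/N_q(r)$ precisely because $w$ is accepted. Multiplying by the length-selection probability, every $w \in L(\mathcal{D})$ of length $\ell$ is output with probability $(N_{q_0}(\ell)/|L(\mathcal{D})|) \cdot (1/N_{q_0}(\ell)) = 1/|L(\mathcal{D})|$, so the output is uniform on $L(\mathcal{D})$. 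For the time bound, the preprocessing is polynomial, the length is drawn in expected polynomial time, and each of the at most $n$ letter choices is again a draw from a rational distribution of polynomial bitwidth; hence the algorithm runs in expected polynomial time.

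None of this presents a genuine obstacle — it is a classical construction — so the work is mostly careful bookkeeping: verifying that the integers produced by the dynamic program have polynomially many bits (so the arithmetic really is efficient), and being precise that the running time is \emph{expected} rather than worst-case polynomial, since sampling from a non-uniform rational distribution admits no deterministic time bound. The degenerate cases ($L(\mathcal{D}) = \emptyset$ for this lemma, an infinite language for Lemma~\ref{lemma:dfa-counting}) just need to be noted and handled separately.
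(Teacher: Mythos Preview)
Your proof is correct, but it takes a different (also classical) route from the paper's. The paper exploits the finiteness of $L(\mathcal{D})$ structurally: after pruning states that are unreachable or cannot reach an accepting state, the remaining automaton is a DAG. It then hangs a fresh sink below each accepting state so that accepted words correspond bijectively to root-to-sink paths, computes for each vertex $v$ the single number $p_v$ of paths from $v$ to a sink (a one-dimensional table), and samples by a weighted random walk that takes edge $(u,v)$ with probability $p_v/p_u$. You instead stratify by length: you bound word length by $|Q|-1$ via pumping, fill a two-dimensional table $N_q(\ell)$, draw a length first, and then draw letters conditionally. Your version is a bit heavier (a $|Q|\times|Q|$ table rather than a $|Q|$-vector) but does not depend on the pruned automaton being acyclic, so it works unchanged under any explicit length bound --- in fact it is essentially the scheme the paper itself uses for UCFGs in Lemma~\ref{lemma:ucfg-sampling}.
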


Using these techniques, we have the following:
\begin{theorem} \label{theorem:dfa-scheme}
There is a polynomial-time improvisation scheme for \cic{\DFA}{\DFA}.
\end{theorem}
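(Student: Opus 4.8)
The plan is to invoke Theorem~\ref{theorem:generic-scheme}: it suffices to show that the class \DFA\ supports the five operations used there --- intersection, difference, length restriction, counting, and uniform sampling --- in (expected) polynomial time, since then the generic scheme applies directly to \cic{\DFA}{\DFA}\ and gives exactly the claimed polynomial-time improvisation scheme.

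First I would dispatch the three language operations using classical automata constructions, checking that sizes stay polynomial. For \emph{intersection}, given DFAs $\mathcal{D}_1,\mathcal{D}_2$ we take the standard product automaton, whose states are pairs of states and which accepts iff both components accept; this has size polynomial in $|\mathcal{D}_1|+|\mathcal{D}_2|$ and recognizes $L(\mathcal{D}_1)\cap L(\mathcal{D}_2)$. For \emph{difference}, we use $L(\mathcal{D}_1)\setminus L(\mathcal{D}_2) = L(\mathcal{D}_1)\cap\overline{L(\mathcal{D}_2)}$: first complete $\mathcal{D}_2$ by adding a dead state (still polynomial size), complement it by swapping accepting and non-accepting states, then apply the intersection construction. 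For \emph{length restriction}, given $\mathcal{D}$ and $m\le n$ in unary, we build a DFA $\mathcal{L}_{m,n}$ with $O(n)$ states that tracks the input length up to $n+1$ and accepts precisely the strings of length in $[m,n]$; its product with $\mathcal{D}$ recognizes $\{w\in L(\mathcal{D})\st m\le|w|\le n\}$. Each construction runs in time polynomial in its input sizes, and the resulting specifications are DFAs, so the class is closed under the operations.

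Next, \emph{counting} and \emph{uniform sampling} are exactly Lemmas~\ref{lemma:dfa-counting} and~\ref{lemma:dfa-sampling}: the former computes $|L(\mathcal{D})|$ in polynomial time, and the latter produces a uniform sample from $L(\mathcal{D})$ in expected polynomial time whenever the language is finite --- which holds for the specifications $\mathcal{\improvs}$, $\mathcal{\valids}$, and $\mathcal{B}$ to which the scheme applies them, since those arise after length restriction. With all five operations available in (expected) polynomial time, Theorem~\ref{theorem:generic-scheme} immediately yields a polynomial-time improvisation scheme for \cic{\DFA}{\DFA}.

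The only point needing care --- rather than a real obstacle --- is the length-restriction step: the bound $n$ enters $\mathcal{L}_{m,n}$ linearly, so the argument works precisely because $|\mathcal{C}|$ counts $m$ and $n$ in unary; with a binary encoding this automaton would be exponentially large. The product and complement constructions, and the bitwidth bookkeeping for $\eopt$ and the coin flip, are routine and are in any case already subsumed by the proof of Theorem~\ref{theorem:generic-scheme}.
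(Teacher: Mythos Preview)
Your proposal is correct and follows essentially the same approach as the paper: both invoke Theorem~\ref{theorem:generic-scheme} and verify the five operations for DFAs via the standard product, complement-then-product, and length-counter constructions together with Lemmas~\ref{lemma:dfa-counting} and~\ref{lemma:dfa-sampling}. Your added remarks about finiteness after length restriction and the role of the unary encoding of $n$ are accurate and helpful, though the paper leaves them implicit.
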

\begin{proof}
We instantiate the five operations needed by Theorem \ref{theorem:generic-scheme}.
\begin{describe}{\emph{Uniform Sampling:}}
\item[\emph{Intersection:}] Given two DFAs $\mathcal{X}$ and $\mathcal{Y}$, we can construct a DFA $\mathcal{Z}$ such that $L(\mathcal{Z}) = L(\mathcal{X}) \cap L(\mathcal{Y})$ with the standard product construction.
The time needed to do this and the size of $\mathcal{Z}$ are both $O(|\mathcal{X}| |\mathcal{Y}|)$. \smallskip
\item[\emph{Difference:}] Given two DFAs $\mathcal{X}$ and $\mathcal{Y}$, we can construct a DFA $\mathcal{Z}$ such that $L(\mathcal{Z}) = L(\mathcal{X}) \setminus L(\mathcal{Y})$ by complementing $\mathcal{Y}$ with the standard construction, and then taking the product with $\mathcal{X}$.
The time required and resulting automaton size are polynomial, as for intersection. \smallskip
\item[\emph{Length Restriction:}] Given a DFA $\mathcal{X}$ and $m,n \in \N$ in unary, we can construct a DFA $\mathcal{R}$ of size $O(n)$ such that $L(\mathcal{R}) = \{ w \in \Sigma^* \st m \le |w| \le n \}$ (the states track how many symbols have been read so far, up to $n+1$).
Then letting $\mathcal{Y}$ be the product of $\mathcal{X}$ and $\mathcal{R}$, we have $L(\mathcal{Y}) = \{ w \in L(\mathcal{X}) \st m \le |w| \le n \}$ as desired.
Polynomial runtime and automaton size follow from the same properties for the product operation. \smallskip
\item[\emph{Counting:}] Lemma \ref{lemma:dfa-counting}. \smallskip
\item[\emph{Uniform Sampling:}] Lemma \ref{lemma:dfa-sampling}.
\end{describe}
Since we can perform all of these operations in polynomial time, Theorem \ref{theorem:generic-scheme} yields a polynomial-time improvisation scheme.
\end{proof}

So we can efficiently solve CI problems with specifications given by deterministic finite automata.
However, if we have \emph{nondeterministic} automata, this technique breaks down because there is no longer a one-to-one correspondence between accepting paths in the automaton and words in its language.
Indeed, counting the language of an NFA is known to be hard, and this translates into hardness of CI.
\begin{theorem} \label{theorem:nfa-hardness}
\cic{\NFA}{\DFA} and \cic{\DFA}{\NFA} are $\sharpP$-hard.
\end{theorem}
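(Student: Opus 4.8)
The plan is to prove $\sharpP$-hardness by giving polynomial-time Turing reductions from $\sharpSAT$. The key observation is that, by Theorem~\ref{theorem:feasibility}, a single feasibility query is exactly a conjunction of threshold tests on the three quantities $|\improvs|$, $|\valids|$, and $|\improvs\setminus\valids|$; by choosing the CI parameters carefully we can arrange that a feasibility query answers a question of the form ``is $\#\phi \le t$?'' for a value $t$ of our choosing, and then a binary search over $t$ recovers $\#\phi$ exactly using $O(k)$ queries, where $k$ is the number of variables of the input formula $\phi$. The gadget common to both reductions is an NFA $\mathcal{N}_\phi$ over $\{0,1\}$ recognizing the length-$k$ strings encoding \emph{falsifying} assignments of a CNF formula $\phi = C_1 \wedge \cdots \wedge C_r$ on $x_1,\dots,x_k$: reading an assignment left to right, $\mathcal{N}_\phi$ first nondeterministically guesses an index $i\in[r]$ and then deterministically checks that the assignment falsifies $C_i$ (each variable occurring in $C_i$ is forced to the value making its literal false, the others unconstrained), accepting only after exactly $k$ symbols. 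This automaton has $O(rk)$ states and is constructible in polynomial time, and $L(\mathcal{N}_\phi)$ is precisely the set of falsifying assignments, so $|L(\mathcal{N}_\phi)| = 2^k - \#\phi$. Note that $\mathcal{N}_\phi$ is genuinely nondeterministic (the guess of $i$), which is exactly why the analogous construction fails for DFAs.

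For $\cic{\DFA}{\NFA}$: take $\hard$ to be the one-state DFA accepting $\Sigma^*$ and set $m=n=k$, so that $\improvs = \Sigma^k$ and $|\improvs| = 2^k$; take $\soft = \mathcal{N}_\phi$, so $\valids$ is the set of falsifying assignments and $|\improvs\setminus\valids| = \#\phi$. Put $\lambda = 0$ (which, by the footnote convention of Theorem~\ref{theorem:feasibility}, trivializes the inequalities involving $\lambda$) and $\rho = 2^{-k}$, which makes inequality (2a) hold with equality on the lower side. Then (2b) reads $(1-\epsilon)2^k \le 2^k - \#\phi$, i.e.\ $\#\phi \le \epsilon\, 2^k$, so the instance is feasible if and only if $\epsilon \ge \#\phi/2^k$. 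Querying feasibility with $\epsilon = j/2^k$ and binary-searching for the least $j$ making it feasible yields $\#\phi = j$; this is $O(k)$ feasibility queries, each on an instance whose parameters ($m,n$ in unary, $\epsilon,\lambda,\rho$ in binary) have size polynomial in $|\phi|$.

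For $\cic{\NFA}{\DFA}$: dually, take $\hard = \mathcal{N}_\phi$ and $m=n=k$, so $\improvs = L(\mathcal{N}_\phi)$ has size $M := 2^k - \#\phi$, and let $\soft$ be the one-state DFA accepting $\Sigma^*$, so that $\valids = \improvs$ and $\improvs\setminus\valids = \emptyset$. With $\lambda = 0$ and $\epsilon = 0$, inequalities (2b) and (2c) become vacuous (the first collapses to the lower bound of (2a), the second to $0 \le \infty$), so feasibility is equivalent to $1/\rho \le M$. Querying with $\rho = 1/j$ and binary-searching for the largest feasible $j$ yields $M$, hence $\#\phi = 2^k - M$; again $O(k)$ queries suffice, and the degenerate case $M = 0$ (i.e.\ $\phi$ a tautology) simply makes the instance infeasible for every $j \ge 1$, correctly reporting $\#\phi = 2^k$.

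I expect the only real subtlety to be bookkeeping: verifying that for the chosen parameters exactly the intended inequality of Theorem~\ref{theorem:feasibility} is ``active'' while the rest hold automatically — in particular handling the $\lambda = 0$ division-by-zero convention and the degenerate cases $\valids = \emptyset$ or $\improvs = \emptyset$ — and confirming that all parameters produced by the reduction have bitwidth polynomial in $|\phi|$, so that the whole procedure runs in polynomial time. Since $\sharpP$-hardness of a decision problem only requires that $\sharpSAT \in \FP$ relative to an oracle for it, the $O(k)$-query binary search is entirely adequate and no single-query reduction is needed; Theorem~\ref{theorem:nfa-hardness} then follows. (The same two reductions also show hardness of the full CI problem, since deciding feasibility is a special case.)
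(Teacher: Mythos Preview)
Your proof is correct and follows essentially the same approach as the paper: use Theorem~\ref{theorem:feasibility} to turn each feasibility query into a threshold test on $|\improvs|$ or $|\valids|$, then binary-search to recover the exact count. The only difference is the source problem---the paper reduces from the $\sharpP$-complete problem of counting $|L(\mathcal{M})\cap\Sigma^m|$ for an NFA $\mathcal{M}$ (citing \citeN{sharpNFA}), whereas you reduce from $\sharpSAT$ directly by building the falsifying-assignment NFA $\mathcal{N}_\phi$, effectively inlining that result; your treatment of $\cic{\DFA}{\NFA}$ via the $\epsilon$ parameter is also spelled out explicitly, while the paper only handles $\cic{\NFA}{\DFA}$ in detail and says the other case is similar.
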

\begin{proof}
We prove this for \cic{\NFA}{\DFA} --- the other case is similar. As shown by \citeN{sharpNFA}, the problem of determining $| L(\mathcal{M}) \cap \Sigma^m |$ given an NFA $\mathcal{M}$ over an alphabet $\Sigma$ and $m \in \N$ in unary is $\sharpP$-complete. We give a polynomial-time (Cook) reduction from this problem to checking feasibility of a CI instance in \cic{\NFA}{\DFA}.

For any NFA $\mathcal{M}$ and positive $N \in \N$, consider the \cic{\NFA}{\DFA} instance $\mathcal{C}_N = (\mathcal{M}, \mathcal{T}, m, m, 0, 0, 1/N)$ where $\mathcal{T}$ is the trivial DFA accepting all of $\Sigma^*$.
Clearly for this instance we have $\improvs = \valids = L(\mathcal{M}) \cap \Sigma^m$.
By Theorem \ref{theorem:feasibility}, $\mathcal{C}_N$ is feasible iff $|L(\mathcal{M}) \cap \Sigma^m| \ge 1 / (1/N) = N$, and so we can determine whether the latter is the case with a feasibility query for a \cic{\NFA}{\DFA} instance.
Since $|L(\mathcal{M}) \cap \Sigma^m| \le |\Sigma|^m$, using binary search we can find the exact value of $|L(\mathcal{M}) \cap \Sigma^m|$ with polynomially-many such queries (recalling that $m$ is given in unary).
\end{proof}

This result indicates that in general, the control improvisation
problem is probably intractable in the presence of NFAs or more complex automata such as probabilistic automata\footnote{The earlier version of this paper (\citeN{fsttcs-version}) did not impose a length bound as part of the CI problem, so that the problem was actually undecidable for probabilistic automata. Under the current definition, the problem is solvable with a $\sharpP$ oracle by Theorem \ref{theorem:sharpP-bound}.} \cite{rabin-pfas}.

\section{Context-Free Grammar Specifications} \label{sec:grammars}

Next we consider specifications that are context-free grammars.
These are useful, for example, in the fuzz testing application where we want to generate tests conforming to a file format.
Unfortunately, CFGs are more concise than NFAs, and so our hardness result for the latter immediately rules out a polynomial-time improvisation scheme.
\begin{theorem} \label{theorem:cfg-hardness}
\cic{\CFG}{\DFA} and \cic{\DFA}{\CFG} are $\sharpP$-hard.
\end{theorem}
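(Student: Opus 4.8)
The plan is to adapt the proof of Theorem~\ref{theorem:nfa-hardness} almost verbatim, replacing the NFA by an equivalent CFG. The starting point is the same $\sharpP$-complete problem: given an NFA $\mathcal{M}$ over $\Sigma$ and $m \in \N$ in unary, compute $|L(\mathcal{M}) \cap \Sigma^m|$ (\citeN{sharpNFA}). The key observation is that every NFA can be converted in polynomial time into a CFG generating the same language --- e.g.\ the right-linear grammar with one nonterminal $A_q$ per state $q$, a production $A_q \to a\, A_{q'}$ for each transition $q \xrightarrow{a} q'$ (with the obvious treatment of $\epsilon$-transitions), a production $A_q \to \epsilon$ for each accepting state $q$, and start symbol $A_{q_0}$. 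Consequently, computing $|L(G) \cap \Sigma^m|$ for a CFG $G$ is $\sharpP$-hard.

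For \cic{\CFG}{\DFA}: given a CFG $G$, $m$ in unary, and $N \in \N^+$, form the instance $\mathcal{C}_N = (G, \mathcal{T}, m, m, 0, 0, 1/N)$ with $\mathcal{T}$ the trivial DFA accepting $\Sigma^*$. Then $\improvs_{\mathcal{C}_N} = \valids_{\mathcal{C}_N} = L(G) \cap \Sigma^m$, and by Theorem~\ref{theorem:feasibility} the instance is feasible iff $|L(G) \cap \Sigma^m| \ge N$. Since $|L(G) \cap \Sigma^m| \le |\Sigma|^m$ and $m$ is given in unary, binary search over $N \in \{1, \dots, |\Sigma|^m\}$ pins down the exact count with polynomially many feasibility queries, yielding a Cook reduction and hence $\sharpP$-hardness.

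For \cic{\DFA}{\CFG}: swap the roles, putting the trivial DFA as the hard specification and $G$ as the soft one, i.e.\ $\mathcal{C}_N = (\mathcal{T}, G, m, m, 0, 0, 1/N)$. Now $\improvs_{\mathcal{C}_N} = \Sigma^m$ and $\valids_{\mathcal{C}_N} = L(G) \cap \Sigma^m$. With $\epsilon = \lambda = 0$ the two inequalities of Theorem~\ref{theorem:feasibility} involving $\lambda$ are trivially satisfied, and the remaining two read $N \le |\Sigma|^m$ and $N \le |\valids_{\mathcal{C}_N}|$; restricting the binary search to $N \le |\Sigma|^m$, feasibility is again equivalent to $|L(G) \cap \Sigma^m| \ge N$, and the same argument goes through.

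I do not expect a genuine obstacle beyond what was already handled for NFAs; the only step needing a moment's care is the \cic{\DFA}{\CFG} direction, where the hard/soft asymmetry of the CI definition means the count now enters only through the soft-constraint inequality $(1-\epsilon)/\rho \le |\valids|$ --- the choice $\epsilon = \lambda = 0$, $\rho = 1/N$ is exactly what makes this inequality equivalent to the desired threshold predicate. (One could instead invoke the conciseness of CFGs relative to NFAs, as the surrounding text suggests, but the direct reduction above is cleaner and reuses machinery already in place.)
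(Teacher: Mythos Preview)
Your proposal is correct and follows essentially the same approach as the paper: both rely on the polynomial-time conversion of an NFA to an equivalent (right-linear) CFG to transfer the $\sharpP$-hardness established in Theorem~\ref{theorem:nfa-hardness}. The paper compresses this into a single sentence by composing the reductions as $\text{NFA counting} \to \cic{\NFA}{\cdot} \to \cic{\CFG}{\cdot}$, whereas you spell out the intermediate CFG-counting step and redo the feasibility reduction explicitly (and helpfully fill in the $\cic{\DFA}{\CFG}$ case that the paper leaves to ``the other case is similar''), but the underlying argument is the same.
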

\begin{proof}
Follows from Theorem \ref{theorem:nfa-hardness}, since an NFA can be converted to an equivalent CFG in polynomial time.
\end{proof}

However, CFGs used in practice are often designed to be \emph{unambiguous}, enabling faster parsing.
For UCFGs there are polynomial-time algorithms for counting and uniform sampling \cite{hickey-cohen,mckenzie}, giving hope for a polynomial-time improvisation scheme.
A complication is that UCFGs are not closed under intersection, and indeed we will see below that when both hard and soft specifications are UCFGs and intersecting them is required, the CI problem is $\sharpP$-hard.
But when one specification is a UCFG and the other is a DFA, there is a polynomial-time improvisation scheme.

To use the procedure of Theorem \ref{theorem:generic-scheme} in this case, we need to be able to intersect a UCFG $G$ and a DFA $D$.
By a classical result of \citeN{bar-hillel}, their intersection is a context-free language, and we can compute a CFG $H$ such that $L(H) = L(G) \cap L(D)$ in polynomial time.
However, in order to then sample from this intersection, we need $H$ to be unambiguous.
As noted by \citeN{ginsburg-ullian}, this is actually accomplished by the construction of \citeN{bar-hillel}.
However their presentation does not explicitly demonstrate this fact, so for completeness we present a proof.
We also modify the algorithm in several ways to improve its complexity (this is important in practice since the construction, while polynomial, has relatively high degree).

\begin{lemma} \label{lemma:regular-intersection}
Given a UCFG $G$ and a DFA $D = (Q, \Sigma, \delta, q_0, F)$ over a common alphabet $\Sigma$, there is an algorithm which computes a UCFG $H$ such that $L(H) = L(G) \cap L(D)$ in $O(|G| |D|^3)$ time.
\end{lemma}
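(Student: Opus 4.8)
The plan is to implement the classical Bar-Hillel triple-product construction, but with two modifications: (i) present it in a way that makes the unambiguity of the output grammar transparent, and (ii) restructure the rule generation so that the running time is $O(|G|\,|D|^3)$ rather than the naive higher-degree bound. I would first put $G$ into a normal form in which every production is either $A \to BC$, $A \to a$, or $S \to \epsilon$ (a Chomsky-like normal form), noting that this transformation preserves unambiguity and blows up $|G|$ only by a constant factor. The nonterminals of $H$ will be triples $[p, A, q]$ with $p, q \in Q$ and $A$ a nonterminal of $G$, with the intended invariant that $[p,A,q] \Rightarrow^* w$ in $H$ if and only if $A \Rightarrow^* w$ in $G$ \emph{and} $\delta(p, w) = q$ in $D$. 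The start symbol of $H$ is $[q_0, S, q_f]$ ``summed over'' accepting $q_f$ — concretely, add a fresh start symbol $S_H$ with productions $S_H \to [q_0, S, q_f]$ for each $q_f \in F$ (and $S_H \to \epsilon$ if $\epsilon \in L(G)\cap L(D)$).

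The productions of $H$ are generated as follows. For each production $A \to a$ of $G$ and each state $p \in Q$, include $[p, A, \delta(p,a)] \to a$. For each production $A \to BC$ of $G$, each $p \in Q$, and each $r \in Q$, include $[p, A, r] \to [p, B, q]\,[q, C, r]$ \emph{for every} $q \in Q$. The correctness of the invariant is a routine induction on derivation length in both directions, using determinism of $D$ so that $\delta(p,w)$ is well-defined and the split point of $w = w_1 w_2$ forces the middle state $q = \delta(p, w_1)$ uniquely. The key step — and the one I expect to be the main obstacle to write cleanly — is the unambiguity argument: I need to show that any two derivations of a word $w$ from $[q_0, S, q_f]$ in $H$ are identical. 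The argument is that such a derivation projects to a derivation of $w$ from $S$ in $G$ (by dropping the state annotations), and by unambiguity of $G$ this underlying $G$-derivation is unique; then the state annotations on every node of the parse tree are forced, because the annotation of a node deriving subword $u$ that starts reading $D$ in state $p$ must have second component $\delta(p,u)$, and $p$ itself is determined top-down starting from $q_0$. So the parse tree in $H$ is uniquely determined by $w$, giving unambiguity. One subtlety to handle carefully: a UCFG may have useless (non-generating or unreachable) symbols, and the construction can introduce triples $[p,A,q]$ that derive nothing; these must be pruned (or simply shown harmless) so that they do not create spurious ``ambiguity'' via dead productions — standard, since removing useless symbols preserves the language and the parse trees of derivable words.

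For the complexity bound: the normal form has $O(|G|)$ productions. Productions of the form $[p,A,\delta(p,a)]\to a$ number $O(|Q|\cdot|G|) = O(|D|\,|G|)$. Productions of the form $[p,A,r]\to[p,B,q][q,C,r]$ number $O(|Q|^3)$ per binary production of $G$, hence $O(|D|^3\,|G|)$ in total, and each is generated in constant time, so the whole grammar is built in $O(|G|\,|D|^3)$ time and has that size. The useless-symbol elimination runs in time linear in $|H|$, so it does not dominate. Stating the invariant precisely, doing the two-direction induction, and then the ``annotations are forced'' uniqueness argument are the three things the formal proof must contain; everything else is bookkeeping. I would present the proof in exactly that order: normal form, define $H$, prove the language invariant, prove unambiguity, then tally the running time.
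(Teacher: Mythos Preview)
Your approach is essentially the paper's---the Bar-Hillel triple construction, with unambiguity argued by (i) projecting the $H$-parse tree to a $G$-parse tree, which is unique by hypothesis, and (ii) observing that the state annotations are then forced by determinism of $D$. There is, however, one genuine gap in the complexity accounting: your claim that conversion to Chomsky normal form (productions $A\to BC$, $A\to a$, $S\to\epsilon$) blows up $|G|$ by only a constant factor is false in general---eliminating unit productions can square the grammar size, which would give $O(|G|^2|D|^3)$ rather than $O(|G||D|^3)$. The paper flags exactly this issue and works around it by using a weaker normal form that only requires right-hand sides of length at most two with no $\epsilon$; this \emph{is} achievable in linear size, at the cost of handling the extra production shapes $A\to B$, $A\to bC$, $A\to Bc$, $A\to bc$ in the triple construction. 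Your fix is easy: either keep unit productions $A\to B$ in your normal form and add rules $[p,A,q]\to[p,B,q]$ (only $O(|Q|^2)$ per unit rule, well within the cubic budget), or adopt the paper's weaker normal form.

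On one point your route is cleaner than the paper's. To handle multiple accepting states you introduce a fresh start symbol with $|F|$ productions $S_H\to[q_0,S,q_f]$ and note that determinism of $D$ makes the choice of $q_f$ unique for each derivable word, so no ambiguity is introduced. The paper instead preprocesses $D$ into an NFA $D'$ with a single accepting state (adding, for every transition into $F$, a parallel transition into a new sink \textsc{Accept}), and must then argue separately that accepting paths in $D'$ are still unique despite the nondeterminism. Both devices avoid the $|F|$ multiplicative blowup that the original Bar-Hillel presentation incurs, but yours stays inside the DFA world and needs less auxiliary argument.
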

\begin{proof}
First convert $G$ to a CFG $G' = (V, \Sigma, P, S)$ such that
\begin{enumerate}
 \item $L(G') = L(G) \setminus \{\epsilon\}$, and
 \item the RHS of every production in $P$ has length at most $2$ and does not contain $\epsilon$.
\end{enumerate}
This transformation can be done in a way that the time required and the size of $G'$ are both $O(|G|)$, and in the process we determine whether $\epsilon \in L(G)$ \cite[Sec.~7.4.2]{hmu}.
Furthermore, it is simple to check that since $G$ is unambiguous, this procedure ensures that $G'$ is also.
The rest of our algorithm will build an unambiguous CFG $H$ such that $L(H) = L(G') \cap L(D)$, and thus $\epsilon \not \in L(H)$.
If $\epsilon \in L(G) \cap L(D)$ (which we can easily check, since we know whether $\epsilon \in L(G)$ from above, and checking if $\epsilon \in L(D)$ is trivial), we can simply add the production $S \rightarrow \epsilon$ to $H$ without introducing any ambiguity.
Therefore ultimately we will have $L(H) = L(G) \cap L(D)$ as desired.

The main construction of \citeN{bar-hillel} works on DFAs with a single accepting state: there is an initial preprocessing step which writes $D$ as a union of $|F|$ DFAs of that form, so that the construction can be carried out on each and the resulting grammars combined.
Doing this would contribute a factor of $|F|$ to our algorithm's runtime, so instead we modify $D$ to produce an NFA $D' = (Q', \Sigma, \delta', q_0, F')$ as follows.
We add two new states $\textsc{Accept}$ and $\textsc{Reject}$, where $\delta(\textsc{Accept}, a) = \delta(\textsc{Reject}, a) = \textsc{Reject}$ for every $a \in \Sigma$.
Also for any transition $\delta(x, a) = y$ where $y \in F$, we add a transition from $x$ to $\textsc{Accept}$ on input $a$ --- note that this makes $D'$ nondeterministic.
Finally, we make $\textsc{Accept}$ the only accepting state of $D'$.
Clearly we can construct $D'$ in time linear in $|D|$.

Now consider any nonempty word $w \in L(D)$, which we may write $a_0 \dots a_n$ for some $n \ge 0$.
Let $w'$ be $w$ without its last symbol $a_n$ (so $w' = \epsilon$ if $w$ has length 1).
Since $w \in L(D)$ there is some $x \in Q$ and $y \in F$ such that $\delta(q_0, w') = x$ and $\delta(x, a_n) = y$.
Therefore $w \in L(D')$, since $D'$ on input $w'$ can follow unmodified transitions from $D$ to reach $x$, and then the new transition from $x$ to $\textsc{Accept}$ on input $a_n$.
Conversely, if $w \in L(D')$ then executing $D'$ on input $w$ we must end in state $\textsc{Accept}$, and all transitions except the last must be transitions of $D$ (since once we follow a new transition to $\textsc{Accept}$, any further transitions will end in $\textsc{Reject}$).
Note that this means there is only one accepting path in $D'$ corresponding to $w$ --- this will be important later.
If the last transition in this path is from state $x$ on input $a$, then by construction $\delta(x, a) \in F$ and so $w \in L(D)$.
Therefore $L(D') = L(D) \setminus \{\epsilon\}$, and since $\epsilon \not \in L(G')$ we have $L(G') \cap L(D) = L(G') \cap L(D')$.

Now, following \citeN{bar-hillel}, we build the CFG $H = (\hat{V}, \Sigma, \hat{P}, \hat{S})$ where $\hat{V} = (Q' \times V \times Q') \cup \Sigma$ and $\hat{S} = (q_0, S, \textsc{Accept})$.
There are two\footnote{In fact there are four additional kinds, where the RHS is of the form $bc$, $bC$, $Bc$, or $B$. But these are easily handled along the same lines as types (1) and (2) above, so for simplicity we omit the details. The reason for dealing with all of these different types of productions instead of converting $G$ to Chomsky normal form (which only has productions of types (1) and (2)) is that the conversion to CNF can square the size of the grammar. The conversion we use only expands the grammar by a linear amount at most, while keeping the construction of $H$ efficient.} kinds of productions in $\hat{P}$:
\begin{enumerate}
\item \label{variable-productions} For every production $A \rightarrow BC$ in $P$, we add the productions $(x, A, z) \rightarrow (x, B, y) (y, C, z)$ for all $x,y,z \in Q'$ to $\hat{P}$.
\item \label{terminal-productions} For every production $A \rightarrow b$ in $P$, we add the productions $(x, A, y) \rightarrow b$ for all $x,y \in Q'$ such that $(x, b, y) \in \delta'$ to $\hat{P}$.
\end{enumerate}
For a proof that $L(H) = L(G') \cap L(D')$ (and thus that $L(H) = L(G') \cap L(D)$, as desired), see \citeN{bar-hillel} (note however that our type \ref{terminal-productions} productions are split into two separate productions in their presentation).
Clearly, we can construct $H$ in $O(|P| \cdot |Q'|^3) = O(|G| |D|^3)$ time.

It remains to show that $H$ is unambiguous.
Take any word $w \in L(H)$, and any two derivation trees $T_1$ and $T_2$ for $w$.
Observe that by the way we constructed the productions of $H$ above, if we drop the first and third components of the labels on the non-leaf nodes of $T_1$ or $T_2$ we obtain a derivation tree for $w$ from the grammar $G'$.
Therefore since $G'$ is unambiguous, $T_1$ and $T_2$ have the same tree structure.

Now we prove by induction on trees that every node in $T_1$ has the property that its label is identical to the corresponding node in $T_2$.
For the leaf nodes this is immediate, since they spell out $w$ in either tree.
For nodes one level up, i.e. those to which a production of type \ref{terminal-productions} is applied, it is proved by \citeN{bar-hillel} that if these are written from left to right they form a sequence $(q_0, v_0, q_1), (q_1, v_1, q_2), \dots, (q_{n-1}, v_{n-1}, q_n)$ where $q_0 q_1 \dots q_n$ is an accepting path for $w$ in $D'$.
Since as shown above there is only one such path, these nodes also have the property.
Finally, observe that under productions of type \ref{variable-productions}, the state labels of the parent node are uniquely determined by those of its children.
So if all of a node's children satisfy the property, so does the node itself, and thus by induction all nodes have the property.
Therefore $T_1$ and $T_2$ are identical, and so $H$ is unambiguous.
\end{proof}

The next two lemmas put the known results on counting and sampling for UCFGs into the form we need.
\begin{lemma} \label{lemma:ucfg-counting}
Given a UCFG $G$ and $m,n \in \N$ in unary, we can compute $|\{ w \in L(G) \st m \le |w| \le n \}|$ in polynomial time.
\end{lemma}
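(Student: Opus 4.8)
The plan is to reduce counting words to counting parse trees — legitimate precisely because $G$ is unambiguous — and then to count parse trees by length via a dynamic program over the grammar, essentially the classical approach of \citeN{hickey-cohen} and \citeN{mckenzie}.

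First I would normalize $G$. Using exactly the transformation cited in the proof of Lemma~\ref{lemma:regular-intersection} (see \cite[Sec.~7.4.2]{hmu}), in $O(|G|)$ time I obtain an equivalent unambiguous grammar $G' = (V, \Sigma, P, S)$ --- together with a flag recording whether $\epsilon \in L(G)$ --- in which every production has a right-hand side of length at most $2$ with no occurrence of $\epsilon$. The only productions one must then still handle with care are unit productions $A \to B$ (which can be present, e.g.\ as a byproduct of eliminating $\epsilon$-productions). I would also strip useless symbols (those unreachable from $S$, or generating no terminal string) by the standard linear-time marking procedures; this changes neither $L(G')$ nor its unambiguity. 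The point of this cleanup is the observation that, for a reduced unambiguous grammar, the unit relation $\{(A,B) : (A \to B) \in P\}$ is acyclic: a cycle $A_1 \to A_2 \to \cdots \to A_k \to A_1$ of unit productions among useful variables would give any terminal string derivable from $A_1$ infinitely many distinct derivation trees. Hence I can topologically order $V$ so that every unit production points from a later variable to an earlier one.

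Next, the dynamic program. For $A \in V$ and $0 \le \ell \le n$, let $N_A(\ell)$ be the number of derivation trees rooted at $A$ whose yield has length exactly $\ell$; since $G'$ is reduced and unambiguous, $N_A(\ell) = |\{ w \in L(G'_A) : |w| = \ell \}|$, where $G'_A$ is $G'$ with start symbol changed to $A$. These quantities obey the expected recurrences: each production $A \to a$ contributes $1$ to $N_A(1)$; each production $A \to BC$ contributes the convolution $\sum_{i+j=\ell} N_B(i)\,N_C(j)$ to $N_A(\ell)$; and each unit production $A \to B$ contributes $N_B(\ell)$ to $N_A(\ell)$. Filling the table one variable at a time in the topological order above makes the unit-production contributions well-defined, so a single pass computes every $N_A(\ell)$. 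The desired quantity is then $\sum_{\ell=m}^{n} N_S(\ell)$, increased by $1$ in the special case $m = 0$ and $\epsilon \in L(G)$ (recorded in the first step).

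Finally, the running time. Each $N_A(\ell)$ is at most $|\Sigma|^{\ell} \le |\Sigma|^{n}$, hence has bit-length $O(n \log |\Sigma|)$, which is polynomial in the input because $n$ is given in unary. There are $O(|G|\,n)$ table entries, and each binary production contributes an $O(n)$-term convolution, i.e.\ $O(n)$ multiplications of polynomially-sized integers, so the whole computation uses $O(|G|\,n^2)$ arithmetic operations and runs in polynomial time. I do not anticipate a serious obstacle; the only mildly delicate point is the bookkeeping around unit productions and reducedness handled above, and one could sidestep it altogether by instead putting $G$ into Chomsky normal form (which has no unit productions) at the price of a quadratic, still polynomial, increase in grammar size.
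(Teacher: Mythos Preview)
Your proposal is correct and follows essentially the same approach as the paper: both compute $|L(G) \cap \Sigma^\ell|$ for each $m \le \ell \le n$ via the \citeN{hickey-cohen}/\citeN{mckenzie} dynamic program and sum, with the paper simply citing those algorithms as black boxes while you spell out the recurrence and the handling of $\epsilon$ and unit productions. One small point: the normalization you invoke also leaves mixed productions $A \to bc$, $A \to bC$, $A \to Bc$ (cf.\ the footnote in the proof of Lemma~\ref{lemma:regular-intersection}), which your listed recurrences omit --- but these are trivially absorbed, and your CNF remark already gives a clean alternative.
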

\begin{proof}
The algorithms of \citeN{hickey-cohen} or \citeN{mckenzie} can compute $|L(G) \cap \Sigma^i|$ in time polynomial in $|G|$ and $i$ (both algorithms do this as a preliminary step to sampling uniformly from $L(G) \cap \Sigma^i$).
We do this for every $i$ such that $m \le i \le n$, and return the sum of the results\footnote{In practice, the calls for each $i$ should not be done independently, but share memoization tables.}.
\end{proof}

\begin{lemma} \label{lemma:ucfg-sampling}
Given a UCFG $G$ and $m,n \in \N$ in unary, we can uniformly sample from $\{ w \in L(G) \st m \le |w| \le n \}$ in expected polynomial time.
\end{lemma}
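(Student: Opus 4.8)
The plan is to reduce to the known uniform-sampling algorithms for unambiguous context-free grammars, which sample uniformly from $L(G) \cap \Sigma^i$ for a fixed length $i$ and, as a byproduct, compute the counts $|L(G) \cap \Sigma^i|$ --- exactly the routines already used in the proof of Lemma~\ref{lemma:ucfg-counting}. First, for each $i$ with $m \le i \le n$, we invoke the counting routine of \citeN{hickey-cohen} or \citeN{mckenzie} to obtain $c_i = |L(G) \cap \Sigma^i|$; since $m$ and $n$ are given in unary there are only polynomially many such $i$, and each call runs in time polynomial in $|G|$ and $i \le n$, so this stage takes polynomial time. Let $N = \sum_{i=m}^n c_i = |\{ w \in L(G) \st m \le |w| \le n \}|$, a quantity of polynomial bitwidth.

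To produce a sample, we first choose a length: pick $j \in \{m, \dots, n\}$ with probability $c_j / N$, for instance by drawing a uniform integer $r \in \{0, \dots, N-1\}$ and letting $j$ be the unique index with $\sum_{i=m}^{j-1} c_i \le r < \sum_{i=m}^{j} c_i$. Drawing such an $r$ from fair coin flips can be done in expected polynomial time: flip $\ceil{\log_2 N}$ coins, read them as an integer, and retry if the result is $\ge N$; since $2^{\ceil{\log_2 N}} < 2N$, the retry probability is below $1/2$, so the expected number of rounds is $O(1)$ while each round costs time polynomial in the bitwidth of $N$. Having fixed $j$, we call the UCFG sampler to draw $w$ uniformly from $L(G) \cap \Sigma^j$ in expected polynomial time, and output $w$.

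For correctness, observe that for any target word $w$ of length $j \in [m,n]$ we have $\Pr[\text{output} = w] = \Pr[j \text{ chosen}] \cdot \Pr[w \mid j] = (c_j / N) \cdot (1 / c_j) = 1/N$, so the output distribution is exactly uniform over the target set; unambiguity of $G$ is what makes the per-length counts and per-length samples correct. The total expected runtime is polynomial, being a sum of polynomially many polynomial-time counting calls plus one expected-polynomial-time length draw and one expected-polynomial-time per-length sample.

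The only wrinkle to address is the degenerate case $N = 0$, where the target set is empty and there is nothing to sample; this is detected during the counting stage, and in any event the use made of this lemma in Theorem~\ref{theorem:generic-scheme} invokes uniform sampling only on sets already known to be nonempty (when $\valids = \emptyset$ we have $\eopt = 1$, and when $\improvs \setminus \valids = \emptyset$ we have $\eopt = 0$). Apart from this, the argument is routine; the one point needing any care is confirming that the length-selection step runs in expected polynomial time, which follows from the polynomial bitwidth of $N$.
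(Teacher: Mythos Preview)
Your proof is correct and follows essentially the same approach as the paper: compute the per-length counts $c_i$, select a length $j$ with probability $c_j/N$, then invoke the per-length UCFG sampler. You give more implementation detail (the rejection-sampling for $j$ and the $N=0$ edge case), but the argument is otherwise identical.
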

\begin{proof}
As in the previous lemma, compute for each $i$ from $m$ to $n$ the count $c_i = |L(G) \cap \Sigma^i|$.
Pick a random $j$ so that the probability of obtaining $i$ is $c_i / \sum_k c_k$.
Then use the algorithms of \citeN{hickey-cohen} or \citeN{mckenzie} to sample uniformly from $L(G) \cap \Sigma^j$, and return the result.
The probability of obtaining any $w \in L(G)$ with $m \le |w| \le n$ is $(c_{|w|} / \sum_k c_k) \cdot (1 / |L(G) \cap \Sigma^{|w|}|) = 1 / \sum_k c_k = 1 / |\{ w \in L(G) \st m \le |w| \le n \}|$, as desired.
Furthermore, the counting and sampling both take expected time polynomial in $|G|$ and $n$.
\end{proof}

Putting this all together, we obtain a polynomial-time improvisation scheme for the case where the hard specification is a UCFG.
\begin{theorem} \label{theorem:ucfg-dfa-scheme}
There is a polynomial-time improvisation scheme for \cic{\UCFG}{\DFA}.
\end{theorem}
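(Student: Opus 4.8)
The plan is to reuse the architecture of the proof of Theorem~\ref{theorem:generic-scheme}, but since that theorem is stated for a \emph{homogeneous} class of specifications while here the hard specification is a UCFG and the soft specification is a DFA, I cannot cite it as a black box. Instead I will replay its construction, checking that every specification the procedure builds stays within the two types ``UCFG'' and ``DFA'' and that every operation performed on it is polynomial-time. Concretely, given $\mathcal{C} = (\hard, \soft, m, n, \epsilon, \lambda, \rho) \in \cic{\UCFG}{\DFA}$, I need representations of $\improvs$, $\valids$, and $\improvs \setminus \valids$ from which I can count and (for the latter two) sample uniformly.

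First I would build a UCFG for $\valids$: since $\valids = \improvs \cap L(\soft) = \{ w \in L(\hard) \cap L(\soft) \st m \le |w| \le n \}$, applying Lemma~\ref{lemma:regular-intersection} to the UCFG $\hard$ and the DFA $\soft$ yields a UCFG $\mathcal{V}$ with $L(\mathcal{V}) = L(\hard) \cap L(\soft)$ in $O(|\hard| |\soft|^3)$ time. For $\improvs \setminus \valids$, I observe that $\improvs \setminus \valids = \{ w \in L(\hard) \cap \overline{L(\soft)} \st m \le |w| \le n \}$: complementing the DFA $\soft$ by the standard construction gives a DFA $\overline{\soft}$ of the same size with $L(\overline{\soft}) = \overline{L(\soft)}$, and then Lemma~\ref{lemma:regular-intersection} applied to $\hard$ and $\overline{\soft}$ yields a UCFG $\mathcal{B}$ with $L(\mathcal{B}) = L(\hard) \cap \overline{L(\soft)}$, again in polynomial time. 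The length bounds $m,n$ need not be baked into a separate length-restricted specification, because Lemmas~\ref{lemma:ucfg-counting} and~\ref{lemma:ucfg-sampling} already take $m$ and $n$ as arguments; thus $|\improvs|$ is computed by applying Lemma~\ref{lemma:ucfg-counting} to $\hard$ with bounds $m,n$, $|\valids|$ by applying it to $\mathcal{V}$ with bounds $m,n$, and $|\improvs \setminus \valids|$ either directly from $\mathcal{B}$ or as $|\improvs| - |\valids|$.

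With these counts in hand I would check the feasibility inequalities of Theorem~\ref{theorem:feasibility}, returning $\bot$ if they fail. Otherwise I set $\eopt = \max(1 - \rho|\valids|,\, \lambda|\improvs \setminus \valids|)$ exactly as in the proof of Theorem~\ref{theorem:generic-scheme}, and output the improviser $G$ that with probability $1 - \eopt$ samples uniformly from $\valids$ via Lemma~\ref{lemma:ucfg-sampling} applied to $\mathcal{V}$ with bounds $m,n$, and with probability $\eopt$ samples uniformly from $\improvs \setminus \valids$ via Lemma~\ref{lemma:ucfg-sampling} applied to $\mathcal{B}$ with bounds $m,n$. The verification that this is an improvising distribution is word-for-word the argument in the proof of Theorem~\ref{theorem:feasibility}, and the runtime analysis is the same bookkeeping as in Theorem~\ref{theorem:generic-scheme}: $\mathcal{V}$ and $\mathcal{B}$ have size polynomial in $|\mathcal{C}|$, the values $|\improvs|$ and $|\valids|$ have polynomially many bits so $\eopt$ does too, and the counting and sampling steps run in (expected) polynomial time by Lemmas~\ref{lemma:ucfg-counting} and~\ref{lemma:ucfg-sampling}.

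The main obstacle is conceptual rather than computational: the construction crucially exploits that UCFGs, while \emph{not} closed under intersection with one another, \emph{are} closed under intersection with DFAs (Lemma~\ref{lemma:regular-intersection}) and that DFAs are closed under complement; this is precisely what lets the ``difference'' step $\improvs \setminus \valids = L(\hard) \cap \overline{L(\soft)}$ go through with a UCFG as its result. One must be careful that Lemma~\ref{lemma:regular-intersection} is always invoked with the \emph{UCFG} as $G$ and a \emph{DFA} as $D$ (never intersecting the two derived UCFGs $\mathcal{V}$ and $\mathcal{B}$ directly), since intersecting two UCFGs is exactly the $\sharpP$-hard case ruled out later; and one should note that the handling of $\epsilon$ in $L(\hard) \cap L(\soft)$ and in $L(\hard) \cap \overline{L(\soft)}$ is already taken care of inside Lemma~\ref{lemma:regular-intersection}.
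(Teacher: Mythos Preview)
Your proposal is correct and matches the paper's proof essentially step for step: both build the UCFG for $\valids$ via Lemma~\ref{lemma:regular-intersection} applied to $\hard$ and $\soft$, build the UCFG for $\improvs \setminus \valids$ by complementing the DFA $\soft$ and applying Lemma~\ref{lemma:regular-intersection} again, skip explicit length restriction because Lemmas~\ref{lemma:ucfg-counting} and~\ref{lemma:ucfg-sampling} already take $m,n$ as parameters, and then follow the generic-scheme construction verbatim. Your additional remarks about why the difference step must go through the DFA complement rather than a UCFG difference are accurate and helpful commentary, but the underlying argument is the same as the paper's.
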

\begin{proof}
We follow the procedure of Theorem \ref{theorem:generic-scheme} with one modification: we do not perform length restriction explicitly, since the counting and sampling algorithms effectively do so (i.e. they only count/sample words of a given length).
Applying Lemma \ref{lemma:regular-intersection} to $\hard$ and $\soft$, we obtain a UCFG $\mathcal{\valids}$ such that $\{ w \in L(\mathcal{\valids}) \st m \le |w| \le n\} = \valids$.
Applying Lemma \ref{lemma:regular-intersection} to $\hard$ and the complement of $\soft$ (computed with the usual DFA construction), we obtain a UCFG $\mathcal{B}$ such that $\{ w \in L(\mathcal{B}) \st m \le |w| \le n \} = \improvs \setminus \valids$.
Next, applying Lemma \ref{lemma:ucfg-counting} to $\hard$ and $\mathcal{\valids}$, we obtain $|\improvs|$ and $|\valids|$.
Then we can proceed exactly as in Theorem \ref{theorem:generic-scheme}, using Lemma \ref{lemma:ucfg-sampling} to sample from $\{ w \in L(\mathcal{\valids}) \st m \le |w| \le n\} = \valids$ and $\{ w \in L(\mathcal{B}) \st m \le |w| \le n \} = \improvs \setminus \valids$.
\end{proof}

While context-free grammars are most likely to be useful as hard specifications, when used as soft specifications instead there is still a polynomial-time improvisation scheme.
However, the algorithm is significantly more involved than the one above, because we must sample from the complement of a context-free language and cannot directly use the sampling algorithm for UCFGs.
\begin{theorem}
There is a polynomial-time improvisation scheme for \cic{\DFA}{\UCFG}.
\end{theorem}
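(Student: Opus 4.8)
The plan is to instantiate the generic procedure of Theorem~\ref{theorem:generic-scheme} as far as it will go, and to handle the one operation it does not directly support in this setting---uniform sampling from $\improvs \setminus \valids$---by a counting-based recursive sampler. First I would build the needed representations. Applying length restriction to the DFA $\hard$ gives a DFA $D$ with $L(D) = \improvs$ (a finite language, since every accepted word has length at most $n$). Applying Lemma~\ref{lemma:regular-intersection} to the UCFG $\soft$ and the DFA $D$ yields a UCFG $H$ with $L(H) = L(\soft) \cap \improvs = \valids$. Using Lemma~\ref{lemma:dfa-counting} on $D$ and Lemma~\ref{lemma:ucfg-counting} on $H$ we compute $|\improvs|$ and $|\valids|$, hence $|\improvs \setminus \valids| = |\improvs| - |\valids|$, and then test the inequalities of Theorem~\ref{theorem:feasibility}; if they fail we output $\bot$. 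Otherwise we must produce an improviser for the distribution from the proof of Theorem~\ref{theorem:feasibility}: with probability $1 - \eopt$ sample uniformly from $\valids$, and with probability $\eopt$ sample uniformly from $\improvs \setminus \valids$. The first branch is immediate from Lemma~\ref{lemma:ucfg-sampling} applied to $H$.

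The heart of the argument, and the \emph{main obstacle}, is the second branch, because $\improvs \setminus \valids = \improvs \cap \overline{L(\soft)}$ is in general not context-free, so we cannot produce a grammar for it and invoke the UCFG sampler directly. The key observation is that we do not need an explicit representation of this set, only the ability to count, for an arbitrary string $p$ with $|p| \le n$, the quantity $C(p) := |\{ w \in \improvs \setminus \valids : p \text{ is a prefix of } w \}|$. Since $\valids \subseteq \improvs$, we have $C(p) = C_\improvs(p) - C_\valids(p)$, where $C_\improvs(p)$ (resp.\ $C_\valids(p)$) counts the words of $\improvs$ (resp.\ $\valids$) having prefix $p$. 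The first is a DFA computation on $D$: a one-time dynamic program gives, for every state $s$ and length $\ell$, the number of length-$\ell$ words accepted from $s$, and $C_\improvs(p)$ is the appropriate cumulative sum at state $\hat\delta_D(q_0, p)$. For the second, intersect $H$ with a DFA of size $O(n)$ recognizing exactly the words with prefix $p$ (Lemma~\ref{lemma:regular-intersection}, yielding a UCFG whose language is finite with all lengths at most $n$) and count it with Lemma~\ref{lemma:ucfg-counting}; everything here is polynomial in $|\mathcal{C}|$.

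Finally I would describe the sampler itself as a self-reducible random walk over prefixes. Maintain a prefix $p$, initially $\epsilon$, with the invariant $C(p) > 0$ (true initially since $\eopt > 0$ forces $\improvs \setminus \valids \ne \emptyset$; when $\eopt = 0$ this branch is never taken). Given the current $p$, compute $C(pa)$ for each $a \in \Sigma$ by the method above; then $c(p) := C(p) - \sum_{a \in \Sigma} C(pa)$ equals $1$ if $p$ itself lies in $\improvs \setminus \valids$ and $0$ otherwise. With probability $c(p)/C(p)$ output $p$ and halt; otherwise move to $pa$ with probability $C(pa)/C(p)$. These probabilities sum to $1$ by the defining recursion for $C$, and every prefix reached has positive $C$. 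For a target word $w = a_1 \cdots a_\ell \in \improvs \setminus \valids$, the probability of the corresponding trajectory telescopes to $c(w)/C(\epsilon) = 1/|\improvs \setminus \valids|$, so the output is uniform; and since each step strictly lengthens $p$ while $|p|$ never exceeds $n$ (at which point $c(p) = C(p)$ and the walk must halt), the walk terminates after at most $n$ steps, each taking polynomial time.

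Combining the two branches with the bias $\eopt$ (whose bitwidth is polynomial, exactly as argued in Theorem~\ref{theorem:generic-scheme}) gives a probabilistic algorithm whose output distribution is the improvising distribution of Theorem~\ref{theorem:feasibility} and whose expected runtime is polynomial in $|\mathcal{C}|$, and all the preceding construction and feasibility-checking steps run in polynomial time. Hence this is a polynomial-time improvisation scheme for $\cic{\DFA}{\UCFG}$. I expect the only delicate points to be verifying that the prefix-counts $C_\valids(p)$ are genuinely computable without a representation of $\overline{L(\soft)}$ (handled by the intersection-with-a-prefix-DFA trick plus $\valids \subseteq \improvs$) and checking that the telescoping product really yields the uniform distribution; the rest is routine bookkeeping of polynomial bounds.
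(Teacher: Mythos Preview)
Your proposal is correct and follows essentially the same approach as the paper: both build a prefix-based random walk on $\improvs \setminus \valids$, computing at each step the number of extensions by subtracting the UCFG-side count (obtained via Lemma~\ref{lemma:regular-intersection} with a prefix DFA and then Lemma~\ref{lemma:ucfg-counting}) from the DFA-side count. Your direct computation of $C_\improvs(p)$ by the standard DFA dynamic program is a bit more streamlined than the paper's route through $|\Sigma^{\le n}| - |L(D')|$ via complements, but the overall structure and the key idea are the same.
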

\begin{proof}
Applying length restriction as in Theorem \ref{theorem:dfa-scheme} to $\hard$, we obtain a DFA $\mathcal{\improvs}$ such that $L(\mathcal{\improvs}) = \improvs$.
Applying Lemma \ref{lemma:regular-intersection} to $\mathcal{\improvs}$ and $\soft$, we obtain a UCFG $\mathcal{\valids}$ such that $L(\mathcal{\valids}) = \valids$.
Then we can apply Lemma \ref{lemma:dfa-counting} to $\mathcal{\improvs}$ and Lemma \ref{lemma:ucfg-counting} to $\mathcal{\valids}$ to compute $|\improvs|$ and $|\valids|$.
If we can uniformly sample from $\Delta = L(\mathcal{\improvs}) \cap \overline{L(\mathcal{\valids})} = L(\mathcal{\improvs}) \setminus L(\mathcal{\valids}) = \improvs \setminus \valids$, we can then proceed exactly as in Theorem \ref{theorem:generic-scheme}.

We will build up a word from $\Delta$ incrementally, starting from the empty word.
Suppose we have generated the prefix $\sigma$ so far.
For each symbol $a \in \Sigma$, let $\Delta_{\sigma a} \subseteq \Delta$ contain all words starting with the prefix $\sigma a$, i.e. $\Delta_{\sigma a} = \{ w \in \Delta \st \exists z \in \Sigma^* : w = \sigma a z \}$.
Construct a DFA $P_{\sigma a}$ accepting all words that start with the prefix $\sigma a$ (clearly we can take $P_{\sigma a}$ to have size polynomial in $|\Sigma|$ and $n$).
Then $\Delta_{\sigma a} = \Delta \cap L(P_{\sigma a}) = \overline{\overline{\Delta} \cup \overline{L(P_{\sigma a})}} = \overline{\overline{L(\mathcal{\improvs})} \cup L(\mathcal{\valids}) \cup \overline{L(P_{\sigma a})}} = \Sigma^{\le n} \setminus (\overline{L(\mathcal{\improvs})} \cup L(\mathcal{\valids}) \cup \overline{L(P_{\sigma a})})$, since $L(\mathcal{\improvs}) \subseteq \Sigma^{\le n}$.
Letting $D$ be the complement of the product of $\mathcal{\improvs}$ and $P_{\sigma a}$, we have $L(D) = \overline{L(\mathcal{\improvs})} \cup \overline{L(P_{\sigma a})}$ and so $\Delta_{\sigma a} = \Sigma^{\le n} \setminus (L(\mathcal{\valids}) \cup L(D))$.
Let $D'$ be the product of $D$ and a DFA accepting all words of length at most $n$, so that $L(D') = L(D) \cap \Sigma^{\le n}$.
Applying Lemma \ref{lemma:regular-intersection} to $\mathcal{\valids}$ and $P_{\sigma a}$, we can find a UCFG $\mathcal{\valids}_{\sigma a}$ such that $L(\mathcal{\valids}_{\sigma a}) = L(\mathcal{\valids}) \cap L(P_{\sigma a})$.
Then we have
\begin{align*}
|\Delta_{\sigma a}| &= | \Sigma^{\le n} \setminus (L(\mathcal{\valids}) \cup L(D)) | \\
&= |\Sigma^{\le n}| - |\Sigma^{\le n} \cap (L(\mathcal{\valids}) \cup L(D))| \\
&= |\Sigma^{\le n}| - |L(\mathcal{\valids}) \cup L(D') | \\
&= |\Sigma^{\le n}| - |(L(\mathcal{\valids}) \cap \overline{L(D')}) \cup L(D') | \\
&= |\Sigma^{\le n}| - |L(\mathcal{\valids}) \cap \overline{L(D')}| -  |L(D') | \\
&= |\Sigma^{\le n}| - |L(\mathcal{\valids}) \cap (\overline{L(D)} \cup \Sigma^{> n})| -  |L(D') | \\
&= |\Sigma^{\le n}| - |L(\mathcal{\valids}) \cap L(\mathcal{\improvs}) \cap L(P_{\sigma a})| -  |L(D') | \\
&= |\Sigma^{\le n}| - |L(\mathcal{\valids}_{\sigma a})| -  |L(D')| .
\end{align*}
All words in $L(\mathcal{\valids}_{\sigma a})$ and $L(D')$ have length at most $n$, so using Lemmas \ref{lemma:ucfg-counting} and \ref{lemma:dfa-counting} respectively we can count the sizes of these languages in polynomial time.
Since $|\Sigma^{\le n}| = \sum_{i=0}^n |\Sigma|^i = (|\Sigma|^{n+1} - 1) / (|\Sigma| - 1)$, we can then compute $|\Delta_{\sigma a}|$ by the formula above.

Now let $\Delta_\sigma$ consist of all words in $\Delta$ that start with the prefix $\sigma$.
Observe that the sets $\Delta_{\sigma a}$ form a partition $\Pi_\sigma$ of $\Delta_\sigma$, unless $\sigma \in \Delta$ in which case we need to also add the set $\{\sigma\}$ to $\Pi_\sigma$.
Select one of the sets in $\Pi_\sigma$ randomly, with probability proportional to its size.
If we pick $\{\sigma\}$, then we stop and return $\sigma$ as our random sample.
Otherwise we picked $\Delta_{\sigma a}$ for some $a \in \Sigma$, so we append $a$ to $\sigma$ and continue.

Since the procedure of Theorem \ref{theorem:generic-scheme} only samples from $\improvs \setminus \valids = \Delta$ when it is nonempty, in the first iteration $\Delta_\sigma = \Delta$ is nonempty and so one of the sets in $\Pi_\sigma$ must be nonempty.
Since we assign a set in $\Pi_\sigma$ probability zero if it is empty, we will not select such a set, and so by induction $\Delta_\sigma$ is nonempty at every iteration.
We must terminate after at most $n$ iterations, since $\sigma$ gets longer in every iteration and all words in $\Delta$ have length at most $n$.
We prove by induction on $n - |\sigma|$ that if $\Delta_\sigma$ is nonempty, our procedure starting from $\sigma$ generates a uniform random sample from $\Delta_\sigma$.
In the base case $|\sigma| = n$, we have $\Delta_\sigma = \{\sigma\}$, since no word in $\Delta$ has length greater than $n$ but $\Delta_\sigma$ is nonempty.
Therefore $\Delta_{\sigma a} = \emptyset$ for all $a \in \Sigma$, so the procedure will return $\sigma$, and this is indeed a uniform random sample from $\Delta_\sigma$.
For $|\sigma| < n$, if $\sigma \in \Delta_\sigma$ then the probability of returning $\sigma$ is the probability of picking the set $\{\sigma\}$ from $\Pi_\sigma$, which is $1 / \sum_{S \in \Pi_\sigma} |S| = 1 / |\Delta_\sigma|$.
Any other $w \in \Delta_\sigma$ has length at least $|\sigma|+1$ and so can be written $w = \sigma a w'$ for some $a \in \Sigma$ and $w' \in \Sigma^*$.
Now to return $w$, our procedure must first pick the set $\Delta_{\sigma a}$ from $\Pi_\sigma$, and then return $w$ in a later iteration.
By the induction hypothesis, this happens with probability $(|\Delta_{\sigma a}| / \sum_{S \in \Pi_\sigma} |S|) \cdot (1 / |\Delta_{\sigma a}|) = 1 / |\Delta_\sigma|$.
So all words in $\Delta_\sigma$ are returned with uniform probability, and by induction this holds for all $\sigma$.
In particular it holds in the first iteration when $\sigma$ is the empty word, in which case we sample uniformly from $\Delta_\sigma = \Delta$ as desired.

This allows us to uniformly sample from $\improvs \setminus \valids$ and so finish the implementation of the procedure in Theorem \ref{theorem:generic-scheme}.
All of the operations we perform are polynomial-time, and the sampling procedure needs only polynomially-many iterations, so this yields a polynomial-time improvisation scheme.
\end{proof}

The results above show that there are polynomial-time improvisation schemes for CI instances where one specification is a UCFG and the other is a DFA.
However, when \emph{both} the hard and soft specifications are grammars (even unambiguous ones), the complexity jumps all the way up to $\sharpP$.
\begin{theorem} \label{theorem:2ucfg-hardness}
$\cic{\UCFG}{\UCFG}$ is $\sharpP$-hard.
\end{theorem}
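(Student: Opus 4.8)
The plan is to give a polynomial-time Cook reduction from $\sharpSAT$ (which is $\sharpP$-complete) to the feasibility problem for $\cic{\UCFG}{\UCFG}$, following the same pattern as the proof of Theorem~\ref{theorem:nfa-hardness}: build a CI instance a single feasibility query for which reveals whether $|\valids|$ exceeds a chosen threshold, then binary-search on the threshold to recover $|\valids|$ exactly, having arranged that $|\valids|$ equals the number of satisfying assignments of the input formula. The entire content is thus in constructing, from a CNF formula $\phi$ over variables $x_1, \dots, x_k$ with clauses $C_1, \dots, C_r$ (we may assume $r$ even, padding with a duplicate clause otherwise), two \emph{unambiguous} grammars $G_1, G_2$, of size polynomial in $|\phi|$ over a common alphabet, such that $L(G_1) \cap L(G_2)$ is in bijection with the set of satisfying assignments of $\phi$ and moreover $L(G_1)$ and $L(G_2)$ each contain only strings of a single fixed, polynomially-bounded length.

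The construction uses the classical trick for forcing equality of many blocks using two pushdown checks. Encode an assignment $a \in \{0,1\}^k$ by the string $w_a = a\,\#\,a^R\,\#\,a\,\#\,a^R\,\#\cdots$ consisting of $r$ blocks of length $k$ separated by $\#$, with the odd-indexed blocks equal to $a$ and the even-indexed blocks equal to the reversal $a^R$. Writing a generic candidate as $u_1 \# u_2 \# \cdots \# u_r$ with each $|u_j| = k$: let $G_1$ generate precisely those candidates in which $u_{2i-1} = u_{2i}^{R}$ for every $i$ \emph{and} the string $u_{2i-1}$ satisfies both $C_{2i-1}$ and $C_{2i}$; and let $G_2$ generate precisely those candidates in which $u_{2i} = u_{2i+1}^{R}$ for every $i$ (with $u_1$ and $u_r$ only required to have length $k$). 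Reverse-equality of two adjacent blocks, $u\,\#\,u^R$, is produced by a palindrome-style grammar (productions $T \to 0\,T\,0 \mid 1\,T\,1$ applied exactly $k$ times, then $T' \to \#$), and checking that a clause holds of the bits of $u$ as they are peeled off requires only a constant amount of extra state carried in the nonterminal; so each of $G_1, G_2$ has $O(rk)$ nonterminals. Intersecting the two defining conditions chains the equalities together, forcing $u_1 = u_3 = \cdots = u_{r-1} =: a$, $u_2 = u_4 = \cdots = u_r = a^{R}$, and $a \models \phi$; hence $L(G_1) \cap L(G_2) = \{ w_a : a \models \phi \}$, and every string of $L(G_1)$ or of $L(G_2)$ has length $\ell = rk + (r-1)$.

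The step needing the most care --- and the reason this is a statement about UCFGs rather than a corollary of Theorem~\ref{theorem:cfg-hardness} --- is showing that $G_1$ and $G_2$ really are unambiguous. I would prove this directly: in any candidate string the positions of all $\#$ symbols are fixed; the number of blocks $r$ is hard-coded into the grammar, so the top-level split into reverse-pair blocks (plus, in $G_2$, the two singleton end blocks) is forced, since each reverse-pair block must have length exactly $2k+1$ and contain exactly one $\#$; and within a reverse-pair block the palindrome grammar's production at each peeling step is dictated by the next input symbol, with the auxiliary clause-tracking state determined by the prefix already read. Hence each string has exactly one derivation tree in $G_1$ and exactly one in $G_2$. (One could instead note that $L(G_1)$ and $L(G_2)$ are deterministic context-free and invoke the existence of polynomial-size unambiguous grammars for DCFLs, but the hands-on argument is cleaner.)

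Finally, I would assemble the reduction. Given $K \ge 1$, form the instance $\mathcal{C}_K = (G_1, G_2, \ell, \ell, 0, 0, 1/K) \in \cic{\UCFG}{\UCFG}$; then $\improvs_{\mathcal{C}_K} = L(G_1)$ and $\valids_{\mathcal{C}_K} = L(G_1) \cap L(G_2)$, so $|\valids_{\mathcal{C}_K}|$ is the number of satisfying assignments of $\phi$, which is at most $2^k$. By Theorem~\ref{theorem:feasibility} and the convention that division by $\lambda = 0$ yields $\infty$, the instance $\mathcal{C}_K$ is feasible iff $K \le |\improvs_{\mathcal{C}_K}|$ and $K \le |\valids_{\mathcal{C}_K}|$; since $\valids \subseteq \improvs$ the first inequality follows from the second, so $\mathcal{C}_K$ is feasible iff $K \le |\valids_{\mathcal{C}_K}|$. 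Performing binary search over $K \in \{1, \dots, 2^k\}$, using one feasibility query per step and answering $0$ if every query is infeasible, recovers $|\valids_{\mathcal{C}_K}| = \sharpSAT(\phi)$ with polynomially many queries on instances of polynomial size. This Cook reduction shows that $\cic{\UCFG}{\UCFG}$ is $\sharpP$-hard.
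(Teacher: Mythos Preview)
Your proof is correct, but it takes a genuinely different route from the paper's. The paper isolates an intermediate problem $\textsc{\#UCFG-Int}$ (counting the bounded-length intersection of two UCFGs), shows it $\sharpP$-complete by a parsimonious reduction from the counting version of Bounded-PCP (itself obtained from a Turing-machine simulation), and only then reduces $\textsc{\#UCFG-Int}$ to feasibility of $\cic{\UCFG}{\UCFG}$ via the same binary search you use. The two UCFGs the paper builds are the classical Bar-Hillel/Floyd ``top halves'' and ``bottom halves'' grammars for PCP tiles, whose unambiguity is immediate from the tile-index markers.

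Your approach bypasses PCP entirely: you encode a $\sharpSAT$ instance directly, using the palindrome-pair trick to chain equality constraints across $r$ copies of the assignment with two interleaved UCFGs, and fold the clause checks into the first grammar's peeling nonterminals. This is more elementary and self-contained; one does not need to know (or verify) that the Turing-machine-to-PCP reduction can be made parsimonious. On the other hand, the paper's detour buys modularity---$\textsc{\#UCFG-Int}$ is stated and proved $\sharpP$-complete as a standalone lemma---and the unambiguity argument for the tile grammars is somewhat shorter than your length-based decomposition argument. Both arguments are fine; yours requires slightly more care in the unambiguity verification (which you supply), while the paper's requires more background machinery.
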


To prove this, we introduce an intermediate problem that captures the essentially hard part of solving CI problems with two grammar specifications.
\begin{definition}
{\ucfgint} is the problem of computing, given two UCFGs $G_1$ and $G_2$ over an alphabet $\Sigma$ and $n \in \N$ in unary, the number of words in $L(G_1) \cap L(G_2) \cap \Sigma^{\le n}$.
\end{definition}

Without the length bound, checking emptiness of the intersection of two CFGs is undecidable, as shown by \citeN{bar-hillel}.
We use a similar proof (closer to that of \citeN{floyd-ambiguity}) to establish the complexity of the bounded version.

\begin{lemma} \label{lemma:ucfgint-hardness}
{\ucfgint} is $\sharpP$-complete.
\end{lemma}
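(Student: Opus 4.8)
The plan is to establish both containment in $\sharpP$ and $\sharpP$-hardness. Containment is routine. For any context-free grammar $G$, deciding $w \in L(G)$ takes polynomial time (say by CYK parsing), and since $n$ is given in unary every word $w$ with $|w| \le n$ has length polynomial in the size of the instance. Hence the relation consisting of all pairs $\bigl((G_1, G_2, n),\, w\bigr)$ with $|w| \le n$, $w \in L(G_1)$, and $w \in L(G_2)$ is polynomial-time decidable with witnesses of polynomially bounded length, i.e.\ an $\NP$-relation, and the value {\ucfgint} asks for is exactly its number of witnesses on a given input. So {\ucfgint} $\in \sharpP$; unambiguity of $G_1$ and $G_2$ is not needed in this direction.

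For hardness I would reduce from \nbpcp, the problem of counting the solutions of a bounded Post Correspondence Problem instance, which is $\sharpP$-complete (its decision version \bpcp{} is $\NP$-complete, and the counting version follows by a parsimonious reduction from \sharpSAT; if necessary this is proved separately). The construction is the Post-correspondence encoding used by \citeN{bar-hillel} and \citeN{floyd-ambiguity}: given tiles $(u_1, v_1), \dots, (u_k, v_k)$ over an alphabet $\Gamma$, together with a bound $B$ (in unary) on the number of tiles in a solution, introduce fresh index symbols $1, \dots, k$ and a fresh separator $\#$ and form the linear grammars
\[ G_1\colon\ S \to i\, S\, u_i^R \mid i\, \#\, u_i^R , \qquad G_2\colon\ S \to i\, S\, v_i^R \mid i\, \#\, v_i^R , \]
one production of each kind for every $i \in [k]$, where $u_i^R$ denotes reversal. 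Then $L(G_1) = \{\, i_1 \cdots i_m\, \#\, u_{i_m}^R \cdots u_{i_1}^R \st m \ge 1,\ i_j \in [k] \,\}$ and likewise for $G_2$ with the $v$'s, so a word lies in $L(G_1) \cap L(G_2)$ precisely when the index block preceding its $\#$ (which can be read straight off the word) is a sequence $(i_1, \dots, i_m)$ with $u_{i_1}\cdots u_{i_m} = v_{i_1}\cdots v_{i_m}$, i.e.\ a PCP solution; this gives a bijection between solutions and words of the intersection.

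Two things then need checking. First, $G_1$ and $G_2$ are unambiguous: since index symbols and $\#$ occur in no $u_i$ or $v_i$, a derivation of any word is forced symbol by symbol --- the leading index symbol names the production, and whether the following symbol is another index or $\#$ decides between the recursive and the terminal production --- so every word has a unique parse. Second, the {\ucfgint} length bound $n$ must be chosen so that $L(G_1) \cap L(G_2) \cap \Sigma^{\le n}$ is exactly the set of words coding solutions within the \bpcp{} bound. The length of such a word is $m + 1 + |u_{i_1}\cdots u_{i_m}|$, which mixes the tile count with the produced string (empty $u_i$'s in particular decouple the two), so a naive choice of $n$ would over- or under-count. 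The fix is to insert after each index symbol a block $\natural^P$ of a fresh padding symbol, choosing $P$ polynomial in $B$ and the instance size but large enough that the word length determines the tile count $m$; then, taking $n$ to be the largest word length attainable by a solution with at most $B$ tiles, $\Sigma^{\le n}$ admits exactly those solutions. All grammars and parameters have polynomial size and $n$ is polynomial (hence writable in unary), so this is a polynomial-time parsimonious reduction, giving $\sharpP$-hardness.

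The step I expect to be the main obstacle is exactly this last one: because the length of a word in the encoding is not a function of the number of tiles alone, one must design the padding (and then re-verify that the padded grammars remain unambiguous and of polynomial size) so that the {\ucfgint} length bound corresponds precisely to the \bpcp{} bound; getting it slightly wrong makes the reduction count the wrong set. If one is content with a Turing rather than a many-one reduction, this can be sidestepped by recovering the count from a few {\ucfgint} queries with different length bounds, but the padding construction keeps the reduction parsimonious. The containment argument and the unambiguity check are routine by comparison.
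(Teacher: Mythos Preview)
Your proposal is correct and follows the same overall strategy as the paper --- reduce from the counting version of bounded PCP and encode it as the intersection of two unambiguous linear grammars. The difference lies in how the tile-count bound is enforced.

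You keep the grammars bound-free (a single nonterminal generating sequences of arbitrary length) and then recover the bound through the \ucfgint{} length parameter, which as you correctly note does not directly track the tile count; your padding trick repairs this, at the cost of the extra argument you flag as the main obstacle. The paper avoids this entirely by baking the bound into the grammars: it uses nonterminals $X_1,\dots,X_m$ (where $m$ is the tile bound) with $X_i \to x_j X_{i-1} t_j \mid X_1$ and $X_1 \to x_j t_j$, so that $X_m$ generates precisely the encodings of nonempty sequences of at most $m$ tiles. The length bound $n$ is then set large enough ($n = m(1+\max_j |x_j|)$) to be non-restrictive, and the reduction is parsimonious without any padding. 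This is a bit cleaner and sidesteps exactly the issue you anticipated. On the other hand, your construction has the minor virtue that the grammars do not depend on the bound.

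One further point: the paper does spell out why \nbpcp{} is $\sharpP$-complete, observing that the standard Turing-machine-to-PCP reduction becomes parsimonious once one canonicalizes the order in which the accepting state consumes the remaining tape symbols; you defer this, but it does need an argument.
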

\begin{proof}
All the words in $L(G_1) \cap L(G_2) \cap \Sigma^n$ have length polynomial in the size of the input, and checking whether a word is in the set can be done in polynomial time.
So $\ucfgint \in \sharpP$.

To show hardness, we give a reduction to {\ucfgint} from {\nbpcp}, the counting version of the bounded Post correspondence problem.
Recall that {\bpcp} instances are ordinary PCP instances together with a bound $m \in \N$ in unary on the total number of tiles that may be used in a solution.
In the usual construction simulating a Turing machine $M$ with PCP tiles (see for example \citeN{hmu}), an accepting computation history for $M$ leads to a PCP solution using a number of tiles linear in the size of the history.
So the problem of checking whether $M$ accepts a word in at most a number of steps given in unary reduces to {\bpcp}, and thus the latter is $\NP$-complete (as noted in a slightly different form by \citeN{constable-hunt-sahni}).
Furthermore, this reduction is almost parsimonious: the computation history uniquely determines the PCP solution until the accepting state is reached, when (at least in the reduction given in \citeN{hmu}) the accepting state can ``consume'' tape symbols on either side of it in any order.
However, we can easily modify the construction so that the symbols are consumed in a canonical order.
In the terminology of \citeN{hmu}, we replace the ``type-4'' tiles with tiles of the form $(qX, q)$ and $(Xq\#, q\#)$ for all accepting states $q$ and tape symbols $X$.
This forces tape symbols to be consumed from the right of $q$ first, and only allows consumption from the left once there are no tape symbols to the right (so that $q$ is adjacent to the $\#$ marking the end of the element of the history).
Now there is a one-to-one correspondence between accepting computation histories of $M$ and solutions to the {\bpcp} instance, so the counting problem {\nbpcp} is $\sharpP$-complete.

Given an instance $\mathcal{P}$ of ${\nbpcp}$ with tiles $(x_1, y_1), \dots, (x_k, y_k)$ over an alphabet $\Sigma'$ and a bound $m$, we construct grammars $X$ and $Y$ along the lines of \citeN{bar-hillel} and \citeN{floyd-ambiguity}.
The grammars use an alphabet $\Sigma$ consisting of $\Sigma'$ together with additional symbols $t_1, \dots, t_k$, and have nonterminal symbols $X_i$ and $Y_i$ respectively for $1 \le i \le m$.
Their start symbols are $X_m$ and $Y_m$ respectively, and they have the following productions:
\begin{align*}
X_i & \rightarrow x_1 X_{i-1} t_1 \st \cdots \st x_k X_{i-1} t_k \st X_1 & 2 \le i \le m \\
X_1 & \rightarrow x_1 t_1 \st \cdots \st x_k t_k \\
Y_i & \rightarrow y_1 Y_{i-1} t_1 \;\st \cdots \st \, y_k Y_{i-1} t_k \:\st Y_1 & 2 \le i \le m \\
Y_1 & \rightarrow y_1 t_1 \;|\; \cdots \st y_k t_k
\end{align*}
An easy induction shows that the words derivable from $X_i$ (respectively, $Y_i$) correspond to nonempty sequences of at most $i$ tiles.
Thus there is a one-to-one correspondence between words in $L(X) \cap L(Y)$ and solutions of $\mathcal{P}$.
Furthermore, both grammars are unambiguous, since the sequence of $t_i$ symbols uniquely determines the derivation tree.
Finally, all words in $L(X)$ have length at most $n = m (1 + \max(|x_1|, \dots, |x_k|))$, so $|L(X) \cap L(Y)| = |L(X) \cap L(Y) \cap \Sigma^{\le n}|$.
Therefore $(X, Y, 1^n)$ is a {\ucfgint} instance with the same number of solutions as $\mathcal{P}$.
This reduction clearly can be done in polynomial time, so {\ucfgint} is $\sharpP$-hard.
\end{proof}

Now we may complete the hardness proof for \cic{\UCFG}{\UCFG}.
\begin{proof}[of Theorem \ref{theorem:2ucfg-hardness}]
We reduce {\ucfgint} to checking feasibility of a \cic{\UCFG}{\UCFG} instance, along the same lines as Theorem \ref{theorem:nfa-hardness}.
Given a {\ucfgint} instance $(\hard, \soft, 1^n)$ and fixing some $N \in \N$, consider the CI instance $\mathcal{C}_N = (\hard, \soft, 0, n, 0, 0, 1/N)$.
For this instance we have $\improvs = L(\hard) \cap \Sigma^{\le n}$ and $\valids = L(\hard) \cap L(\soft) \cap \Sigma^{\le n}$.
By Theorem \ref{theorem:feasibility}, $\mathcal{C}_N$ is feasible iff $|\valids| \ge N$, so we can determine if $|L(\hard) \cap L(\soft) \cap \Sigma^{\le n}| \ge N$ with a feasibility query.
Since $|L(\hard) \cap L(\soft) \cap \Sigma^{\le n}| = O(|\Sigma|^n)$, we can compute $|L(\hard) \cap L(\soft) \cap \Sigma^{\le n}|$ with polynomially-many such queries by binary search.
This gives a polynomial-time (Cook) reduction from {\ucfgint} to $\cic{\UCFG}{\UCFG}$, so by Lemma \ref{lemma:ucfgint-hardness} the latter is $\sharpP$-hard.
\end{proof}

\section{Symbolic Specifications} \label{sec:symbolic}

In this section we discuss an important class of specifications given by existentially-quantified Boolean formulas.
We begin by defining the class and outline how it includes a widely-used succinct encoding of large finite-state automata.
Then we show that although the CI problem is $\sharpP$-hard with these specifications, it can be solved approximately using only an $\NP$ oracle.
This means that some CI problems with very large automata can still be solved in practice using {\SAT} solvers.

The specifications we will consider are Boolean formulas with auxiliary variables, where a word $w$ (encoded in binary) is in the language iff the formula is satisfiable after plugging in $w$.
\begin{definition}
Fixing an alphabet $\Sigma = \{0,1\}^k$ and a length $n \in \N$, a \emph{symbolic specification} is a Boolean formula $\phi(\overline{w}, \overline{a})$ where $\overline{w}$ is a vector of $nk$ variables and $\overline{a}$ is a vector of zero or more variables.
The language of $\phi$ consists of all $\overline{w} \in \Sigma^n$ such that $\exists \overline{a} \, \phi(\overline{w}, \overline{a})$ is true.
The class of symbolic specifications is denoted {\symb}.
\end{definition}

\begin{remark}
The restriction that all words in the language of any symbolic specification have a fixed length $n$ is for later convenience when intersecting two specifications.
We can always pad shorter words out to length $n$ with a dummy symbol added to the alphabet, altering the formula $\phi$ as appropriate.
\end{remark}

Symbolic specifications arise in the analysis of transition systems, and in practice can be useful even for DFAs in situations where there is insufficient memory to store full transition tables.
This is not uncommon in practice, because specifications are often built up as the conjunction of many small pieces.
To use our earlier improvisation scheme for DFAs, we would need to explicitly construct the product of these automata, which could be exponentially large.
An implicit representation of the product by Boolean formulas, in contrast, will have linear size.

In bounded model checking \cite{bmc}, DFAs and NFAs can be represented by formulas as follows:
\begin{definition}
A \emph{symbolic automaton} is a transition system over states $S \subseteq \{0,1\}^m$ and inputs $\Sigma \subseteq \{0,1\}^k$ represented by:
\begin{itemize}
\item a formula $\mathrm{init}(\overline{x})$ which is true iff $\overline{x} \in \{0,1\}^m$ is an initial state,
\item a formula $\mathrm{acc}(\overline{x})$ which is true iff $\overline{x} \in \{0,1\}^m$ is an accepting state, and
\item a formula $\delta(\overline{x}, \overline{c}, \overline{y})$ which is true iff there is a transition from $\overline{x} \in \{0,1\}^m$ to $\overline{y} \in \{0,1\}^m$ on input $\overline{c} \in \{0,1\}^k$ .
\end{itemize}
A symbolic automaton accepts words in $\Sigma^*$ according to the usual definition for NFAs.
\end{definition}

Given a symbolic automaton $\mathcal{D}$ and a length bound $n \in \N$, it is straightforward to generate a symbolic specification $\phi$ such that $L(\phi) = L(\mathcal{D}) \cap \Sigma^n$.
The auxiliary variables of $\phi$ encode accepting paths of length $n$ for a given word (see \citeN{bmc} for details), and existentially quantifying them yields a formula whose models correspond exactly to accepting words of length $n$.
So for the rest of the section we focus on symbolic specifications, but our results apply to symbolic automata in particular.

Since symbolic specifications can be arbitrary Boolean formulas, checking feasibility of CI problems involving them is obviously $\sharpP$-hard.
In the other direction, deciding membership of a word in a symbolic specification can be done with an $\NP$ oracle, so by Theorem \ref{theorem:sharpP-bound} there is a polynomial-time improvisation scheme relative to a $\sharpP$ oracle (indeed, this holds even if we extend the definition of symbolic specification to allow a larger but bounded number of quantifiers).
However, this scheme is not much use in practice, since $\sharpP$-complete problems are still very difficult to solve.
Instead, given the recent dramatic progress in {\SAT} solvers, one can ask what is possible using only an $\NP$ oracle.
As feasibility checking is $\sharpP$-hard, by Toda's theorem an $\NP$ oracle does not let us solve the CI problem exactly unless $\PH$ collapses.
However, we will show that an $\NP$ oracle does suffice to \emph{approximately} solve the CI problem with symbolic specifications.

When discussing approximate solutions to a CI instance $\mathcal{C}$, it will be useful to refer to improvisers for different values of $\epsilon$, $\lambda$, and $\rho$ than those required by $\mathcal{C}$.
Thus we may speak of an \emph{$(\epsilon, \lambda, \rho)$-improvising distribution}, \emph{$(\epsilon, \lambda, \rho)$-feasibility}, and so forth, and these have the obvious definitions.

Our method for solving CI problems with symbolic automata does not use the generic procedure of Theorem \ref{theorem:generic-scheme}, since it depends on comparing the sizes of $\improvs$ and $\valids$ and breaks down if these are only estimated (also, {\symb} is not closed under differences).
However, we still depend on techniques for uniform sampling, in this case approximate sampling of solutions to Boolean formulas using an {\NP} oracle \cite{jvv,bgp,unigen}.

\begin{lemma} \label{lemma:symbolic-samp}
There is a probabilistic algorithm with an {\NP} oracle that given a symbolic specification $\phi$ and any $\tau > 0$, returns a random sample from $L(\phi)$ that is uniform up to a factor of $1+\tau$ in expected time polynomial in $|\phi|$ and $1/\tau$ relative to the oracle.
\end{lemma}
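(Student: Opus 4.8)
The plan is to reduce the task to almost-uniform sampling of satisfying assignments of a Boolean formula relative to an $\NP$ oracle, a problem solved by the universal-hashing algorithms of \citeN{jvv}, \citeN{bgp}, and \citeN{unigen}. The one real subtlety is \emph{projection}: by definition $L(\phi)$ consists of the $\overline{w}$-components of the models of $\phi(\overline{w},\overline{a})$, so sampling a model $(\overline{w},\overline{a})$ uniformly and discarding $\overline{a}$ would weight each $\overline{w}$ by its number of compatible assignments to $\overline{a}$, which is in general far from uniform. So first I would restate the goal as: sample almost uniformly from $R_\phi = \{ \overline{w} \st \exists \overline{a}\; \phi(\overline{w},\overline{a}) \}$, i.e., from the projection of the model set of $\phi$ onto the ``sampling variables'' $\overline{w}$. (We may assume $L(\phi) = R_\phi \ne \emptyset$, since emptiness is detectable with a single oracle query, and the sampler is only invoked in that case.)

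The key point that lets a single $\NP$ oracle do the work is that the hashing machinery interacts with $R_\phi$ only through satisfiability questions of the form ``is there a $\overline{w}$ in a given range, landing in the cell $h^{-1}(\overline{0})$ of a random linear hash $h$ applied to $\overline{w}$ alone, with $\overline{w} \in R_\phi$?'' --- and, for the counting step, a bounded number of such $\overline{w}$ simultaneously. Rewriting $\overline{w} \in R_\phi$ as $\exists \overline{a}\, \phi(\overline{w},\overline{a})$, each such question unfolds into a single satisfiability instance over $(\overline{w},\overline{a})$ together with the (polynomially sized, poly-time checkable) range and hash constraints on $\overline{w}$. Thus, although $R_\phi$ is not a $p$-relation in the strict sense of \citeN{jvv} and \citeN{bgp} (membership in $R_\phi$ is $\NP$- rather than $\P$-decidable), relativizing their constructions here does not push us up to $\Sigma_2$: every oracle call stays in $\NP$. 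Equivalently, projected almost-uniform sampling over $\overline{w}$ is exactly what one obtains by running $\UniGen$-style sampling with the independent support fixed to $\overline{w}$ \cite{unigen}.

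Concretely, the algorithm I would present takes the standard shape. First, run an approximate model counter ($\ApproxMC$-style, itself polynomial-time with an $\NP$ oracle, with queries of the single-$\NP$ form above) to get an estimate $\widehat{c}$ of $|R_\phi|$ within a constant factor. Next, draw a random linear hash $h : \{0,1\}^{nk} \to \{0,1\}^{\ell}$ with $\ell \approx \log \widehat{c}$, so the cell $h^{-1}(\overline{0})$ contains, in expectation, a bounded but nonnegligible number of elements of $R_\phi$; using prefix search on $\overline{w}$ with the $\NP$ oracle, enumerate all elements of $R_\phi \cap h^{-1}(\overline{0})$, declaring failure and retrying with fresh hash randomness if the cell is empty or exceeds the threshold. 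Finally, output a uniformly chosen element of the (nonempty, small) enumerated cell. The analyses of \citeN{jvv}, \citeN{bgp}, and \citeN{unigen} show that, with the cell-size threshold and the number of independent repetitions tuned as functions of $\tau$, the output distribution is uniform over $R_\phi$ up to a factor of $1+\tau$, the expected number of iterations is $O(1)$, and each iteration makes polynomially many $\NP$-oracle queries on instances of size polynomial in $|\phi|$ and $1/\tau$; hence the whole procedure runs in expected time polynomial in $|\phi|$ and $1/\tau$ relative to the oracle.

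The step I expect to be the main obstacle is precisely the projection bookkeeping: arguing that the auxiliary variables $\overline{a}$ cannot skew the distribution, and checking carefully that hashing only the $\overline{w}$-variables while existentially quantifying $\overline{a}$ leaves every oracle query inside $\NP$ rather than $\Sigma_2$. Once that is settled, the rest is a routine adaptation of the well-known $\SAT$-oracle sampling constructions, and the accounting for the $1+\tau$ factor and the polynomial running time is standard.
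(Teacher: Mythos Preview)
Your proposal is correct and follows essentially the same approach as the paper: reduce to projected almost-uniform sampling, apply the hashing-based samplers of \citeN{jvv}, \citeN{bgp}, and \citeN{unigen} with the hash restricted to the $\overline{w}$ variables, and observe that the resulting oracle queries remain in $\NP$ because the existential over $\overline{a}$ folds into the satisfiability query. The paper additionally notes a minor technical point you omit---that \UniGen\ as stated only achieves $\tau > 6.84$, so for arbitrary $\tau$ one falls back on the (exact, but less scalable) \citeN{bgp} algorithm with the projection modification---but this does not affect the correctness of your argument.
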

\begin{proof}
By our definition above, words in the language of $\phi$ are assignments to the $\overline{w}$ variables of $\phi(\overline{w}, \overline{a})$ that can be extended to a complete satisfying assignment.
For an arbitrary formula $\phi$, counting the number of such words is exactly the problem {\NSAT} introduced by \citeN{valiant-sharpP} and also called \emph{projected model counting} (since it counts models projected onto a subset of variables).
Sampling from $L(\phi)$ is therefore equivalent to (approximate) uniform sampling
of the models of $\phi(\overline{w}, \overline{a})$ projected onto the variables $\overline{w}$.
This can be done with the algorithm of \citeN{unigen}, which will run in the time required relative to an {\NP} oracle.
Unfortunately for technical reasons the algorithm works only for $\tau > 6.84$.
For general $\tau$, we can modify the algorithm of \citeN{bgp} to do projection sampling (the modification is as described in \citeN{unigen}: apply the hash function only to the $\overline{w}$ variables, and when enumerating solutions using the {\NP} oracle block all solutions that agree on the $\overline{w}$ variables with the solutions already enumerated).
This algorithm has a constant probability of failing by returning $\bot$ instead of a sample.
If that happens we simply retry until the algorithm succeeds, and this only increases the expected runtime by a constant factor.
We note that the algorithm of \citeN{bgp} actually samples \emph{exactly} uniformly at random, and so would allow us to get somewhat better performance in theory, but with current {\SAT} solvers the algorithm does not scale (unlike the algorithm of \citeN{unigen}).
\end{proof}

Now we are ready to describe our approximate improvisation algorithm.
The algorithm avoids complementing $\soft$ (needed above to count and sample from $\improvs \setminus \valids$), and in fact does not require counting at all.
In exchange, for a desired $\epsilon$ the algorithm may not return an improviser achieving the best possible $\lambda$ and $\rho$, but the suboptimality is bounded.
\begin{theorem}
There is a procedure that given any $\tau > 0$ and a feasible CI problem $\mathcal{C} \in \cic{\symb}{\symb}$ with $\mathcal{C} = (\hard, \soft, n, n, \epsilon, \lambda, \rho)$, returns an $(\epsilon, \epsilon \lambda / (1+\tau), \rho (1+\epsilon)(1+\tau))$-improviser.
Furthermore, the procedure and the improvisers it generates run in expected time given by some fixed polynomial in $|\mathcal{C}|$ and $1/\tau$ relative to an {\NP} oracle.
\end{theorem}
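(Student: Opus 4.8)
\textit{Proof proposal.} The plan is to sidestep the generic scheme of Theorem~\ref{theorem:generic-scheme} (which needs counting and complementation) and instead build an improviser of a very simple mixture form, relying only on the approximate sampler of Lemma~\ref{lemma:symbolic-samp}. Since {\symb} is closed under intersection --- given $\hard(\overline{w},\overline{a}_1)$ and $\soft(\overline{w},\overline{a}_2)$ with disjoint auxiliary variables, the conjunction $\psi = \hard \wedge \soft$ has $L(\psi) = L(\hard) \cap L(\soft)$ --- and since $m = n$ (together with feasibility) forces $\improvs = L(\hard)$ and hence $\valids = \improvs \cap L(\soft) = L(\psi)$, we can in linear deterministic time produce symbolic specifications for both $\improvs$ and $\valids$. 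The improviser $G$ then works as follows: with probability $1-\epsilon$ it returns a sample from $L(\psi) = \valids$ obtained by running the sampler of Lemma~\ref{lemma:symbolic-samp} on $\psi$ with parameter $\tau$, and with probability $\epsilon$ it returns a sample from $L(\hard) = \improvs$ obtained by running the same sampler on $\hard$ with parameter $\tau$. This uses neither counting nor the complement of $\soft$, which is exactly what keeps us within an {\NP} oracle.

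It then remains to verify the three requirements for the target parameters. Write $\mu_w$ for the probability that the $\improvs$-sampler outputs $w$ and $\nu_w$ for the probability the $\valids$-sampler outputs $w$; Lemma~\ref{lemma:symbolic-samp} gives $1/((1+\tau)|\improvs|) \le \mu_w \le (1+\tau)/|\improvs|$ for $w \in \improvs$, and similarly $1/((1+\tau)|\valids|) \le \nu_w \le (1+\tau)/|\valids|$ for $w \in \valids$. The output probability of $G$ is $(1-\epsilon)\nu_w + \epsilon\mu_w$ for $w \in \valids$ and $\epsilon\mu_w$ for $w \in \improvs \setminus \valids$. The hard constraint is immediate since both samplers produce only words in $\improvs$. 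For the soft constraint, the probability the output lies in $\valids$ is at least $(1-\epsilon)\sum_{w \in \valids}\nu_w = 1-\epsilon$. For the randomness bounds I would combine the displayed estimates with the feasibility inequalities of Theorem~\ref{theorem:feasibility}: the lower bound uses $|\improvs| \le 1/\lambda$, giving $\epsilon\mu_w \ge \epsilon\lambda/(1+\tau)$ for every $w \in \improvs$; the upper bound uses $|\improvs| \ge 1/\rho$ and $|\valids| \ge (1-\epsilon)/\rho$, giving $(1-\epsilon)\nu_w + \epsilon\mu_w \le (1-\epsilon)(1+\tau)\tfrac{\rho}{1-\epsilon} + \epsilon(1+\tau)\rho = \rho(1+\epsilon)(1+\tau)$ for $w \in \valids$, and $\epsilon\mu_w \le \epsilon(1+\tau)\rho \le \rho(1+\epsilon)(1+\tau)$ for $w \in \improvs\setminus\valids$. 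This is exactly the claimed $(\epsilon,\ \epsilon\lambda/(1+\tau),\ \rho(1+\epsilon)(1+\tau))$-improviser.

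The rest is edge cases and runtime. When $\lambda = 0$ the lower bound is vacuous; feasibility rules out $\improvs = \emptyset$, so the $\improvs$-sampler is always well defined. The one genuinely delicate point is that the $\valids$-sampler must be invoked only on a nonempty language: if $\valids = \emptyset$ then the feasibility inequality $(1-\epsilon)/\rho \le |\valids|$ forces $\epsilon = 1$, in which case $G$ can simply always sample from $\hard$, and the bounds still hold since $\epsilon\lambda/(1+\tau) = \lambda/(1+\tau) \le \mu_w \le (1+\tau)\rho \le \rho(1+\epsilon)(1+\tau)$. For runtime, building $\psi$ is linear, sampling a $(1-\epsilon)$-biased coin takes time polynomial in the bitwidth of $\epsilon$, and each call to Lemma~\ref{lemma:symbolic-samp} runs in expected time polynomial in $|\hard|, |\soft| \le |\mathcal{C}|$ and $1/\tau$ relative to an {\NP} oracle; hence so do the procedure and the generated improvisers. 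I expect the main obstacle to be essentially bookkeeping --- making the approximation-factor arithmetic land on exactly $(1+\epsilon)(1+\tau)$ rather than something looser --- together with cleanly dispatching the $\valids = \emptyset$ case so that Lemma~\ref{lemma:symbolic-samp} is never applied to an empty language.
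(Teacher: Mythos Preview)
Your proposal is correct and follows essentially the same approach as the paper: build the conjunction for $\valids$, mix an approximate sampler for $\valids$ (weight $1-\epsilon$) with one for $\improvs$ (weight $\epsilon$), and derive the bounds from the feasibility inequalities $|\improvs|\le 1/\lambda$, $|\improvs|\ge 1/\rho$, and $|\valids|\ge(1-\epsilon)/\rho$ exactly as you do. Your treatment is in fact slightly more thorough than the paper's, which does not explicitly dispatch the $\valids=\emptyset$ edge case; one tiny quibble is that $\improvs=L(\hard)$ follows simply from the fact that symbolic specifications have all words of length exactly $n$ together with $m=n$, not from feasibility.
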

\begin{proof}
Conjoin $\hard$ and $\soft$ (renaming any shared auxiliary variables) to produce a symbolic specification $\mathcal{\valids}$ such that $L(\mathcal{\valids}) = L(\hard) \cap L(\soft) = \valids$.
We build a probabilistic algorithm $G^{\NP}$ that approximately samples uniformly from $\valids$ with probability $1-\epsilon$ and from $\improvs$ with probability $\epsilon$.
By Lemma \ref{lemma:symbolic-samp}, we can do this with tolerance $\tau$ in polynomial expected time relative to the {\NP} oracle.

By definition, $G$ always returns an element of $I$, and returns an element of $A$ with probability at least $1-\epsilon$.
Since $\mathcal{C}$ is feasible, by Theorem \ref{theorem:feasibility} we have $1/\rho \le |\improvs| \le 1/\lambda$ and $(1-\epsilon)/\rho \le |\valids|$.
Now for any $w \in \valids$, we have $\Pr[w \leftarrow G] \ge (1-\epsilon)(1 / |\valids|) / (1+\tau) + \epsilon (1 / |\improvs|) / (1+\tau) \ge \lambda / (1+\tau)$, while for any $w \in \improvs \setminus \valids$ we have $\Pr[w \leftarrow G] \ge \epsilon (1 / |\improvs|) / (1 + \tau) \ge \epsilon \lambda / (1 + \tau)$.
Similarly, for any $w \in \valids$ we have $\Pr[w \leftarrow G] \le (1-\epsilon)(1 / |\valids|) (1+\tau) + \epsilon (1 / |\improvs|) (1+\tau) \le \rho (1 + \epsilon) (1+\tau)$, while for any $w \in \improvs \setminus \valids$ we have $\Pr[w \leftarrow G] \le \epsilon (1 / |\improvs|) (1+\tau) \le \epsilon \rho (1+\tau)$.
So $G$ is an $(\epsilon, \epsilon \lambda / (1+\tau), \rho (1+\epsilon) (1 + \tau))$-improviser.
\end{proof}

Therefore, it is possible to {\em approximately} solve the control
improvisation problem when the specifications are given by a succinct
propositional formula representation. This allows working with general
NFAs, and very large automata that cannot be stored explicitly, but
comes at the cost of using a {\SAT} solver (perhaps not a heavy cost
given the dramatic advances in the capacity of {\SAT} solvers) and
having to loosen the randomness requirement somewhat.

\section{Multiple Soft Constraints} \label{sec:multiple-soft}

In this section, we discuss a generalized problem, \emph{multi-constraint control improvisation (MCI)} where multiple soft constraints are allowed.
We introduced an earlier form of this extension in \citeN{akkaya-iotdi}, but did not theoretically analyze it.
Here we give a complete definition and a partial analysis.
First, the conditions under which an instance is feasible seem to be much more complex, and we do not have a simple form for them.
However, we can give an {\EXPTIME} algorithm to decide feasibility, in fact an exponential-time improvisation scheme.
Second, we show that even with DFA specifications, checking feasibility with multiple soft constraints is $\sharpP$-hard.

Observe that the definition of control improvisation we have used so far effectively permits multiple hard constraints: requiring that several specifications $\hard_1, \dots, \hard_k$ should all hold with probability 1 is equivalent to requiring that a single product specification $\hard$ hold with probability 1.
This property does not hold for soft constraints: requiring that both $\Pr[ w \in A_1 ] \ge 1/2$ and $\Pr[ w \in A_2 ] \ge 1/2$ is not in general equivalent to $\Pr[ w \in A ] \ge p$ for \emph{any} property $A$ and probability $p$.
Furthermore, there are situations such as the lighting control application of \citeN{akkaya-iotdi} where the natural formalization of the problem uses multiple soft constraints.
In this sense the asymmetry of the CI definition may be limiting.
 
The generalized definition we propose extends the basic one in a straightforward way: the soft specification $\soft$ and corresponding error probability $\epsilon$ are replaced by several specifications $\soft_1, \dots, \soft_k$ and error probabilities $\epsilon_1, \dots, \epsilon_k$.

\begin{definition} \label{defn:multi-improvs}
Fix a \emph{hard specification} $\hard$, \emph{soft specifications} $\soft_1, \dots, \soft_k$, and length bounds $m, n \in \N$.
An \emph{improvisation} is any word $w \in L(\hard)$ such that $m \le |w| \le n$, and we write $\improvs$ for the set of all improvisations.
An improvisation $w \in \improvs$ is $i$-\emph{admissible} if $w \in L(\soft_i)$, and we write $\valids_i$ for the set of all $i$-admissible improvisations.
\end{definition}

\begin{definition}
Given $\mathcal{C} = (\hard, \soft_1, \dots, \soft_k, m, n, \epsilon_1, \dots, \epsilon_k, \lambda, \rho)$
with $\hard$, $\soft_i$, $m$, and $n$ as in
Definition~\ref{defn:multi-improvs}, $\epsilon_i \in [0,1]
\cap \Q$ error probabilities, and $\lambda, \rho \in [0,1] \cap \Q$
probability 
bounds, a distribution $D : \Sigma^*
\rightarrow [0,1]$ is an
\emph{improvising distribution} if it satisfies the following requirements:
\begin{describe}{\emph{Hard constraint:}}
\item[\emph{Hard constraint:}] $\Pr [w \in \improvs \st w \leftarrow D ] = 1$
\item[\emph{Soft constraints:}] $\Pr[ w \in \valids_i \st w \leftarrow D ] \ge 1 - \epsilon_i \quad \forall i \in [k]$
\item[\emph{Randomness:}] $\forall w \in I$, $\; \lambda \le D(w) \le \rho$
\end{describe}
Feasibility, improvisers, and improvisation schemes are defined in terms of improvising distributions exactly as in Definitions \ref{defn:feasible} and \ref{defn:scheme}.
\end{definition}

\begin{definition}
Given $\mathcal{C} = (\hard, \soft_1, \dots, \soft_k, m, n, \epsilon_1, \dots, \epsilon_k, \lambda, \rho)$,
the \emph{multi-constraint control improvisation (MCI)} problem is to decide whether
$\mathcal{C}$ is feasible, and if so to generate an improviser for
$\mathcal{C}$.
The \emph{size} $|\mathcal{C}|$ of an MCI instance is measured as for CI instances.
\end{definition}

\begin{definition}
If $\mathcal{A}$ and $\mathcal{B}$ are classes of specifications, $\mcic{\mathcal{A}}{\mathcal{B}}$ is the class of MCI instances $\mathcal{C} = (\hard, \soft_1, \dots, \soft_k, m, n, \epsilon_1, \dots, \epsilon_k, \lambda, \rho)$ where $\hard \in \mathcal{A}$ and $\soft_i \in \mathcal{B}$ for all $i \in [k]$.
When discussing decision problems, we use the same notation for the feasibility problem associated with the class (i.e. given $\mathcal{C} \in \mcic{\mathcal{A}}{\mathcal{B}}$, decide whether it is feasible).
\end{definition}

\begin{remark}
Note that while the MCI definition treats the multiple soft constraints conjunctively (i.e. \emph{every} constraint must hold), if the type of specification used supports Boolean operations then other types of soft constraint can be brought to this form.
For example, the requirement $\Pr[w \in A_1] < 3/4 \implies \Pr[w \in A_2 \land w \in A_3] \ge 1/5$ can be rewritten $\Pr[w \in \overline{A_1}] \ge 1/4 \lor \Pr[w \in (A_2 \cap A_3)] \ge 1/5$, and then each disjunct tested for feasibility separately.
In general, we can reduce Boolean combinations of specifications inside probabilities to single specifications and write the resulting constraint in disjunctive normal form.
Each disjunct is then an MCI instance (ignoring the very minor issue of strict vs. non-strict inequalities), and the original problem with a complex soft constraint is feasible iff one of the disjuncts is.
This transformation could of course blow up the size of the problem exponentially --- the point is that slightly more complex soft constraints can be handled within the MCI framework.
\end{remark}

As was mentioned above, we do not know of a simple generalization of Theorem \ref{theorem:feasibility} giving necessary and sufficient conditions for the feasibility of an MCI instance.
It is straightforward to see that feasibility requires bounds not just on the sizes of the sets $\valids_i$ individually but also on the sizes of their intersections, pairwise, 3-wise, and so forth.
There are exponentially many such intersections, and so it is unclear whether there is a concise way to represent the needed conditions.
For a fixed $k$, however, we can directly write down the requirements on an improvising distribution as an exponentially-large linear program, and this gives an algorithm for feasibility and an improvisation scheme.

\begin{theorem}
If $\textsc{Spec}$ is a class of specifications for which membership can be decided in exponential time, then there is an exponential-time improvisation scheme for \mcic{\textsc{Spec}}{\textsc{Spec}}.
\end{theorem}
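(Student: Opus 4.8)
\textit{Proof plan.}
The plan is to mimic the generic scheme of Theorem~\ref{theorem:generic-scheme} in spirit, but since we have no closed-form feasibility criterion for MCI to play the role of Theorem~\ref{theorem:feasibility}, I would instead reduce feasibility to an exponentially-large linear program over the probability masses assigned to ``admissibility patterns'' of improvisations. Concretely, first enumerate the improvisations: because $n$ is given in unary there are only $O(|\Sigma|^{n+1}) = 2^{\mathrm{poly}(|\mathcal{C}|)}$ words of length at most $n$, and for each such word $w$ we can test $w \in L(\hard)$ and $w \in L(\soft_i)$ for every $i \in [k]$ using the assumed exponential-time membership algorithm ($k \le |\mathcal{C}|$ tests per word, each taking time exponential in $|\mathcal{C}|$). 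This lets us compute for every $w \in \improvs$ its \emph{pattern} $T(w) = \{ i \in [k] : w \in L(\soft_i) \} \subseteq [k]$, partition $\improvs$ into the classes $\improvs_T = \{ w \in \improvs : T(w) = T \}$, and record both the member list and the size $N_T = |\improvs_T|$ of each nonempty class, all in time exponential in $|\mathcal{C}|$.

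The key structural observation is that it suffices to search for an improvising distribution that is uniform on each class $\improvs_T$: given any improvising distribution $D$, replacing each value $D(w)$ by the average of $D$ over the class $\improvs_{T(w)}$ preserves the total mass of every class, hence preserves the hard constraint and every soft constraint (since $\Pr[w \in \valids_i] = \sum_{T \ni i}(\text{mass of }\improvs_T)$), and keeps every probability in $[\lambda,\rho]$ because an average of values in $[\lambda,\rho]$ lies in $[\lambda,\rho]$. Thus $\mathcal{C}$ is feasible if and only if there is a vector $(p_T)_{T : N_T > 0}$ with $p_T \ge 0$, $\sum_T p_T = 1$, $N_T \lambda \le p_T \le N_T \rho$ for each $T$, and $\sum_{T \ni i} p_T \ge 1 - \epsilon_i$ for each $i \in [k]$; and from any such vector we read off an improviser that first picks a class $T$ with probability $p_T$ and then returns a uniform random element of the stored list $\improvs_T$. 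This is a linear feasibility problem with at most $\min(2^k, |\improvs|)$ variables (one could even use one variable per improvisation; the grouping merely shrinks the dimension) and $O(2^k + k)$ constraints, all of whose coefficients ($N_T$, $\epsilon_i$, $\lambda$, $\rho$, and constants) have bitwidth polynomial in $|\mathcal{C}|$, so the whole LP has size exponential in $|\mathcal{C}|$.

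Finally I would solve this LP: linear-programming feasibility together with extraction of a rational solution point can be done in time polynomial in the size of the LP (e.g. by the ellipsoid method; note the polytope is bounded, being contained in a scaled simplex), hence in time exponential in $|\mathcal{C}|$, and the returned point has bitwidth polynomial in the LP size, i.e. exponential in $|\mathcal{C}|$. If the LP is infeasible we output $\bot$; otherwise we output the improviser $G$ above, which stores the $p_T$'s and the class lists (exponential space) and runs in expected time exponential in $|\mathcal{C}|$ --- sampling a class from a rational distribution over $\le 2^k$ outcomes of exponential bitwidth takes expected exponential time, and sampling a uniform element of a list of $\le |\Sigma|^{n+1}$ words takes expected polynomial time --- so the whole construction is an exponential-time improvisation scheme in the sense of Definition~\ref{defn:scheme} with ``exponential'' in place of ``polynomial'', and correctness of $G$ is immediate from the averaging argument. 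The step I expect to be the main obstacle is the second one: recognizing that collapsing each admissibility pattern to a single probability mass turns the a priori unbounded-dimensional search for an improvising distribution into a finite LP, together with the routine but essential bookkeeping that the number of variables, the coefficient bitwidths, the solution bitwidths, and the sampling costs all stay within a single exponential of $|\mathcal{C}|$ rather than blowing up further.
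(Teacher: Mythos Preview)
Your proposal is correct and essentially identical to the paper's proof: both partition $\improvs$ by admissibility pattern $M \subseteq [k]$, set up an exponential-size linear program in the class masses $p_M$ with constraints $p_M \in [\lambda |A'_M|, \rho |A'_M|]$, $\sum_M p_M = 1$, and $\sum_{M \ni i} p_M \ge 1-\epsilon_i$, solve it in time polynomial in its (exponential) size, and read off a class-uniform improviser. Your averaging argument for why class-uniform distributions suffice is a slightly cleaner justification than the paper's direct two-direction verification, and your bitwidth bookkeeping is more explicit, but the substance is the same.
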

\begin{proof}
Say we are given an instance $\mathcal{C} \in \mcic{\textsc{Spec}}{\textsc{Spec}}$.
For any nonempty $M \subseteq [k]$, define $A'_M = \cap_{i \in M} A_i \setminus \cup_{i \not \in M} A_i$, i.e. the improvisations that are $i$-admissible for exactly those $i \in M$.
We also define $A'_\emptyset = \improvs \setminus \cup_i A_i$.
   Then the sets $A'_M$ for all $M \subseteq [k]$ are disjoint and partition $\improvs$, and $A_i = \cup_{M \ni i} A'_M$ for all $i \in [k]$.
   We will construct an improvising distribution by picking total probabilities $p_M$ for each set $A'_M$, and distributing this probability uniformly over the set.
 
 We build a linear program $\mathcal{P}$ over the variables $p_M$, $M \subseteq [k]$.
 In order to have a valid distribution, we require $p_M \ge 0$ and $\sum_{M} p_M \le 1$.
 Our distribution will automatically satisfy the hard constraint since $A'_M \subseteq \improvs$ for every $M \subseteq [k]$.
 To satisfy the soft constraints we require for each $i \in [k]$ that $\sum_{M \ni i} p_M \ge 1 - \epsilon_i$.
  Finally, since by the randomness requirement every element of $\improvs$ must have probability between $\lambda$ and $\rho$, we also require $\lambda |A'_M| \le p_M \le \rho |A'_M|$.
  
Suppose an improvising distribution $D$ exists, and assign $p_M = \Pr[ w \in A'_M \st w \leftarrow D]$.
Obviously $0 \le p_M \le 1$.
Since the sets $A'_M$ are disjoint, for any $i \in [k]$ we have $\sum_{M \ni i} p_M = \sum_{M \ni i} \Pr[ w \in A'_M \st w \leftarrow D] = \Pr[ w \in \cup_{M \ni i} A'_M \st w \leftarrow D] = \Pr[ w \in A_i \st w \leftarrow D] \ge 1 - \epsilon_i$.
Finally, $\lambda |A'_M| = \sum_{w \in A'_M} \lambda \le \sum_{w \in A'_M} D(w) = \Pr[w \in A'_M \st w \leftarrow D] = p_M$, and similarly $\rho |A'_M| \ge p_M$.
So our program $\mathcal{P}$ is feasible.
Conversely, given a solution to $\mathcal{P}$, let $D$ be the distribution which assigns $w \in \improvs$ probability $p_M / |A'_M|$ for the unique $M \subseteq[k]$ such that $w \in A'_M$ (recalling the sets $A'_M$ partition $\improvs$).
Since $p_M \ge 0$ for all $M \subseteq [k]$ and $\sum_{w \in \improvs} D(w) = \sum_{M \subseteq [k]} \sum_{w \in A'_M} p_M / |A'_M| = \sum_{M \subseteq [k]} p_M \le 1$, this is a valid distribution.
Obviously $D$ satisfies the hard constraint requirement, and for every $i \in [k]$ we have $\Pr[ w \in A_i \st w \leftarrow D] = \sum_{w \in A_i} D(w) = \sum_{M \ni i} \sum_{w \in A'_M} p_M / |A'_M| = \sum_{M \ni i} p_M \ge 1 - \epsilon_i$, so $D$ satisfies the soft constraint requirement.
Finally, for the $M \subseteq [k]$ such that $w \in A'_M$ we have $D(w) = p_M / |A'_M| \ge (\lambda |A'_M|) / |A'_M| = \lambda$ and $D(w) = p_M / |A'_M| \le (\rho |A'_M|) / |A'_M| = \rho$, so $D$ satisfies the randomness requirement.
Therefore $\mathcal{C}$ is feasible iff $\mathcal{P}$ is feasible.

 Since linear programming is polynomial-time, we can solve the exponentially-large program $\mathcal{P}$ in exponential time.
 If it is infeasible, then so is $\mathcal{C}$ and we return $\bot$.
 Otherwise, enumerate every $w \in \improvs = \{ w \in L(\hard) \st m \le |w| \le n \}$ and check for each one which of the specifications $\soft_i$ it satisfies, thereby determining the unique $M(w) \subseteq [k]$ such that $w \in A'_{M(w)}$.
 In the process keep track of how many words are in each $A'_M$.
 This can be done in exponential time, since there are exponentially-many words in $\Sigma^{\le n}$ and checking each one against $\hard$ and every $\soft_i$ takes exponential time.
 Generating $w \in \improvs$ with probability $p_{M(w)} / |A'_{M(w)}|$ yields the distribution $D$ above, which is an improvising distribution for $\mathcal{C}$.
\end{proof}

As evidence that the CI problem becomes much harder when there are multiple soft constraints, we consider DFA specifications.
Recall that by Theorem \ref{theorem:dfa-scheme}, there is a polynomial-time improvisation scheme for \cic{\DFA}{\DFA}.
This is unlikely to be the case for the corresponding multi-constraint problems.
\begin{theorem}
\mcic{\DFA}{\DFA} is \sharpP-hard.
\end{theorem}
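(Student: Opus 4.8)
The plan is to mimic the reductions in Theorem~\ref{theorem:nfa-hardness} and Theorem~\ref{theorem:2ucfg-hardness}: give a polynomial-time Cook reduction from $\sharpSAT$ (counting the satisfying assignments of a CNF formula), which is $\sharpP$-complete \cite{valiant-sharpP}, to checking feasibility of $\mcic{\DFA}{\DFA}$ instances. The crucial point is that, unlike a single soft constraint whose admissible set $\valids$ is itself a DFA language and hence countable in polynomial time, $k$ soft constraints let us force an improvising distribution to be supported on the intersection $\bigcap_{i \in [k]} \valids_i$, and the size of an intersection of polynomially many DFA languages is exactly where the counting hardness of SAT can be hidden.

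Concretely, given a CNF $\chi = D_1 \wedge \cdots \wedge D_k$ over $V$ variables, I would set $\Sigma = \{0,1\}$, identify $\{0,1\}^V$ with the truth assignments, take $\hard$ to be the trivial DFA accepting $\Sigma^*$, and set $m = n = V$ so that $\improvs = \Sigma^V$ and $|\improvs| = 2^V$. For each clause $D_i$, let $\soft_i$ be the DFA (of size $O(V)$, tracking the current bit position and whether some literal of $D_i$ has already been satisfied) accepting precisely the $w \in \Sigma^V$ with $w \models D_i$; then $\valids_i = \{ w \in \Sigma^V : w \models D_i \}$, and $\bigcap_i \valids_i$ is the set of satisfying assignments of $\chi$, of size $s := \sharpSAT(\chi)$. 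For a threshold $T \in \{1, \dots, 2^V\}$, consider the instance $\mathcal{C}_T = (\hard, \soft_1, \dots, \soft_k, V, V, 0, \dots, 0, 0, 1/T)$. Since every error probability is $0$, any improvising distribution $D$ must satisfy $\Pr_D[w \in \valids_i] = 1$ for all $i$ and hence be supported on $\bigcap_i \valids_i$; combined with $D(w) \le \rho = 1/T$ this forces $s/T \ge \sum_w D(w) = 1$, so $s \ge T$. Conversely, if $s \ge T$ the uniform distribution on $\bigcap_i \valids_i$ gives every improvisation probability $0$ or $1/s \in [0,1/T] = [\lambda,\rho]$ and meets all the soft constraints (with probability exactly $1$), so it witnesses feasibility. (Alternatively one could read this off the linear-programming characterization from the preceding theorem, but the direct argument is shorter.) Thus $\mathcal{C}_T$ is feasible iff $s \ge T$.

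Finally, feasibility of $\mathcal{C}_T$ is monotone non-increasing in $T$, so the feasible thresholds form a down-set; a binary search over $\{1, \dots, 2^V\}$ — using $O(V)$ feasibility queries, each on an instance of size polynomial in $|\chi|$ since $\rho = 1/T$ has at most $V+1$ bits — recovers the largest feasible $T^\ast$ (taking $T^\ast = 0$ if none exists), and then $\sharpSAT(\chi) = T^\ast$. This is a polynomial-time Cook reduction, establishing $\sharpP$-hardness of $\mcic{\DFA}{\DFA}$. I expect no serious obstacle: the one thing to be careful about is keeping every DFA construction (the clause automata, and any product or complement constructions implicitly invoked) polynomial in size, but this is routine given the constructions already used for Theorem~\ref{theorem:dfa-scheme}. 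The only conceptual content is the observation in the first paragraph about why moving from one soft constraint to several changes the complexity picture.
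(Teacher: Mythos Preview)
Your proposal is correct and follows essentially the same approach as the paper's proof: both reduce from $\sharpSAT$ by encoding each clause of a CNF formula as a separate DFA soft constraint with error probability $0$, so that feasibility of the MCI instance with $\rho = 1/N$ tests whether the number of satisfying assignments is at least $N$, and then recover the exact count by binary search. The construction, parameter choices, and feasibility argument are virtually identical.
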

\begin{proof}
We give a reduction from {\sharpSAT} to \mcic{\DFA}{\DFA} similar to that used by \citeN{sharpNFA} to prove hardness of counting the language of an NFA.
Suppose we are given a formula $F$ in conjunctive normal form, i.e. $F = c_1 \land \dots \land c_k$ where each $c_i$ is a disjunction of variables and their negations.
If $F$ has $n$ variables $V$, putting them in some order we can view assignments to $V$ as words in $\Sigma^n$ where $\Sigma = \{0,1\}$.
Then for each $c_i$ we can build a DFA $D_i$ accepting assignments that satisfy $c_i$.
We have one state for each $v \in V$, and start from the first variable under the order.
If in state $v$ we read a $1$ and $v$ occurs positively in $c_i$, or we read a $0$ and $v$ occurs negatively, then the assignment satisfies $c_i$ and we transition to a chain of states that ensure we accept iff the word has length exactly $n$.
Otherwise $c_i$ is not yet satisfied, so we move to the state for the next variable in the order.
If we are already at the last variable, then $c_i$ is not satisfied by the assignment and we reject.
Clearly, each $D_i$ has size polynomial in $|F|$, and the intersection $\cap_i L(D_i)$ contains precisely the satisfying assignments of $F$.

Now we proceed along similar lines to Theorem \ref{theorem:nfa-hardness}.
For any $N \in \N$, consider the MCI instance $\mathcal{C}_N = (\mathcal{T}, D_1, \dots, D_k, n, n, 0, \dots, 0, 0, 1/N)$ where $\mathcal{T}$ is the trivial DFA accepting all of $\Sigma^*$.
For this instance we have $\improvs = \Sigma^n$ and $\valids_i = L(D_i)$ for every $i \in \{1, \dots, k\}$.
Since $\epsilon_i = 0$ for every $i$, an improvising distribution for $\mathcal{C}_N$ must assign probability zero to all words not in $\cap_i L(D_i)$.
Therefore since no word can be assigned probability greater than $\rho$, if an improvising distribution exists then $|\cap_i L(D_i)| \ge 1/\rho = N$.
Conversely, if $|\cap_i L(D_i)| \ge N$ then a uniform distribution on $\cap_i L(D_i)$ is clearly an improvising distribution for $\mathcal{C}_N$.
So $\mathcal{C}_N$ is feasible iff $|\cap_i L(D_i)| \ge N$.
Since $|\cap_i L(D_i)| \le |\Sigma|^n = 2^n$, by binary search we can determine $|\cap_i L(D_i)|$ with polynomially-many \mcic{\DFA}{\DFA} queries, and thereby count the satisfying assignments of $F$.
\end{proof}

Note that there is a gap between our upper and lower bounds for \mcic{\DFA}{\DFA}: it is $\sharpP$-hard, but our earlier algorithm puts it only in $\EXPTIME$.
A $\P^\sharpP$ algorithm, or more plausibly a $\PSPACE$ algorithm, would be very interesting.
A multi-constraint version of Theorem \ref{theorem:feasibility} giving relatively simple conditions for feasibility would be helpful here.

\section{Conclusion}

In this paper, we introduced control improvisation, the problem of
creating improvisers that randomly generate words subject to hard and soft specifications.
We gave precise conditions for when improvisers
exist, and investigated the complexity of finding improvisers for
several major classes of automata and grammars.
In particular, we showed that the control improvisation problem for DFAs, as well as for DFAs and unambiguous CFGs, can be solved in polynomial time.
For NFAs and general CFGs, on the other hand, polynomial-time improvisation schemes are unlikely to exist, and would imply $\P = \P^\sharpP$.
These results are summarized in Table \ref{table:complexities}.
We also studied the case where the
specifications are presented symbolically instead of as explicit automata, and showed
that the control improvisation problem can still be solved
approximately using SAT solvers.
Finally, we explored a generalized form of the problem allowing multiple soft constraints, and gave evidence suggesting that it is substantially more difficult than the basic problem: even for DFA specifications it is $\sharpP$-hard, and we give only an $\EXPTIME$ algorithm.

\begin{table}[tp]
\tbl{Complexity of the control improvisation problem when $\hard$ and $\soft$ are various different types of specifications. Here $\sharpP$ indicates that feasibility checking is $\sharpP$-hard, and that there is a polynomial-time improvisation scheme relative to a $\sharpP$ oracle.\label{table:complexities}}{
\setlength{\tabcolsep}{5pt}
\renewcommand{\arraystretch}{1.2}
\begin{tabular}{cr||c|c|c|c|}
 & $\soft$ & \textbf{DFA} & \multicolumn{2}{c|}{\textbf{CFG}} & \textbf{NFA} \\
$\hard$ &  &  & unamb. & amb. &  \\
\hline
\hline
\textbf{DFA} &  & poly-time & poly-time & \multirow{3}{*}{} & \\
\cline{1-4}
\multirow{2}{*}{\textbf{CFG}} & unambiguous & poly-time & $\sharpP$ & &  \\
\cline{2-4}
   & ambiguous & \multicolumn{3}{r|}{\sharpP} &  \\
\cline{1-5}
\textbf{NFA} &  & \multicolumn{4}{r|}{\sharpP}  \\
\hline
\end{tabular}
}
\end{table}

One interesting direction for future
work would be to find other tractable cases of the control
improvisation problem deriving from finer structural properties of the
automata than just determinism.
There are also several clear directions for extending the problem definition other than allowing multiple soft constraints.
For example, it would be useful in robotics applications for improvisations to be infinite words, with specifications given in linear temporal logic.
We are also studying \emph{online} or \emph{reactive} extensions where improvisations are generated incrementally in response to a possibly adversarial environment: this is useful in musical applications as well as in robotics.
A third type of extension would allow soft specifications to be different kinds of quantitative constraints.
In a robotics application, for example, we might want to bound the average length of generated trajectories or the density with which they cover some region.
Finally, we are also investigating
further applications of control improvisation, particularly in the areas of testing, security,
and privacy.

\appendix
\section*{APPENDIX}
\setcounter{section}{1}

For completeness, we prove results on counting and sampling from the language of a DFA in the form we need.
The technique is classical, essentially that of \citeN{hickey-cohen}.

\renewcommand\thesection{4}
\setcounter{theorem}{1}

\begin{lemma} 
If $\mathcal{D}$ is a DFA, $|L(\mathcal{D})|$ can be computed in polynomial time.
\end{lemma}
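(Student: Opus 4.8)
The plan is to reduce counting $L(\mathcal{D})$ to counting paths in a directed acyclic graph, using determinism in an essential way.

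First I would \emph{trim} $\mathcal{D} = (Q, \Sigma, \delta, q_0, F)$: by a linear-time forward search from $q_0$, discard every state not reachable from $q_0$; then, by a linear-time backward search from $F$, discard every state from which no accepting state is reachable. The resulting (partial) DFA $\mathcal{D}'$ satisfies $L(\mathcal{D}') = L(\mathcal{D})$, and now every state of $\mathcal{D}'$ lies on some accepting run. Next I would observe that $L(\mathcal{D})$ is infinite if and only if the transition graph of $\mathcal{D}'$ contains a directed cycle: a cycle yields accepting runs of unbounded length (since every state of $\mathcal{D}'$ is on an accepting run), and conversely, if the graph is acyclic, every accepting run, and hence by determinism every accepted word, has length less than $|Q|$. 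Cycle detection is again linear time; if a cycle is present we report $|L(\mathcal{D})| = \infty$.

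When $\mathcal{D}'$ is acyclic, I would topologically sort its states and compute, in reverse topological order, the quantity $N(q) = |\{ w \in \Sigma^* : \hat{\delta}(q, w) \in F \}|$ — the number of words accepted starting from $q$ — via the recurrence $N(q) = [q \in F] + \sum_{a \in \Sigma,\; \delta(q,a) \text{ defined}} N(\delta(q,a))$, where $[q \in F]$ is $1$ if $q \in F$ and $0$ otherwise. The key point, and where determinism is essential, is that distinct words read from $q$ induce distinct runs, so this recurrence genuinely counts \emph{words} and not merely runs; the sum over successors is exactly the partition of the accepted words by their first symbol, with the $[q \in F]$ term accounting for the empty word. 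Then $|L(\mathcal{D})| = N(q_0)$.

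For the complexity bound, trimming, cycle detection, and topological sorting are all linear in $|\mathcal{D}|$, and the dynamic program inspects each transition once. The only real subtlety is the magnitude of the integers involved: since $\mathcal{D}'$ is acyclic on at most $|Q|$ states, every accepted word has length less than $|Q|$, so $N(q) \le |\Sigma|^{|Q|}$, which has $O(|Q| \log |\Sigma|)$ bits; hence each addition, and the whole computation, runs in time polynomial in $|\mathcal{D}|$. I expect the main thing to get right is simply the bookkeeping around partial transition functions and the clean separation into the "finite-or-infinite" phase and the "count" phase; once $\mathcal{D}$ has been trimmed, both phases become routine graph computations.
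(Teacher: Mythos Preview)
Your proof is correct and follows essentially the same approach as the paper: trim, detect cycles to handle the infinite case, then count via dynamic programming in reverse topological order on the resulting DAG. The only cosmetic difference is that the paper adds a fresh sink vertex below each accepting state so that counting accepted words becomes literally counting source-to-sink paths, whereas you handle accepting states directly with the $[q \in F]$ indicator in the recurrence; your explicit treatment of bit complexity is a nice addition the paper omits.
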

\begin{proof}
First we prune irrelevant states unreachable from the initial state or from which no accepting state can be reached (this pruning can clearly be done in polynomial time). If the resulting graph contains a cycle (also detectable in polynomial time), we return $\infty$. Otherwise $\mathcal{D}$ is a DAG with multiple edges, and every sink is an accepting state. For each accepting state $s$ we add a new vertex and an edge to it from $s$. Then there is a one-to-one correspondence between accepting words of $\mathcal{D}$ and paths from the initial state to a sink. Now we can compute for each vertex $v$ the number of paths $p_v$ from it to a sink using the usual linear-time DAG algorithm (traversal in reverse topological order) modified slightly to handle multiple edges. We return $p_v$ with $v$ the initial state.
\end{proof}

\begin{lemma} 
There is a probabilistic algorithm that given a DFA $\mathcal{D}$ with finite language, returns a uniform random sample from $L(\mathcal{D})$ in polynomial expected time.
\end{lemma}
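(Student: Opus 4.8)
The plan is to reuse the preprocessing from the proof of Lemma~\ref{lemma:dfa-counting} and then draw a uniformly random accepting path by a weighted random walk. First I would prune $\mathcal{D}$ exactly as there, deleting states unreachable from the initial state and states from which no accepting state is reachable; since $L(\mathcal{D})$ is finite, the pruned automaton cannot contain a cycle (a cycle on a path from the initial state to an accepting state would yield infinitely many accepting words), so it is a DAG. As in Lemma~\ref{lemma:dfa-counting}, for each accepting state $s$ add a fresh sink vertex together with an unlabeled edge $s \to s_{\mathrm{new}}$, so that accepting words of $\mathcal{D}$ are in bijection with paths from the initial state $v_0$ to a sink. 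Then compute, in reverse topological order, the integer $p_v$ equal to the number of such paths starting at $v$; each $p_v$ has bitwidth polynomial in $|\mathcal{D}|$, so this is polynomial time. If $p_{v_0} = 0$ then $L(\mathcal{D}) = \emptyset$ and there is nothing to sample; otherwise proceed.

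To produce a sample, perform a random walk starting at $v_0$: at a non-sink vertex $v$ with outgoing edges $e_1, \dots, e_d$ leading to $u_1, \dots, u_d$, choose $e_i$ with probability $p_{u_i}/p_v$ (well-defined since $p_v = \sum_i p_{u_i}$), emit the label of $e_i$ if it has one, and move to $u_i$; stop and output the emitted string upon reaching a sink. A straightforward induction on path length shows that the walk follows any given initial-state-to-sink path with probability equal to a telescoping product that collapses to $1/p_{v_0}$ (using that a sink $w$ has $p_w = 1$). Hence each of the $p_{v_0} = |L(\mathcal{D})|$ accepting words is output with probability exactly $1/|L(\mathcal{D})|$, which is the desired uniform distribution.

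For the running time, each step of the walk moves to a vertex later in a fixed topological order, so the walk terminates after at most $|\mathcal{D}|$ steps. Each step requires sampling from a distribution on at most $d$ outcomes whose weights $p_{u_i}$ are large integers summing to $p_v$; I would do this by forming the prefix sums of the weights, drawing a uniformly random integer $r \in \{0, \dots, p_v - 1\}$, and returning the outcome whose prefix-sum interval contains $r$ (located by binary search). Drawing $r$ uniformly from a range that need not be a power of two is handled by the usual rejection method — generate $\lceil \log_2 p_v \rceil$ fair coin flips and retry if the resulting integer is $\ge p_v$ — which succeeds with probability at least $1/2$ per attempt, so takes expected $O(1)$ attempts, each using polynomially many bits and a polynomial-time integer comparison. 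Summing over the at most $|\mathcal{D}|$ steps gives expected polynomial time overall.

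The only real subtlety, and the reason the bound is on \emph{expected} rather than worst-case time, is this last point: the path counts can be exponentially large, so selecting an edge with exactly the required probability using only fair coin flips forces a rejection step. Everything else — the acyclicity guaranteed by finiteness of the language, the telescoping correctness computation, and the polynomial bitwidths of the counts — is routine, so I expect the careful accounting of exact sampling from big-integer weights to be the main point to get right.
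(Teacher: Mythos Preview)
Your proposal is correct and follows essentially the same approach as the paper: prune to a DAG, compute the path counts $p_v$, and take a weighted random walk with edge probabilities $p_{u_i}/p_v$, proving uniformity by a telescoping argument. If anything, you are more explicit than the paper about how to simulate the rational transition probabilities with fair coins via rejection sampling, which is exactly the detail that forces the bound to be on expected rather than worst-case time.
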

\begin{proof}
Prune $\mathcal{D}$ and compute the path counts $p_v$ as in Lemma \ref{lemma:dfa-counting} (since $\mathcal{D}$ has finite language, after pruning it is a DAG, as above). To every edge $(u,v)$ in $\mathcal{D}$ assign the weight $p_v / p_u$. It is clear that at every vertex the sum of the weights of the outgoing edges is 1 (unless the vertex is a sink). We prove by induction along reverse topological order that treating these weights as transition probabilities, starting from any state $u$ and talking a random walk until a sink is reached we obtain a uniform distribution over all paths from $u$ to a sink. If $u$ is a sink this holds trivially. If $u$ has a nonempty set of children $S$, then by the inductive hypothesis for every $v \in S$ starting a walk at $v$ gives a uniform distribution over the $p_v$ paths from $v$ to a sink. Therefore the probability of following any such path starting at $u$ is $(p_v/p_u) \cdot (1/p_v) = 1/p_u$. So the result holds by induction. In particular, if we start from the initial state we obtain a uniform distribution over all paths to a sink, and thus a uniform distribution over $L(\mathcal{D})$. Since all probabilities are rational with denominators bounded by ${|\Sigma|}^{|\mathcal{D}|}$, this walk can be performed by a probabilistic algorithm $S$ of size polynomial in $|\mathcal{D}|$, with expected time bounded by a fixed polynomial in $|\mathcal{D}|$. Then $S$ returns a uniform sample from $L(\mathcal{D})$, and it can be constructed in time polynomial in $|\mathcal{D}|$.
\end{proof}

\begin{acks}
The authors dedicate this paper to the memory of David Wessel, who was instrumental in the first work on control improvisation but passed away while the earlier version of this paper (\citeN{fsttcs-version}) was being written.
We would also like to thank Ben Caulfield, Orna Kupferman, and Markus Rabe for their helpful comments.
\end{acks}

\bibliographystyle{ACM-Reference-Format-Journals}
\bibliography{main.bib}

\received{Month Year}{Month Year}{Month Year}

\end{document}